\newcommand{\stkout}[1]{\ifmmode\text{\sout{\ensuremath{#1}}}\else\sout{#1}\fi}
\newlength{\defbaselineskip}
\newcommand{\setlinespacing}[1]%
           {\setlength{\baselineskip}{#1 \defbaselineskip}}
\theoremstyle{plain}
\newtheorem{theorem}{Theorem}[section]
\newtheorem{lemma}[theorem]{Lemma}
\newtheorem{proposition}[theorem]{Proposition}
\theoremstyle{definition}
\newtheorem*{assumption}{Standing Assumption}
\theoremstyle{remark}
\newtheorem{remark}[theorem]{Remark}
\numberwithin{equation}{section}
\begin{document}
\title{Long Time Behavior of Optimal Liquidation Problems\footnote{We thank Prof. Ulrich Horst and Prof. Xun Li for valuable comments and helpful discussion.}}

%    Information for first author
\author{  Xinman Cheng\footnote{The Hong Kong Polytechnic University, Department of Applied Mathematics, Kowloon, Hong Kong; email: xinman.cheng@connect.polyu.hk. X. Cheng would like to thank the Phd scholarship from The Hong Kong Polytechnic University.}\qquad Guanxing Fu\footnote{The Hong Kong Polytechnic University, Department of Applied Mathematics and Research Centre for Quantitative Finance, Kowloon, Hong Kong; email: guanxfu@polyu.edu.hk. G. Fu would like to thank financial supports through Hong Kong RGC (ECS) Grant No.25215122, NSFC Grant No. 12101523, and Research Centre for Quantitative Finance P0042708.} \qquad Xiaonyu Xia\footnote{Wenzhou University, College of Mathematics and Physics, Wenzhou 325035, PR China; email: xiaonyu.xia@wzu.edu.cn. X. Xia would like to thank financial support through NSFC Grant No. 12101465.}   }
%          ~ Paulwin Graewe\footnote{Department of Mathematics, Humboldt-Universit\"at zu Berlin,
%         Unter den Linden 6, 10099 Berlin, Germany; email: fuguanxi@math.hu-berlin.de} ~ Alexandre Popier\footnote{ Département de Mathématiques,
%     Laboratoire Manceau de Mathématiques, Université du Maine, Avenue Olivier Messiaen, 72085 Le Mans Cedex 9, France; Alexandre.Popier@univ-lemans.fr  }

\maketitle

\begin{abstract}
	In this paper, we study the long time behavior of an optimal liquidation problem with semimartingale strategies and external flows. To investigate the limit rigorously, we study the convergence of three BSDEs characterizing the value function and the optimal strategy, from finite horizon to infinite horizon.
We find that in the long time limit the player may not necessarily liquidate her assets at all due to the existence of external flows, even if in any given finite time horizon, the player is forced to liquidate all assets. Moreover, when the intensity of the external flow is damped, the player will liquidate her assets in the long run.
\end{abstract}

{\bf AMS Subject Classification:} 93E20, 91B70, 60H30

{\bf Keywords:}{ Optimal liquidation, semimartingale strategies, infinite horizon, BSDE     }

\section{Introduction and overview}\label{sec:intro}

	Let $W$ be a Brownian motion on a probability space $(\Omega,\mathbb{P})$ and $\mathbb{F}=
(\mathcal{F}_t)_{t\in [0,\infty)}$ be the augmented Brownian filtration. In this article, we study the long time behavior of the following stochastic control problem  
\begin{equation}\label{cost-T-continuous}
	J^{(T)}(X)=\mathbb{E}\left[\int_{0}^{T} e^{-\phi s}\left(-Y_{s-}\,dX_s + \frac{\gamma}{2}\,d[X]_s - \sigma_s \,d[X,W]_s+\lambda_s X_s^2\,ds \right) \right]\rightarrow \min \quad \textrm{over }X,
\end{equation}
subject to the state dynamics
\begin{equation}\label{state-T-continuous}
	\left\{ \begin{split}
		%	dX_s&=-dZ_s,\\
		dY_s&=-\rho_s Y_s\,ds-\gamma\,dX_s+\sigma_s\,dW_s,\\
		X_{0-}&=x_0,\quad X_T=0,\quad \,Y_{0-}=0.
	\end{split} \right.
\end{equation}		
	The stochastic control problem \eqref{cost-T-continuous}-\eqref{state-T-continuous} arises in optimal liquidation problems with semimartingale strategies. Here, $X_s$ is the position of the investor at time $s$ and $Y$ is the price deviation process that is widened by the investor's own trading $X$ (internal flow) and the external flow $\sigma\,dW$. The last term in \eqref{cost-T-continuous} is the risk aversion penalizing slow trading. The first three terms in \eqref{cost-T-continuous} are trading cost arising in the internal flow and the external flow. Here, $[\cdots]$ is the quadratic variation of $\cdots$. The model \eqref{cost-T-continuous}-\eqref{state-T-continuous} covers classic liquidation models in literature. When $\sigma=\phi=0$ and all coefficients are constant, \eqref{cost-T-continuous}-\eqref{state-T-continuous} reduces to \cite{OW-2013} if $X$ is a strategy of finite variation (singular control). When $\sigma=\phi=0$, \eqref{cost-T-continuous}-\eqref{state-T-continuous} is studied in \cite{GH-2017} with absolutely continuous strategies but random coefficients. In this paper, we assume $X$ to be a c\`adl\`ag semimartingale. 
	
	There are at least two motivating reasons for the appearance of semimartingale strategies. First, empirical studies illustrate the existence of quadratic variation of inventories. For example, Carmona and Webster \cite{carmona_webster} (see also Carmona and Leal \cite{Carmona-2023}) performed statistical test with the null hypothesis that the quadratic variation of inventories for high frequency traders is zero; the $p$-value is $10^{-5}$. 
	Thus, classical regular or singular controls are not appropriate in modeling. Second, optimal liquidation with semimartingale strategies appear naturally as the limit of the one with regular controls, when the temporary impact factor vanishes; see Horst and Kivman \cite{HKiv} for a rigorous proof of this convergence result.
	
	Portfolio liquidation problems with semimartingale strategies were first studied by Lorenz and Schied \cite{lorenz_schied}, where it is necessary to incorporate semimartingale strategies into the liquidation model when the benchmark price admits a drift. In \cite{GP}, G\^arleanu and Pedersen studied an infinite horizon stochastic control problem with semimartingale strategies arising in a portfolio choice problem; in \cite{GP}, semimartingale strategies appear naturally as the limit of discrete time model with only transient impact.
	For other portfolio liquidation problems using semimartingale strategies, refer to \cite{Ackermann2021,Ackermann2022,BBF-2018,FHX-2023,MuhleKarbe}. Among them, Becherer et al \cite{BBF-2018} studied an infinite horizon stochastic control problem with semimartingale strategies arising in a liquidation problem with multiplicative impact. In \cite{Ackermann2021}, Ackermann et al studied a finite horizon version of \eqref{cost-T-continuous} for a risk-neutral investor without external flow. In a following up result, the same authors (Ackermann, Kruse and Urusov) studied a generalized liquidation problem in \cite{Ackermann2021} with risk aversion and external flow; they extended the liquidation problem with semimartingale strategies to the one with progressively measurable strategies, which turned out to be equivalent to a linear quadratic control problem. Recently, Fu et al \cite{FHX-2023} introduce semimartingale strategies to the model in \cite{FHX1} with feedback order flow and solved a mean-field control problem. Mulhe-Karbe et al \cite{MuhleKarbe} investigated a nonlinear price impact model in discrete time and the scaling limit turned out to be a model with linear price impact, {and the optimal semimartingale strategy for the relevant optimal execution problem was obtained.}

Except for \cite{BBF-2018,GP}, all papers mentioned above studied finite horizon stochastic control problem. Moreover, \cite{BBF-2018,GP} studied models with constant coefficients.
	The goal of our paper is to study the long time behavior of the stochastic control problem \eqref{cost-T-continuous}-\eqref{state-T-continuous} with possibly random coefficients. Our contribution is two-fold. 
First, we prove rigorously the convergence from a finite horizon {\it constrained} stochastic control problem with random coefficients to an infinite horizon {\it unconstrained} stochastic control problem. The convergence of the control problems requires results of BSDE convergence as the horizon goes to infinity. {The convergence result of BSDEs is not new in literature, see e.g. \cite{Hu-Schweizer-2011,Hu-2022,peng2000infinite}. In constrast, our convergence result requires smaller solution space and finer estimate because of the structure of our BSDEs.} Moreover, our assumption on the risk aversion parameter $\lambda$ is mild enough such that our model allows both risk averse and risk neutral investors.
	As a byproduct of our convergence result, we solve the following infinite horizon stochastic control problem with c\`adl\`ag semimartingale strategies and possibly random coefficients
	\begin{equation}\label{cost-inf-continuous}
		J(X)=\mathbb{E}\left[\int_{0}^{\infty} e^{-\phi s}\left(-Y_{s-}\,dX_s + \frac{\gamma}{2}\,d[X]_s -\sigma_s \,d[X,W]_s+\lambda_s X_s^2\,ds \right) \right]\rightarrow \min \quad \textrm{over }X,
	\end{equation}
	subject to the state dynamics
	\begin{equation}\label{state-inf-continuous}
		\left\{ \begin{split}
			%	dX_s&=-dZ_s,\\
			dY_s&=-\rho_s Y_s\,dt-\gamma\,dX_s+\sigma_s\,dW_s,\\
			X_{0-}&=x_0,\quad Y_{0-}=0.
		\end{split} \right.
	\end{equation}	
Second, in the constant settings, we identify three properties of the optimal solution. The first property is that
 in the long run the investor does not necessarily liquidate her assets but instead her position exhibits fluctuations around zero\footnote{Note that in the random settings, the fluctuations can only be more significant.}. Such a phenomenon does not appear in literature on long time liquidation problems. For example, in \cite{BBF-2018} the investor would always liquidate all assets within a finite time period almost surely. The reason for the unusual behavior of the investor in our model is the existence of external flow $\sigma\,dW$; our investor needs to continuously trade to hedge the risk arising in the external flow who arrives in the investor's account continuously. An evidence for this interpretation is that the investor will liquidate all assets in the long run as long as the intensity of the external flow is weak enough. This is rigorously confirmed by the second property of the optimal solution, where we verify that once the position reaches zero, the trading sign of the investor is fully determined by that of the external flow; the investor will take advantage of the price movement by selling (buying) when others buy (sell). 
 %
% Note that the fluctuation is not because of market irregularity but a response of the investor to the continuous external flow.	
%
{Our study falls into the area of stochastic linear quadratic control problems in infinite horizon. In addition to \cite{HLY-2015} and \cite{SY-2018} for the study of optimal control with and without mean-field terms, we especially mention the result by Sun, Wang and Yong \cite{SWJ-2022}, where the turnpike property of the optimal state and control was established, namely, the expected state and strategy exhibit an exponetial decay to the respective stable points, as the horizon goes to infinity. In the third property, we verify that the optimal state and strategy in our problem also exhibit an exponential turnpike property. 
We distinguish ourselves from \cite{HLY-2015,SWJ-2022,SY-2018} by considering semimartingale strategies.}

	The rest of the paper is organized as follows. After introducing notation of spaces and standing assumption, we state the wellposedness result of the stochastic control problem with a finite horizon \eqref{cost-T-continuous}-\eqref{state-T-continuous} in Section \ref{sec:finite}. The convergence results are stated in Section \ref{sec:convergence}, including the convergence of BSDEs and the convergence of control problems. In Section \ref{sec:property} we identify three properties of the optimal solution, that is, the trading behavior in a long run, trading sign of the investor and the turnpike property.

	\textbf{Notation.}
	Let $\mathcal P(\mathcal S)$ be the space of all $\mathbb F\times[0,\infty)$-progressively measurable $\mathcal S$-valued stochastic processes.  
	
	For each $p>1$ and $K\in\mathbb R$, define
	\[
	\mathcal M^{p,K}(0,\infty;\mathcal S) = \left\{   v\in\mathcal P(\mathcal S):       \mathbb E\left[\left(\int_0^\infty e^{-2Kt} |v_t|^2\, dt\right)^{p/2}\right]<\infty 				\right\}. 
	\]
%	In particular, if $K=0$, we simply denote $\mathcal M^{p,0}(0,\infty;\mathcal S)$ by  $\mathcal M^{p}(0,\infty;\mathcal S)$.
	For each $p>1$ and $K\in\mathbb R$, define 
	\[
	S^{p,K}(0,\infty;\mathcal S)=\left\{    v\in\mathcal P(\mathcal S): 	\mathbb E\left[\sup_{t\geq 0}e^{-pKt}|v_t|^p\right]<\infty	   \right\}
	\]
	For each $p>1$ and $K\in\mathbb R$, define 
	\[
	L^{p,K}(0,\infty;\mathcal S)=\left\{  v\in\mathcal P(\mathcal S):      \mathbb E\left[\left(\int_0^\infty e^{-Kt} |v_t|\, dt\right)^p\right]<\infty     \right\}.
	\]
	In particular, if $\mathcal S=\mathbb R$, we will omit the dependence on $\mathcal S$ in above notation. 
	Note that the constant $K$ in the above definitions can be either positive or negative.
	
 In estimate, $c$ is a positive constant that may vary from line to line and that is independent of any horizon $T$.

	We assume throughout that the following {\bf Standing Assumption} holds.	
	\begin{assumption}\label{ass} 
		The price impact parameter $\gamma$ and the discount factor $\phi$ are positive constants\footnote{The constant assumption for $\phi$ is common in literature. The constant assumption for $\gamma$ is for simplicity. A general $\gamma$ would lead to quadratic BSDEs that complicate the analysis; see e.g. \cite{Ackermann2021}.}. 
		The volatility process $\sigma$ and resilience parameter $\rho$ are bounded progressively measurable processes. The risk aversion parameter $\lambda$ is a nonnegative progressively measurable stochastic process such that
		\begin{equation}\label{ass:lambda}
		\mathbb E\left[   \left( \int_0^\infty e^{ -\frac{1}{2}L s } \lambda_s \,ds\right)^2       \right]<\infty,
		\end{equation}
		for some $0<L<\phi.$ 
	\end{assumption}
For the $L$ in \eqref{ass:lambda}, the space of admissible strategies for both \eqref{cost-T-continuous}-\eqref{state-T-continuous} and \eqref{cost-inf-continuous}-\eqref{state-inf-continuous} is defined as 
	\begin{equation}\label{admissibility}
		\begin{split}
			\mathscr A:=&~\Bigg\{ X:  X \textrm{ is a c\`adl\`ag semimartingale such that }  \\
			&~\left.\qquad~  \mathbb E\left[ \sup_{0\leq s<\infty} e^{Ls}\widetilde X^4_s  \right] <\infty,\quad \mathbb E\left[  \left( \int_0^\infty e^{\frac{1}{2}Ls}\widetilde X^2_s\,ds    \right)^2    \right]<\infty   \right\},
		\end{split}
	\end{equation}
	%	\begin{equation}\label{admissibility}
		%	\begin{split}
			%		\mathscr A:=&~\left\{ Z:  \mathbb E\left[ \sup_{0\leq s<\infty} e^{Ks}\left(  \int_0^s e^{ -\frac{1}{2}\phi r }\,dZ_r		\right)^4\right] <\infty,\quad \mathbb E\left[  \left( \int_0^\infty e^{Ks}\widetilde Z^2_s\,ds    \right)^2    \right]<\infty,\right.\\
			%		&~\left. \qquad ~ \mathbb E\left[ \int_0^\infty d[\widetilde Z]_s   \right]<\infty,\quad \mathbb E\left[ \left(  \int_0^\infty e^{Ks-\frac{\phi}{2}s}\,d[\widetilde Z,W  ]_s  \right)^2\right]<\infty \right\},
			%	\end{split}
		%\end{equation}
		where $\widetilde X_s= e^{-\frac{\phi}{2}s}	X_s$.
	\begin{remark}
%\begin{itemize}
    To unify the notation, we consider \eqref{admissibility} as the space of admissible strategies for control problems with both finite and infinite horizons. Regarding the control problem on $[0,T]$, we consider its trivial extension by setting $\widetilde X_t=0$ for $t>T$.
%\item[(ii)] The assumption on $\lambda$ includes two special cases in literature. The first case is the one in Ankirchner and Kruse \cite{Ankirchner-2013} where $\lambda_t=\max\{0, c(a-S_t)\}$ with $a>0$ being a reference level, $c>0$ being the price sensitivity and $S$ being the benchmark price. The second case is $\lambda_t=\bar\lambda\bar\sigma^2_t$, where $\bar\sigma$ is the price volatility, including the Heston model $d\bar\sigma_t = \theta(w-\bar\sigma_t)\,dt+\xi\sqrt{ \bar\sigma_t }\,d\mathcal W_t$; here, $\mathcal W$ is a Brownian motion correlated with $W$.
%\end{itemize}
\end{remark}

%	\textcolor{red}{ 
%		\textrm{...A remark on the assumption on }$\gamma$\textrm{ is needed here or later... Is the relationship of }$Z^A$, $Z^B$ \textrm{ rely on that }$\gamma$ \textrm{ is a constant ?}
%	}

%%%%%%%%%%%%%%%%%%%%%%%%%%%%%%%%%%%
%%%%%%%%%%%%%%%%%%%%%%%%%%%%%%%%%%%
%%%%%%%%%%%%%%%%%%%%%%%%%%%%%%%%%%%
\section{Stochastic control problem with finite horizon}\label{sec:finite}
In order to study the long time limit of \eqref{cost-T-continuous}-\eqref{state-T-continuous}, in this section, we will solve the finite horizon control problem \eqref{cost-T-continuous}-\eqref{state-T-continuous} rigorously. In particular, we express the value function and the optimal state in terms of solutions to BSDEs on $[0,T]$. In the next section, we will study the long time limit of the value function and the optimal state by studying the convergence of the BSDEs as $T\rightarrow\infty$.

For each $t\in[0,T]$, we rewrite \eqref{cost-T-continuous}-\eqref{state-T-continuous} in the dynamic form 
\begin{equation}\label{cost-t-T}
	J^{(T)}(t, X)=\mathbb{E}\left[ \left.\int_{t}^{T} e^{-\phi s}\left(-Y_{s-}\,dX_s + \frac{\gamma}{2}\,d[X]_s -\sigma_s \,d[X,W]_s+\lambda_s X_s^2\,ds \right)\right|\mathcal F_t \right]\rightarrow \min\textrm{ over }\mathscr A,
\end{equation}
subject to the state dynamics
\begin{equation}\label{state-t-T}
	\left\{\begin{aligned}
		&~dY_s=-\rho Y_s\,ds-\gamma \,dX_s+\sigma_s\,dW_s,\quad s\in[t,T),\\
		&~X_{t-}=x,\quad X_T = 0, \quad Y_{t-}=y. 
	\end{aligned}\right.
\end{equation}
%where the set of admissible trading strategies is given by
%\begin{equation}
%\begin{split}
%	\mathscr A^{(T)}_t & := \left\{Z: Z \textrm{ is an } \mathbb{F} \textrm{ semimartingale with }\mathbb E\left[\int_t^T e^{-\phi s}d [Z]_s\Big|\mathcal F_t \right]<\infty.\right\}.
%\end{split}
%\end{equation}
For each $s\in[t,T]$, define 
\[ 
\widetilde{X}_s=e^{-\frac{\phi s}{2}}X_s,\qquad  \widetilde{Y}_s=e^{-\frac{\phi s}{2}}Y_s.
\] 
We can rewrite \eqref{cost-t-T} and \eqref{state-t-T} as follows:
\begin{equation}\label{cost-t-T-v2}
	\begin{split}
		J^{(T)}(t,\widetilde{X}) = \mathbb E\left[  \left. \int_t^T   \widetilde Y_{s-}\left(-\,d\widetilde X_s-\frac{\phi}{2}\widetilde{X}_s\,ds\right) + \frac{\gamma}{2}d[\widetilde X]_s - \sigma_s e^{-\frac{\phi}{2}s}\,d[\widetilde X,W]_s +\lambda_s \widetilde X^2_s    \,ds      \right|\mathcal F_t            \right],
	\end{split}
\end{equation}
respectively, 
\begin{equation}\label{state-t-T-v2}
	\left\{\begin{aligned}
		%		&~d\widetilde{X}_s=d\widetilde{X}_s,\\
		&~d \widetilde{Y}_s=\left(-\frac{\phi}{2}\widetilde{Y}_s-\rho \widetilde{Y}_s-\frac{\phi\gamma}{2}\widetilde{X}_s\right)\,ds-\gamma\,d \widetilde{X}_t+\sigma_s e^{-\frac{\phi s}{2}}\,dW_s,\quad s\in[t,T),\\
		&~\widetilde X_{t-}=e^{-\frac{\phi t}{2}}x,\quad \widetilde X_T = 0,\quad \widetilde Y_{t-}=e^{ - \frac{\phi t}{2}  }y. 
	\end{aligned}\right.
\end{equation}
For each $s\in[t,T]$, let $\mathcal X_s=(\widetilde{X}_s,\,\widetilde{Y}_s)^\top$. We can further rewrite \eqref{cost-t-T-v2}-\eqref{state-t-T-v2} in the following matrix form  
\begin{equation}\label{v2}
	J^{(T)}(t, \widetilde X)=\mathbb{E}\left[\left.\int_{t}^{T} \left(\mathcal X_{s}^\top\mathcal L d\widetilde{X}_s+ \mathcal R\,d[\widetilde{X}]_s +	\mathcal L^\top\mathcal{D}_{s} \,d[\widetilde{X},W]_s+\mathcal X_{s}^\top \mathcal Q_s\mathcal X_{s}\,ds \right)\right|\mathcal F_t \right],
\end{equation}
respectively,
\begin{equation}\label{v3}
	d \mathcal X_{s} =\mathcal{H}_s\mathcal X_s d s+\mathcal{D}_{s} d W_{s}+\mathcal{K} d \widetilde X_{s}, \quad s \in [t, T), 
\end{equation}
where
\begin{equation*}
	\begin{split}
		\mathcal L=&~\left(\begin{array}{lll}
			0 & -1 
		\end{array}\right)^{\top},\qquad  \mathcal R=\frac{\gamma}{2},\qquad 	\mathcal{D}_{s}=\left(\begin{array}{lll}
			0 & \sigma_{s}e^{-\frac{\phi s}{2}} 
		\end{array}\right)^{\top},\\ 
		\mathcal Q_s=&~\left(\begin{array}{ccc}
			\lambda_s & -\frac{\phi}{4}  \\
			-\frac{\phi}{4}  & 0
		\end{array}\right),\qquad  
		\mathcal{H}_s = \left(\begin{array}{ccc}
			0 & 0  \\
			-\frac{\phi}{2}\gamma & -\rho_s -\frac{\phi}{2}
		\end{array}\right), 
		\qquad 
		%	{ \bf I}_{2}=\left(\begin{array}{lll}
			%		1 & 0 \\
			%		0 & 1
			%	\end{array}\right) ,
		%	\quad 
		\mathcal{K}  =\left(\begin{array}{lll}
			1 & -\gamma 
		\end{array}\right)^{\top} .
	\end{split}
\end{equation*}
Given $\mathcal X:=\mathcal X_{t-}=(\widetilde{X}_{t-},\,\widetilde{Y}_{t-})$, the value function starting from $t$ is defined as 
\begin{equation*}\label{def:value-T}
	V^{(T)}(t, \mathcal X)=\inf_{\widetilde X \in  \mathscr A}J^{(T)}(t,\widetilde X).
\end{equation*}
Our goal is to represent the value function in terms of three stochastic processes $A^{(T)}, B^{(T)},C^{(T)}$ as
\begin{equation*}
	V^{(T)}(t,\mathcal X)=\mathcal X^\top A^{(T)}_t\mathcal X +\mathcal X^\top B^{(T)}_t+C^{(T)}_t.
\end{equation*}	
The dynamics of the processes $A^{(T)}, B^{(T)}, C^{(T)}$ is derived heuristically in Appendix \ref{sec:heuristics} by first analyzing a discrete time model and then taking the heuristic limit as the time difference between two consecutive trading periods tends to zero. It turns out that: 
\begin{itemize}
	\item The matrix-valued stochastic process $A^{(T)}=\begin{pmatrix}      A_{11}^{(T)}   &   A_{12}^{(T)} \\
		A_{21}^{(T)}   &   A_{22}^{(T)} 	      \end{pmatrix}$ is symmetric and satisfies $A^{(T)}_{11}=\gamma A^{(T)}_{21}$, $A^{(T)}_{12}=\gamma A^{(T)}_{22}+\frac{1}{2}$, and  $\overline A^{(T)} := A^{(T)}_{11}$ satisfies the following BSDE on $[0,T]$
	\begin{equation}\label{A-T}
		\left\{\begin{aligned}
			-d\overline{A}^{(T)}_{s}=&~-\left(\phi \overline A^{(T)}_{s}-\lambda_s+\frac{(\rho_s \overline A^{(T)}_{s}+\lambda_s)^2}{a_s}\right)\,ds-Z^{\overline A,(T)}_{s}\,dW_s,\\
			\overline A^{(T)}_{T}=&~ \frac{\gamma}{2}, 
		\end{aligned}\right. 
	\end{equation}
	where $a:=\gamma\rho+\lambda+\frac{\phi}{2}\gamma$, and the superscript $(T)$ emphasizes the dependence on $T$. 
	\item 	The vector-valued stochastic process $B^{(T)}=(B^{(T)}_1, B^{(T)}_2)^\top$ satisfies $B^{(T)}_1=\gamma B^{(T)}_2$, where for each $T$, $\overline B^{(T)}:=B^{(T)}_1$ satisfies the following BSDE on $[0,T]$
	\begin{equation}\label{B-T}
		\left\{\begin{aligned}
			-d\overline B^{(T)}_{s}=&~-\left\{\frac{\phi}{2}\overline B^{(T)}_{s}+\frac{(\lambda_s+\rho_s\overline A^{(T)}_{s})\rho_s \overline B^{(T)}_{s}}{a_s}-\frac{2}{\gamma}\sigma_se^{-\frac{\phi s}{2}}Z^{\overline A,(T)}_{s}\right\}\,dt-Z^{\overline B,(T)}_{s}\,dW_s,\\
			\overline B^{(T)}_{T}=&~0.
		\end{aligned}\right.
	\end{equation}
	\item The $\mathbb R$-valued stochastic process $C^{(T)}$ satisfies the following BSDE
	\begin{equation}\label{C-T}
		\left\{\begin{aligned}
			-d C^{(T)}_s=&~-\left\{-\sigma^2_se^{-\phi s}\frac{2\overline A^{(T)}_{s}-\gamma}{2\gamma^2_s}+\frac{\rho^2_s(\overline B^{(T)}_{s})^2}{4a_s}-\sigma_se^{-\frac{\phi s}{2}}\frac{Z^{\overline B,(T)}_{s}}{\gamma}\right\}\,ds-Z^{C,(T)}_s\,dW_s,\\
			C^{(T)}_T=&~0.
		\end{aligned}\right.
	\end{equation}
\end{itemize}
The wellposedness of $A^{(T)}$ can be found in \cite[Proposition 2.1]{Kohlmann2003}, the wellposedness of $B^{(T)}$ can be found in \cite[Theorem 4.1]{Briand03}, and the wellposedness of $C^{(T)}$ can be obtained by martingale representation theorem.

The proof of the following result is similar to \cite{FHX-2023}. We leave it to Appendix \ref{sec:verification-T} for completeness.
\begin{proposition}\label{thm:finite-control}
	Let the {\bf Standing Assumption} hold. 
	\begin{itemize}
		\item[i)] In terms of the processes $A^{(T)},B^{(T)},C^{(T)}$ introduced in \eqref{A-T}-\eqref{C-T}  the value function satisfies 
		\begin{equation} \label{value-T}
			V^{(T)}(t,\mathcal X)=\mathcal X^\top A^{(T)}_t\mathcal X +\mathcal X^\top B^{(T)}_t+C^{(T)}_t.
		\end{equation}
		\item[ii)] %The $e^{-\frac{1}{2}\phi\cdot}$-scaled optimal strategy $\widetilde X^{*,(T)}$  jumps only at the beginning and the end of the trading period where the initial and terminal jump is given by 
		Define $\widetilde X^{*,(T)}$ by 
		\begin{equation}\label{jump-T}
			\Delta {\widetilde X}^{*,(T)}_{t}=\frac{{I}_{t}^{A,(T)}}{a_t} {\mathcal X}_{t-}+\frac{{I}_{t}^{B,(T)}}{a_t}, \quad \mbox{and} \quad 
			\Delta \widetilde X^{*,(T)}_T =  -e^{-\frac{1}{2}\phi T} X^{*,(T)}_{T-},
		\end{equation}
	and 
		\begin{equation}\label{state-T-X}
		\widetilde X^{*,(T)}_s=\left(1-\frac{\lambda_s+\rho_s \overline A^{(T)}_{s}}{a_s}\right)  \widetilde P^{(T)}_s-\frac{\rho_s \overline B^{(T)}_{s}}{2a_s},  \qquad s\in(t,T),
	\end{equation}
		where %the processes $I^{A,(T)}$ and $I^{B,(T)}$ are given by 
		\begin{equation*}
			I^{A,(T)}=\left(\begin{matrix}
				-\rho \overline A^{(T)}-\lambda \\
				- \frac{\rho}{\gamma}\overline A^{(T)}+\rho +\frac{\phi}{2}
			\end{matrix}\right)^\top,
			\qquad 
			I^{B,(T)}=- \frac{\rho }{2}\overline B^{(T)},
		\end{equation*}
	and 
			\begin{equation*}
		\left\{\begin{aligned}
			d\widetilde P^{(T)}_s=&~-\left(\frac{\phi}{2}+\frac{\rho_s(\lambda_s+\rho_s \overline A^{(T)}_{s})}{a_s}\right)\widetilde P^{(T)}_s\,ds-\frac{ \rho^2_s \overline B^{(T)}_{s}}{2a_s}\,ds +\frac{\sigma_s}{\gamma} e^{-\frac{1}{2}\phi s} \,dW_s\\
			\widetilde P^{(T)}_0=&~x_0.
		\end{aligned}\right. 
	\end{equation*}
If $\widetilde X^{*,(T)}$ is a c\`adl\`ag semimartingale, then the minimal value \eqref{value-T} can be reached by using the strategy $\widetilde X^{*,(T)}$. In particular, the optimal strategy is unique in $\mathscr A$. In this situation, the optimal position is given by $X^{*,(T)}_\cdot:=e^{\frac{1}{2}\phi\cdot}\widetilde X^{*,(T)}_\cdot$, and the optimal deviation process is given by $Y^{*,(T)}_\cdot=e^{\frac{1}{2}\phi\cdot}\widetilde Y^{*,(T)}_\cdot$, where  
		\begin{equation}\label{state-T-Y}
		  \widetilde Y^{*,(T)}_s=\frac{\gamma(\lambda_s+\rho_s \overline A^{(T)}_{s})}{a_s} \widetilde P^{(T)}_s+\frac{\gamma\rho_s \overline B^{(T)}_{s}}{2a_s},\qquad s\in(t,T).
		\end{equation}

	\end{itemize}
\end{proposition}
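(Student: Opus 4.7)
My plan is to prove \eqref{value-T} by a verification argument and to establish optimality of $\widetilde X^{*,(T)}$ by completing the square. Fix $t\in[0,T]$ and an arbitrary $\widetilde X\in\mathscr A$ with associated state $\mathcal X = (\widetilde X,\widetilde Y)^\top$ driven by \eqref{v3}. The first step is to apply It\^o's formula to
\[
F_s := \mathcal X_s^\top A^{(T)}_s \mathcal X_s + \mathcal X_s^\top B^{(T)}_s + C^{(T)}_s, \qquad s\in[t,T],
\]
tracking carefully the contributions of $d[\widetilde X]^c$, $d[\widetilde X,W]$, and the pure-jump part $\sum_{u\le s}\Delta\widetilde X_u$, which are all nontrivial because $\widetilde X$ is only assumed to be a c\`adl\`ag semimartingale.

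Next, I substitute the BSDE dynamics \eqref{A-T}--\eqref{C-T} of $\overline A^{(T)},\overline B^{(T)},C^{(T)}$ into the It\^o expansion and add the running cost in \eqref{cost-t-T-v2}. The structural identities $A^{(T)}_{11}=\gamma A^{(T)}_{21}$ and $A^{(T)}_{12}=\gamma A^{(T)}_{22}+\tfrac12$, together with the drivers of \eqref{A-T}--\eqref{C-T}, are engineered precisely so that the $ds$-drift and the jump contributions reorganise into a nonnegative sum of squares. Minimising the continuous square identifies the absolutely continuous part of the optimal trajectory and yields the representation \eqref{state-T-X} together with the SDE for $\widetilde P^{(T)}$; minimising the jump square yields the rule \eqref{jump-T}, where the initial jump comes from the summand at $s=t$ and the terminal jump is forced by the liquidation constraint $\widetilde X_T=0$.

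It then remains to show that the $dW$-martingale terms --- arising both from It\^o and from the integrands $Z^{\overline A,(T)},Z^{\overline B,(T)},Z^{C,(T)}$ --- vanish in conditional expectation. This is where the admissibility space $\mathscr A$ is used: the moment bounds in \eqref{admissibility}, combined with standard a priori estimates on $\overline A^{(T)},\overline B^{(T)},C^{(T)}$ and on their $Z$-integrands, render the relevant stochastic integrals true martingales (via localisation and dominated convergence, if necessary). Taking $\mathbb E[\,\cdot\,|\mathcal F_t]$ then produces
\[
J^{(T)}(t,\widetilde X)=\mathcal X_{t-}^\top A^{(T)}_t\mathcal X_{t-}+\mathcal X_{t-}^\top B^{(T)}_t+C^{(T)}_t+\mathbb E\!\left[\,\text{nonnegative sum of squares}\,\middle|\,\mathcal F_t\right],
\]
which simultaneously yields \eqref{value-T} and, by strict convexity of the sum of squares in $(d\widetilde X^c,\Delta\widetilde X)$, identifies $\widetilde X^{*,(T)}$ as the unique minimiser in $\mathscr A$ whenever the explicit process defined by \eqref{jump-T}--\eqref{state-T-X} belongs to $\mathscr A$.

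The principal obstacle I anticipate is the algebraic bookkeeping in the second step: the It\^o expansion of $F$ generates Hessian contributions from $d[\widetilde X]^c$ and $(\Delta\widetilde X)^2$, cross terms from $d[\widetilde X,W]$, and mixed contributions involving $\overline B^{(T)}$ and $\overline A^{(T)}$, all of which must be matched against the $-\widetilde Y_{s-}d\widetilde X_s$, $\tfrac{\gamma}{2}d[\widetilde X]_s$, $-\sigma_s e^{-\phi s/2}d[\widetilde X,W]_s$ and $\lambda_s\widetilde X_s^2\,ds$ terms of \eqref{cost-t-T-v2} before the sum-of-squares structure becomes transparent. The appearance of the denominator $a_s=\gamma\rho_s+\lambda_s+\tfrac\phi2\gamma$ in the feedback formulas is the unmistakable signature of this square completion, and once it is carried out the optimal jump \eqref{jump-T} and the feedback \eqref{state-T-X} drop out simultaneously.
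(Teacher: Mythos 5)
Your overall framework --- It\^o's formula applied to the candidate quadratic value function, substitution of the BSDE drivers \eqref{A-T}--\eqref{C-T}, a completion-of-squares identity, and a martingale argument resting on the admissibility bounds \eqref{admissibility} --- is exactly the paper's verification argument in Appendix \ref{sec:verification-T}. However, there is one concrete structural misprediction that, as written, would derail your key step. You claim the drift and jump contributions reorganise into squares in the control increments, so that ``minimising the continuous square'' yields \eqref{state-T-X} and ``minimising the jump square'' yields \eqref{jump-T}, with uniqueness following from strict convexity in $(d\widetilde X^c,\Delta\widetilde X)$. No such squares exist. Because of the structural identities $A^{(T)}_{11}=\gamma A^{(T)}_{21}$, $A^{(T)}_{12}=\gamma A^{(T)}_{22}+\tfrac12$ and $B^{(T)}_1=\gamma B^{(T)}_2$ (hence $\mathcal K^\top A^{(T)}=(0,\tfrac12)$, $\mathcal K^\top B^{(T)}=0$, $\mathcal K^\top Z^{A,(T)}=0$), every term involving the control increments cancels \emph{exactly} against the cost: the It\^o term $2(A^{(T)}_s\mathcal X_{s-})^\top\mathcal K\,d\widetilde X_s=\widetilde Y_{s-}\,d\widetilde X_s$ absorbs $-\widetilde Y_{s-}\,d\widetilde X_s$, the jump Hessian $\Delta\mathcal X_s^\top A^{(T)}_s\Delta\mathcal X_s=-\tfrac{\gamma}{2}(\Delta\widetilde X_s)^2$ absorbs the jump part of $\tfrac{\gamma}{2}\,d[\widetilde X]_s$, and likewise for $d[\widetilde X^c]$ and $d[\widetilde X^c,W]$. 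What survives is \eqref{cost-T-2}: a $ds$-integral of $\frac{1}{a_s}\bigl(I^{A,(T)}_s\mathcal X_s+I^{B,(T)}_s\bigr)^2$, i.e.\ a square in the \emph{state}, not in the control.

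Consequently the jumps are cost-neutral in this decomposition (any jump produces zero net immediate contribution), so \eqref{jump-T} cannot be obtained by pointwise minimisation of a jump square. Instead it is the unique jump that places the post-jump state on the manifold $I^{A,(T)}\mathcal X+I^{B,(T)}=0$ (using $\Delta\widetilde Y=-\gamma\Delta\widetilde X$), where the surviving integrand vanishes; the feedback \eqref{state-T-X} then keeps the state on that manifold, with $\widetilde P^{(T)}=\widetilde X^{*,(T)}+\widetilde Y^{*,(T)}/\gamma$ autonomous precisely because its dynamics carry no $d\widetilde X$ increment. Uniqueness is argued from convexity of $J^{(T)}$ via the representation \eqref{cost-T-2}, not from strict convexity in the increments. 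If you carry out your Step 2 expecting squares in $(d\widetilde X^c,\Delta\widetilde X)$, you will find their coefficients vanish identically; recognising that the residual is the state square is what makes the rest of your plan --- including the martingale step, which matches the paper's treatment of $\int\mathcal X^\top Z^{A,(T)}\mathcal X\,dW$ --- go through.
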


%%%%%%%%%%%%%%%%%%
%%%%%%%%%%%%%%%%%%
%%%%%%%%%%%%%%%%%%

 \section{Convergence analysis}\label{sec:convergence}
In this section, we will prove that the value function in \eqref{value-T} converges to the one in the infinite time model \eqref{cost-inf-continuous}-\eqref{state-inf-continuous}. To do so, we will establish a convergence result of $(\overline A^{(T)},\overline B^{(T)},C^{(T)})$ as $T\rightarrow\infty$. To wit, we will prove $(\overline A^{(T)},\overline B^{(T)},C^{(T)})$ converges to the following BSDEs $(\overline A,\overline B,C)$ with infinite horizon:

\begin{itemize}
	\item The matrix-valued stochastic process $A$ is symmetric and satisfies $A_{11}=\gamma A_{21}$, $A_{12}=\gamma A_{22}+\frac{1}{2}$, where $\overline A:=A_{11}$ satisfies the following BSDE on $[0,\infty)$
	\begin{equation}\label{A-inf}
		-d\overline{A}_{t}=-\left(\phi \overline A_{t}-\lambda_t+\frac{(\rho_t \overline A_{t}+\lambda_t)^2}{a_t}\right)\,dt-Z^{\overline A}_{t}\,dW_t.
	\end{equation}
	\item 	The vector-valued stochastic process $B=(B_1,B_2)^\top$ satisfies $B_1=\gamma B_2$, where $\overline B:=B_1$ satisfies the following BSDE on $[0,\infty)$
	\begin{equation}\label{B-inf}
		-d\overline B_{t}=-\left\{\frac{\phi}{2}\overline B_{t}+\frac{(\lambda_t+\rho_t \overline A_{t})\rho_t \overline B_{t}}{a_t}-\frac{2}{\gamma}\sigma_te^{-\frac{\phi t}{2}}Z^{\overline A}_{t}\right\}\,dt-Z^{\overline B}_{t}\,dW_t.
	\end{equation}
	\item The $\mathbb R$-valued stochastic process $C$ satisfies the following BSDE
	\begin{equation}\label{C-inf}
		-d C_t=-\left\{-\sigma^2_te^{-\phi t}\frac{2\overline A_{t}-\gamma}{2\gamma^2}+\frac{\rho^2_t\overline B_{t}^2}{4a_t}  - \sigma_te^{-\frac{\phi t}{2}}\frac{Z^{\overline B}_{t}}{\gamma}\right\}\,dt-Z^C_t\,dW_t.
	\end{equation}
\end{itemize}
Although the wellposedness result for \eqref{A-inf} and the convergence result from $\overline A^{(T)}$ to $\overline A$ have been established in \cite{Hu-2022}, the standard $S^{2,0}\times\mathcal M^{2,0}$-estimate therein is not sufficient to establish the desired results including estimate and wellposedness for $\overline B$ and $C$, because of the dependence on $Z^{\overline A}$ and $(\overline B^2,Z^{\overline B})$ in the driver of $\overline B$ and $C$, respectively. 
In contrast, it requires us to establish finer $S^{p,K}$-estimate for $B^{(T)}$, which further requires a finer $\mathcal M^{p,K}$-estimate for $Z^{\overline A,(T)}$. For this reason, in Section \ref{sec:fine-estimate}, we will introduce an a priori $S^{p,K}\times\mathcal M^{p,K}$-estimate for a general BSDE with infinite horizon; this estimate will be frequently used in Section \ref{sec:convergence-ABC}, where we will establish the convergence result for \eqref{A-T}-\eqref{C-T} as well as the wellposedness result for \eqref{A-inf}-\eqref{C-inf}.

\subsection{Fine estimation for a general BSDE with infinite horizon}\label{sec:fine-estimate}
In this subsection, $|\cdot|$ stands for Euclidean norm of a matrix or vector, and $\langle \cdot, \cdot\rangle$ stands for the inner product between two vectors.
Consider the following infinite horizon BSDE
\begin{equation}\label{BSDE-app}
	-dY_t=(G(t,Y_t)+\varphi_t)dt-Z_t\,dW'_t,
\end{equation}
where %$$G(t,Y):[0,\infty)\times \mathbb R\rightarrow \mathbb R^d.$$	
 $W'$ is a $m$-dimensional Brownian motion. 
We assume that
\begin{itemize}
	\item (H1) for each $y\in\mathbb R^d,$ $G(\cdot,y)$ is a  progressively measurable process defined on $[0,\infty)$ with $G(t,0)\equiv0$, for each $t\in [0,\infty);$
	\item (H2) $G$ satisfies {\it weak monotonicity} condition, i.e., there exists some $\mu>0$ such that
	\[
	\big\langle G(t,y_1)-G(t,y_2),y_1-y_2 \big\rangle \leq -\mu|y_1-y_2|^2,\quad  \forall~ y_1,y_2\in\mathbb R^d,~t\in [0,\infty).
	\] 
%{\color{blue}	\item (H3) {$y\mapsto G(t,y)$ is Lipschitz.} delete!}
\end{itemize}	
The next proposition is adapted from \cite[Lemma 3.1 and Proposition 3.2]{Briand03}. Since the estimate therein does not depend on the horizon $T$, we can extend it to the case of infinite horizon, with an additional assumption on the terminal condition.
\begin{proposition}\label{prop:estimate1}
	Let $K\leq\mu$ and $p\geq2$ and Assumption (H1)-(H2) hold. Assume that $\varphi\in L^{p,K}_{}(0,\infty;\mathbb R^d)$. Let $(Y,Z)$ be a solution to BSDE \eqref{BSDE-app} such that $Y\in S^{p,K}_{ }(0,\infty;\mathbb R^d)$ and 
	\begin{equation}\label{boundary-requirement}
			\lim_{T\rightarrow\infty} \mathbb E[e^{-pKT}|Y_T|^p]=0. 
	\end{equation}
	Then, $Z\in \mathcal M^{p,K}(0,\infty;\mathbb R^{d\times m})$ and 
	\begin{equation}\label{estimate-main1}
		\begin{split}
			\mathbb E\left[\sup_{t\geq0}e^{-pKt}|Y_t|^p\right]+\mathbb E\left[\left(\int^\infty_0 e^{-2Kr}\mathrm{Tr}(Z^\top_rZ_r)\,dr\right)^{p/2}\right]&\leq c\mathbb E\left[\left(\int^\infty_0 e^{-Kr}|\varphi_r|\,dr\right)^p\right],
		\end{split}
	\end{equation}
where the positive constant $c$ only depends on $p$.
\end{proposition}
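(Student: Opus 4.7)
The plan is to run the classical Itô-energy argument of \cite{Briand03} on an arbitrary horizon $[0, T\wedge\tau_n]$ with $\tau_n = \inf\{t\geq 0: \int_0^t |Z_s|^2\,ds \geq n\}$, verify that the resulting constants are independent of both $T$ and $n$, and then pass to the double limit $n\to\infty$, $T\to\infty$ using the boundary decay condition \eqref{boundary-requirement}. The strength of the assumption $K\leq\mu$ together with the weak monotonicity (H2) will be used to produce a nonnegative dissipation term that can be discarded on the good side.

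Concretely, I would apply Itô's formula to $e^{-2Kt}|Y_t|^2$, use $\langle Y_t, G(t,Y_t)\rangle \leq -\mu|Y_t|^2$ (which is monotonicity combined with $G(t,0)=0$), and drop the nonnegative contribution $2(\mu-K)\int e^{-2Kt}|Y_t|^2\,dt$ to obtain
\begin{equation*}
\int_0^{T\wedge\tau_n} e^{-2Kt}|Z_t|^2\,dt \leq e^{-2K(T\wedge\tau_n)}|Y_{T\wedge\tau_n}|^2 + 2\int_0^{T\wedge\tau_n} e^{-2Kt}|Y_t||\varphi_t|\,dt - 2\int_0^{T\wedge\tau_n} e^{-2Kt}\langle Y_t, Z_t\,dW'_t\rangle.
\end{equation*}
Raising to the power $p/2$, taking supremum in $t$ and expectation, and applying the Burkholder--Davis--Gundy inequality to the stochastic integral, I would control the martingale piece by $\varepsilon\,\mathbb{E}\big[\big(\int e^{-2Kt}|Z_t|^2\,dt\big)^{p/2}\big] + C_\varepsilon\,\mathbb{E}\big[\sup_t e^{-pKt}|Y_t|^p\big]$ via Young. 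The cross term is bounded by $\sup_t(e^{-Kt}|Y_t|)^{p-1}\cdot\big(\int_0^\infty e^{-Kt}|\varphi_t|\,dt\big)$, which is split again by Young. Choosing $\varepsilon$ small enough, the $\sup$ of $Y$ and the integral of $|Z|^2$ can be absorbed into the left-hand side, yielding a $(T,n)$-uniform estimate of the form
\begin{equation*}
\mathbb{E}\left[\sup_{t\leq T\wedge\tau_n} e^{-pKt}|Y_t|^p\right] + \mathbb{E}\left[\left(\int_0^{T\wedge\tau_n} e^{-2Kt}|Z_t|^2\,dt\right)^{p/2}\right] \leq c\,\mathbb{E}\left[e^{-pK(T\wedge\tau_n)}|Y_{T\wedge\tau_n}|^p\right] + c\,\mathbb{E}\left[\left(\int_0^\infty e^{-Kt}|\varphi_t|\,dt\right)^p\right].
\end{equation*}

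Finally, I would let $n\to\infty$ using monotone convergence and Fatou on the left-hand side and dominated convergence on the right-hand side (the hypothesis $Y\in S^{p,K}$ providing the integrable dominating function $\sup_t e^{-pKt}|Y_t|^p$), and then let $T\to\infty$; the boundary decay \eqref{boundary-requirement} kills the terminal term $c\,\mathbb{E}[e^{-pKT}|Y_T|^p]$, and \eqref{estimate-main1} follows at once, with $Z\in\mathcal{M}^{p,K}(0,\infty;\mathbb{R}^{d\times m})$ as an automatic consequence. The main obstacle is precisely this limit passage: without \eqref{boundary-requirement} the terminal term cannot be sent to zero, and without the a priori membership $Y\in S^{p,K}$ the dominated convergence step is unavailable. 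The rest of the work is to make sure that the Young/BDG absorption is carried out with constants strictly independent of both $T$ and $n$, so that no $T$-dependent quantity is hidden on the right-hand side when the limit is taken.
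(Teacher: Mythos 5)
Your proposal is correct and takes essentially the same route as the paper: the paper proves this proposition by invoking the finite-horizon $L^p$ estimates of \cite[Lemma 3.1 and Proposition 3.2]{Briand03} (It\^o's formula on the weighted square, monotonicity to discard the dissipation term, BDG plus Young absorption, with constants depending only on $p$ and not on $T$) and then passing to the infinite-horizon limit exactly as you do, using $Y\in S^{p,K}$ for the dominated-convergence step and \eqref{boundary-requirement} to kill the terminal term. The only cosmetic slip is your cross-term exponent: the $(p-1,1)$ split belongs to the It\^o-on-$|Y|^p$ variant of the argument, whereas your route of raising the $|Y|^2$ inequality to the power $p/2$ gives a $(p/2,p/2)$ split, but either version closes under Young's inequality and does not affect the conclusion \eqref{estimate-main1}.
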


\begin{remark}
	%	1) From the proof we can see it is sufficient to have \eqref{boundary-requirement} up to a subsequence. 
	
	We only require $K\leq \mu$, and $K$ can be either positive or negative.
\end{remark}

For $i=1,2,$  consider the following two BSDEs:
\begin{equation}\label{BSDE-app1}
	-dY^i_t=(G^i(t,Y_t)+\varphi^i_t)dt-Z^i_t\,dW'_t.
\end{equation}
In a similar fashion as in Proposition \ref{prop:estimate1}, we have the following proposition, which will be used in Section \ref{sec:convergence-ABC} to establish Cauchy sequences in order to study convergence for \eqref{A-T}-\eqref{C-T}.
\begin{proposition}
\label{coro:Cauchy}
	  Let $K\leq \mu$ and $p\geq2$. For $i=1,2,$ assume that $G^i$ satisfies Assumption (H1)-(H2) and $\varphi^i\in L^{p,K}(0,\infty;\mathbb R^d).$ Let $(Y^i,Z^i)$ be the solution of BSDE \eqref{BSDE-app1} such that $Y^i\in S^{p,K}(0,\infty;\mathbb R^d)$ and $$\lim_{T\rightarrow\infty}\mathbb E\left[e^{-pKT}|	Y^i_T|^p\right]=0.$$  Then, $Z^i\in\mathcal M^{p,K}(0,\infty;\mathbb R^{d\times m})$, $i=1,2$, and there exists a constant $c>0$ such that 
	\begin{equation}\label{estimate-main2}
		\begin{split}
			&~\mathbb E\left[\sup_{t\geq0}e^{-pKt}|Y^1_t-Y^2_t|^p\right]+\mathbb E\left[\left(\int^\infty_0 e^{-2Kr}\mathrm{Tr}\left((Z^1_r-Z^2_r)^\top(Z^1_r-Z^2_r)\right)\,dr\right)^{p/2}\right]\\
			\leq&~ c \left(\mathbb E\left[\left(\int^\infty_0 e^{-Kr}|\varphi^1_r-\varphi^2_r|\,dr\right)^p\right]+\mathbb E\left[\left(\int^\infty_0 e^{-Kr}|G^1(r,Y^1_r)-G^2(r,Y^1_r)|\,dr\right)^p\right]\right).
		\end{split}
	\end{equation}
\end{proposition}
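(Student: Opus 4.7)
The plan is to reduce the two-BSDE stability estimate to the one-BSDE estimate of Proposition \ref{prop:estimate1} applied to the difference process. Let $\delta Y := Y^1 - Y^2$ and $\delta Z := Z^1 - Z^2$. Subtracting the two BSDEs in \eqref{BSDE-app1} gives
\begin{equation*}
-d\delta Y_t = \bigl[ G^1(t, Y^1_t) - G^2(t, Y^2_t) + \varphi^1_t - \varphi^2_t \bigr] dt - \delta Z_t \, dW'_t.
\end{equation*}
The key algebraic step is to add and subtract $G^2(t, Y^1_t)$, which splits the driver into a part that is monotone in $\delta Y$ and an exogenous part:
\begin{equation*}
G^1(t, Y^1_t) - G^2(t, Y^2_t) = \bigl[ G^2(t, Y^1_t) - G^2(t, Y^2_t) \bigr] + \bigl[ G^1(t, Y^1_t) - G^2(t, Y^1_t) \bigr].
\end{equation*}

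Accordingly, I would define
\begin{equation*}
\widetilde G(t, y) := G^2(t, Y^2_t + y) - G^2(t, Y^2_t), \qquad \widetilde \varphi_t := \varphi^1_t - \varphi^2_t + G^1(t, Y^1_t) - G^2(t, Y^1_t),
\end{equation*}
so that $(\delta Y, \delta Z)$ satisfies
\begin{equation*}
-d\delta Y_t = \bigl[ \widetilde G(t, \delta Y_t) + \widetilde \varphi_t \bigr] dt - \delta Z_t \, dW'_t.
\end{equation*}
By construction, $\widetilde G(t, 0) \equiv 0$, and $\widetilde G$ inherits the weak monotonicity of $G^2$ with the same constant $\mu$; hence (H1)--(H2) hold for $\widetilde G$.

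Next, I would verify the hypotheses of Proposition \ref{prop:estimate1} for $(\delta Y, \delta Z)$: the triangle inequality yields $\delta Y \in S^{p,K}(0,\infty;\mathbb R^d)$ and $\lim_{T\to\infty} \mathbb E[e^{-pKT} |\delta Y_T|^p] = 0$. If the right-hand side of \eqref{estimate-main2} is infinite, the inequality is trivial; otherwise the triangle inequality also gives $\widetilde \varphi \in L^{p,K}(0,\infty;\mathbb R^d)$. Applying Proposition \ref{prop:estimate1} delivers
\begin{equation*}
\mathbb E\Bigl[\sup_{t\geq 0} e^{-pKt} |\delta Y_t|^p \Bigr] + \mathbb E\Bigl[\Bigl(\int_0^\infty e^{-2Kr} \mathrm{Tr}(\delta Z_r^\top \delta Z_r)\, dr\Bigr)^{p/2}\Bigr] \leq c\, \mathbb E\Bigl[\Bigl(\int_0^\infty e^{-Kr} |\widetilde \varphi_r|\, dr\Bigr)^p\Bigr],
\end{equation*}
and bounding $|\widetilde \varphi_r|$ by $|\varphi^1_r - \varphi^2_r| + |G^1(r, Y^1_r) - G^2(r, Y^1_r)|$ together with the elementary inequality $(a+b)^p \leq 2^{p-1}(a^p + b^p)$ yields exactly \eqref{estimate-main2}.

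I do not anticipate a serious obstacle here: the entire argument hinges on the add-and-subtract trick that restores the monotone structure in the $y$-variable and on an appeal to Proposition \ref{prop:estimate1}. The only point requiring mild care is that the auxiliary driver $\widetilde G$ depends on the random process $Y^2$, but this does not disrupt the one-BSDE estimate, since the proof of Proposition \ref{prop:estimate1} only uses pointwise monotonicity of the driver in the state variable.
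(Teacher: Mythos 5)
Your proof is correct and is essentially the paper's own argument: the paper establishes this proposition ``in a similar fashion as in Proposition \ref{prop:estimate1}'', i.e.\ by treating the difference $(\delta Y,\delta Z)$ as a single BSDE after exactly your add-and-subtract decomposition, where $G^2(t,Y^1_t)-G^2(t,Y^2_t)$ is controlled by weak monotonicity (your shifted driver $\widetilde G(t,y)=G^2(t,Y^2_t+y)-G^2(t,Y^2_t)$ makes this a literal application of Proposition \ref{prop:estimate1} rather than a re-run of its proof) and $G^1(t,Y^1_t)-G^2(t,Y^1_t)$ is absorbed into the forcing, which is precisely why that term appears on the right-hand side of \eqref{estimate-main2}. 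The only point you leave implicit is the stated conclusion $Z^i\in\mathcal M^{p,K}(0,\infty;\mathbb R^{d\times m})$ for each $i=1,2$ individually, which follows at once by applying Proposition \ref{prop:estimate1} to each BSDE \eqref{BSDE-app1} separately, since all of its hypotheses are assumed for each $i$.
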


\begin{remark}

	If the driver $G$ is also Lipsichitz continuous in variable $y$, we can borrow the extension technique in \cite{peng2000infinite} together with Proposition \ref{prop:estimate1} and Proposition \ref{coro:Cauchy} to prove the wellposedness result for the BSDE \eqref{BSDE-app} in $S^{p,K}(0,\infty)\times \mathcal M^{p,K}(0,\infty)$; refer to Section \ref{sec:convergence-ABC}. The existence result for infinite horizon BSDEs has been considered in literature, but in different spaces, which are not sufficient to our purpose; see e.g. \cite{Briand03,peng2000infinite}. We need finer estimate to our purpose.
\end{remark}

%%%%%%%%%%%%%%%%%
%%%%%%%%%%%%%%%%%
\subsection{The convergence of $(\overline A^{(T)},\overline B^{(T)},C^{(T)})$}\label{sec:convergence-ABC}\label{sec:convergence-ABC}
In this subsection, we will establish a convergence result for the system \eqref{A-T}-\eqref{C-T} in appropriate spaces.
%\begin{equation}\label{eq:limit-ABC}
%	\left\{\begin{aligned}
%		-d\overline{A}_{s}=&~-\left(\phi \overline A_{s}-\lambda_s+\frac{(\rho_s \overline A_{s}+\lambda_s)^2}{a_s}\right)\,ds-Z^{\overline A}_{s}\,dW_s,\\
%		\textcolor{red}{\lim_{T\rightarrow\infty}\overline A_{T} } =&~ \frac{\gamma}{2}, \\
%		-d\overline B_{s}=&~-\left\{\frac{\phi}{2}\overline B_{s}+\frac{(\lambda_s+\rho_s\overline A_{s})\rho_s \overline B_{s}}{a_s}-\frac{2}{\gamma}\sigma_se^{-\frac{\phi s}{2}}Z^{\overline A}_{s}\right\}\,dt-Z^{\overline B}_{s}\,dW_s,\\
%		\textcolor{red}{\lim_{T\rightarrow\infty}\overline B_{T}}=&~0.\\
%		-d C_s=&~-\left\{-\sigma^2_se^{-\phi s}\frac{2\overline A_{s}-\gamma}{2\gamma^2_s}+\frac{\rho^2_s \overline B_{s}^2}{4a_s}-\sigma_se^{-\frac{\phi s}{2}}\frac{Z^{\overline B }_{s}}{\gamma}\right\}\,ds-Z^{C }_s\,dW_s,\\
%		\textcolor{red}{\lim_{T\rightarrow\infty}C_T}=&~0.
%	\end{aligned}\right.
%\end{equation}

\subsubsection{The convergence of $A^{(T)}$}
For each $t\geq 0$, denote 
$$
b_t =\frac{\rho_t^2}{\gamma\rho_t +\lambda_t+\frac{\phi}{2}\gamma}>0,\quad c_t=\frac{2\lambda_t \rho_t }{\gamma\rho_t +\lambda_t +\frac{\phi}{2}\gamma}+\phi>0,\quad d_t=\frac{\lambda^2_t}{\gamma\rho_t +\lambda_t +\frac{\phi}{2}\gamma}-\lambda_t<0.
$$
Note that $b$, $c$ and $d$ are bounded. The equation \eqref{A-T} can be rewritten as 
\begin{equation}\label{A-T-2}
	\left\{\begin{split}
		-d\overline A^{(T)}_{t}&= -\left(b_t(\overline A^{(T)}_{t})^2+c_t\overline A^{(T)}_{t}+d_t\right)\,dt-Z^{\overline A,(T)}_t\,dW_t,\\
		\overline A^{(T)}_{T}&=\frac{\gamma}{2}.
	\end{split}\right. 
\end{equation}
Set $\widehat{A}^{(T)}:=\overline A^{(T)} -\frac{\gamma}{2}$. It follows that
\begin{equation}\label{eq:A-hat}
	\left\{\begin{split}
		-d{\widehat{A}}^{(T)}_{t}=&~\left(  G(t, \widehat{A}^{(T)}_{t})+\varphi_t    \right)\,dt - Z^{\widehat A,(T)}_t\,dW_t, \\
		\widehat A^{(T)}_{T}=&~ 0,
	\end{split}\right. 
\end{equation}
where $G(t, A)=-b_t A^2-(b_t\gamma + c_t)A $ and $\varphi_t=-\frac{\gamma^2}{4}b_t-\frac{\gamma}{2}c_t-d_t$.

Let $\{T_k\}_{k}$ be any sequence that satisfies $T_k\nearrow \infty$ as $k\rightarrow \infty$. We set 
$$\varphi^k_t=1_{[0,T_k]}(t)\varphi_t, \quad t\in [0,\infty).$$
Consider for each $k$ the following infinite horizon BSDEs
\begin{equation}\label{eq:truncated-BSDE-infinite}
\begin{aligned}-d\widehat A^{k}_{t}=&\left(G(t,\widehat A^k_{t})+\varphi^k_t\right)\,dt-Z^k_t \,dW_t,\quad& t\geq 0.
\end{aligned}
\end{equation}
Next we solve the BSDE \eqref{eq:truncated-BSDE-infinite} and establish an estimate for solutions in $S^{p,K}(0,\infty)\times \mathcal M^{p,K}(0,\infty)$.
\begin{lemma}\label{prop:Ak-infinite}
	For every $0<K\leq \phi$ and $p\geq2$, the infinite horizon BSDE \eqref{eq:truncated-BSDE-infinite} admits a solution $(\widehat A^k,Z^{\widehat A,k})\in  S^{p,K}(0,\infty)\times \mathcal M^{p,K}(0,\infty)$ such that $\widehat A^k$ is $[-\frac{\gamma}{2},0]$-valued, and the solution is unique in the space $S^{p,K}(0,\infty;[-\gamma/2,0])\times \mathcal M^{p,K}(0,\infty)$.
 
 Moreover, we have the following estimate	
	\begin{equation}\label{estimate-ZAk}
	\mathbb E\left[\sup_{t\geq0}e^{-pKt}|\widehat A^k_t|^p\right]+\mathbb E\left[\left(\int^\infty_0 e^{-2Kr}|Z^{\widehat A,k}_r|^2\,dr\right)^{p/2}\right]\leq c\mathbb E\left[ \left(\int^\infty_0 e^{-Kr}|\varphi_r|\,dr\right)^p\right]<\infty,
	\end{equation}
where $c$ is independent of $k$. 
\end{lemma}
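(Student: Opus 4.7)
The plan is to construct $\widehat A^k$ by solving the Riccati BSDE \eqref{eq:A-hat} on the window $[0,T_k]$, extending by zero beyond $T_k$, and then deriving the $S^{p,K}\times\mathcal M^{p,K}$-estimate via Proposition \ref{prop:estimate1} after a truncation of the quadratic driver. On $[0,T_k]$, BSDE \eqref{eq:truncated-BSDE-infinite} reduces to \eqref{eq:A-hat} at $T=T_k$, i.e.\ the shifted stochastic Riccati equation \eqref{A-T-2} for $\overline A^{(T_k)}=\widehat A^k+\gamma/2$, whose wellposedness in $S^\infty\times\mathrm{BMO}$ is classical and already referenced ahead of Proposition \ref{thm:finite-control}. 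A standard sub-/super-solution comparison with the constants $0$ and $\gamma/2$ (the drift in \eqref{A-T} at $\overline A=\gamma/2$ simplifies to $\gamma^2(\phi+\rho_t)^2/(4a_t)\geq 0$, while the drift at $\overline A=0$ equals $\lambda_t(\lambda_t/a_t-1)\leq 0$) yields $\overline A^{(T_k)}\in[0,\gamma/2]$, hence $\widehat A^k\in[-\gamma/2,0]$. Setting $\widehat A^k_t=0$ and $Z^{\widehat A,k}_t=0$ for $t>T_k$ extends the pair to $[0,\infty)$; it still solves \eqref{eq:truncated-BSDE-infinite} since $\varphi^k\equiv 0$ and $G(t,0)=0$ on $(T_k,\infty)$.

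Since $G(t,y)=-b_ty^2-(b_t\gamma+c_t)y$ has quadratic growth, Proposition \ref{prop:estimate1} does not apply directly. A direct computation shows, however, that for $y_1,y_2\in[-\gamma/2,0]$,
\begin{equation*}
\bigl(G(t,y_1)-G(t,y_2)\bigr)(y_1-y_2)=-\bigl(b_t(y_1+y_2)+b_t\gamma+c_t\bigr)|y_1-y_2|^2\leq -\phi|y_1-y_2|^2,
\end{equation*}
since $y_1+y_2\geq -\gamma$ and $c_t\geq\phi$. I would extend this monotonicity globally by introducing the truncated driver $\widetilde G(t,y):=-b_t\pi(y)^2-(b_t\gamma+c_t)y$, where $\pi$ projects onto $[-\gamma/2,0]$; using that $\pi$ is nondecreasing, $1$-Lipschitz, and bounded, one verifies that $\widetilde G$ satisfies (H1)--(H2) globally with $\mu=\phi$, and $\widetilde G(t,\widehat A^k_t)=G(t,\widehat A^k_t)$ because $\widehat A^k\in[-\gamma/2,0]$. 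Consequently $(\widehat A^k,Z^{\widehat A,k})$ also solves the BSDE with driver $\widetilde G$; the boundary condition \eqref{boundary-requirement} holds trivially by boundedness of $\widehat A^k$ and $K>0$; and Proposition \ref{prop:estimate1} yields \eqref{estimate-ZAk} for every $0<K\leq\phi$. Finiteness of the right hand side follows from the identity $\varphi_t=\lambda_t-\phi\gamma/2-(\gamma\rho_t/2+\lambda_t)^2/a_t$, which gives $|\varphi_t|\leq c(1+\lambda_t)$, together with the Standing Assumption \eqref{ass:lambda}.

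Uniqueness in $S^{p,K}(0,\infty;[-\gamma/2,0])\times\mathcal M^{p,K}(0,\infty)$ then follows by applying Proposition \ref{coro:Cauchy} to any two candidate solutions with the common modified driver $\widetilde G$, which agrees with $G$ at each of them since both take values in $[-\gamma/2,0]$: the right hand side of \eqref{estimate-main2} vanishes, forcing the two solutions to coincide. The main obstacle is the quadratic growth of $G$, which blocks a direct use of the abstract infinite-horizon estimate of Proposition \ref{prop:estimate1}; the truncation of $G$ on the bounded range $[-\gamma/2,0]$ is the essential device that preserves both the solution and the monotonicity constant $\mu=\phi$ inherited from $c_t$.
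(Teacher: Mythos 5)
Your construction tracks the paper's own proof quite closely: the paper also solves the equation on $[0,T_k]$ (there via the projection-truncated, hence Lipschitz, driver $G(t,L(A))+\varphi_t$ with $L(A)=(A\lor(-\gamma/2))\land 0$, rather than your direct appeal to the stochastic Riccati wellposedness plus quadratic comparison), obtains $\widehat A^k\in[-\gamma/2,0]$ by comparison against the constants $-\gamma/2$ and $0$, extends by zero to $[0,\infty)$, and then applies Proposition \ref{prop:estimate1} for the estimate and Proposition \ref{coro:Cauchy} for uniqueness. Your sub-/supersolution computations at $0$ and $\gamma/2$ are correct, and your device of passing to $\widetilde G(t,y)=-b_t\pi(y)^2-(b_t\gamma+c_t)y$ with the projection $\pi$ onto $[-\gamma/2,0]$ is in fact a cleaner reconciliation of the quadratic driver with the global hypothesis (H2) than the paper's verification of weak monotonicity only for arguments in $[-\gamma/2,0]$; your verification that $\widetilde G$ is globally monotone with $\mu=\phi$ and agrees with $G$ along the solution is sound.

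There is, however, one genuine gap: your justification of the finiteness claim in \eqref{estimate-ZAk}. From $|\varphi_t|\leq c(1+\lambda_t)$ together with the Standing Assumption \eqref{ass:lambda} you can only conclude $\mathbb E\bigl[\bigl(\int_0^\infty e^{-Kr}|\varphi_r|\,dr\bigr)^p\bigr]<\infty$ for $p=2$ and $K\geq L/2$, since \eqref{ass:lambda} controls only a \emph{second} moment and only with the weight $e^{-Lr/2}$; the lemma asserts finiteness for \emph{every} $p\geq 2$ and \emph{every} $0<K\leq\phi$, and the sequel genuinely uses more than your argument delivers (Theorem \ref{thm:A} takes $p=8$ and $K=\phi/8$, and $\phi/8$ may be smaller than $L/2$). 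The correct observation, which the paper makes by noting that $b$, $c$, $d$ are bounded, is that $\varphi$ is \emph{uniformly bounded}, with no appeal to \eqref{ass:lambda} at all: in $\varphi_t=-\frac{\gamma^2}{4}b_t-\frac{\gamma}{2}c_t-d_t$ one has $b_t\leq\rho_t/\gamma$, $c_t\leq 2\rho_t+\phi$, and $d_t=-\lambda_t\frac{\gamma\rho_t+\phi\gamma/2}{a_t}$ so that $|d_t|\leq\gamma\rho_t+\phi\gamma/2$ (the potentially unbounded $\lambda_t$ cancels because $\lambda_t/a_t\leq 1$); equivalently, your own identity simplifies to $\varphi_t=\frac{\phi\gamma\lambda_t/2-\gamma^2\rho_t^2/4}{a_t}-\frac{\phi\gamma}{2}$, which lies in $[-\phi\gamma/2-\gamma\|\rho\|_\infty/4,\,0]$. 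Hence $\int_0^\infty e^{-Kr}|\varphi_r|\,dr\leq \|\varphi\|_\infty/K$ pathwise, and the right-hand side of \eqref{estimate-ZAk} is finite for all $p\geq 2$ and all $K>0$. With that replacement your proof is complete and otherwise equivalent to the paper's.
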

\begin{proof}
	{\bf Step 1.} 
In this step, we verify the uniform boundedness of $\widehat A^{k}$. To do so, we borrow the truncation argument from \cite{Ackermann2021}. 
	
	We first consider the BSDE \eqref{eq:A-hat}  with $T=T_k$ and a truncated driver $f(t,A)= G\left(t, L(A)\right)+\varphi_t, $
	where $L(A)=(A \lor (-\frac{\gamma}{2}))\land 0$.
	We can find a constant $c>0$ such that 
	$$|  f(s,A)-  f(s,A')|\leq c|A-A'|,\quad  \forall s\in[0,T],$$
	which implies the existence of a unique solution $(\widehat A^k,Z^{\widehat A,k})\in  S^{2,0}(0,T_k)\times \mathcal M^{2,0}(0,T_k)$ to the BSDE with the driver $f$.  
	
	Next we are going to show that $\widehat{A}^{k}$ is $[-\frac{\gamma}{2}, 0]$-valued. Note that $(-\frac{\gamma}{2},0)$ (respectively, $(0,0)$) solves the BSDE
	$d\widehat{A}^{k}_{t}=Z^{\widehat A,k}_t\,dW_t$
	with terminal condition $-\frac{\gamma}{2}$ (respectively, $0$).
	Moreover, we can check that
	$$ {f}\left(t,-\frac{\gamma}{2}\right)= -d_t>0,$$
	and that
	$$ {f}(t,0)=-b_t\left(\frac{\gamma}{2}-\frac{c_t}{2b_t}\right)^2+\frac{c_t^2}{4a_t}-d_t=-\frac{\gamma^2\rho_t^2+2\phi\rho_t\gamma^2+\phi^2\gamma^2}{4(\gamma\rho_t+\lambda_t+\frac{\phi}{2}\gamma)}<0.
	$$
	By the comparison principle \cite[Proposition C.1.]{Ackermann2021}, we have that  $\widehat{A}^{k}$ is $[-\frac{\gamma}{2}, 0]$-valued. Now we trivially extend this unique solution to $[0,\infty)$ by letting $(\widehat A^k,Z^{\widehat A,k})=(0,0)$ for all $t>T_k$. Thus, we get a bounded solution $(\widehat A^k,Z^{\widehat A,k})\in  S^{2,0}(0,\infty)\times \mathcal M^{2,0}(0,\infty)$ to the BSDE \eqref{eq:truncated-BSDE-infinite}. 
 
	{\bf Step 2.} Since $\widehat{A}^{k}$ and $\varphi^k$ are uniformly bounded, we have that $\widehat{A}^{k}\in  S^{p,K}(0,\infty)$ and $\varphi^k\in L^{p,K}(0,\infty)$ for every $p\geq2$ and $K>0$. Moreover, it can be verified that $G$ satisfies weak monotonicity with $y_1,y_2\in[-\frac{\gamma}{2},0]$:
	\begin{equation}\label{eq:weak-monotonicity-A}
	\begin{split}
	&~\Big( G\left(t,y_1\right)-G\left(t,y_2\right)\Big)\left(y_1-y_2\right) \\
	=&~\Big(-b_t(y_1+y_2)(y_1-y_2)-(b_t\gamma+c_t)(y_1-y_2) \Big) ( y_1-y_2 )  \\
	\leq&~(b_t\gamma-b_t\gamma-c_t)|y_1-y_2|^2\\
	\leq&~-\phi|y_1-y_2|^2.
	\end{split}   
	\end{equation}
	Besides, we have that $\lim\limits_{T\rightarrow \infty}\mathbb E\left[ e^{-pKT} |\widehat{A}^{k}_T|^p \right]=0$. Thus, the requirements in Proposition \ref{prop:estimate1} are satisfied. Then we have   $Z^{\widehat A,k}\in \mathcal M^{p,K}(0,\infty)$ and   the estimate \eqref{estimate-ZAk} for every $p\geq2$ and $0<K\leq \phi$ .

 For any two solutions to \eqref{eq:truncated-BSDE-infinite} in the space $S^{p,K}(0,\infty;[-\gamma/2,0])\times \mathcal M^{p,K}(0,\infty)$, the weak monotonicity condition \eqref{eq:weak-monotonicity-A} holds. Thus, the uniqueness result follows from Proposition \ref{coro:Cauchy}.
\end{proof}
The next theorem shows the convergence result of \eqref{eq:truncated-BSDE-infinite} as well as a fine estimate for the limit, which will be used in the convergence of $\overline B^{(T)}$ in Section \ref{sec:convergence-B}. 
\begin{theorem}\label{thm:A}
	Let $p=8$ and $K=\frac{\phi}{8}$ in Lemma \ref{prop:Ak-infinite}. The sequence of solutions $\{(\widehat{A}^{k},Z^{\widehat{A},k})\}_k$ of the BSDE \eqref{eq:truncated-BSDE-infinite} admits a limit in $S^{8,\frac{\phi}{8}}(0,\infty)\times \mathcal M^{8,\frac{\phi}{8}}(0,\infty)$. Denote the limit by $(\widehat A,Z^{\widehat A})$, which is the solution to the infinite horizon BSDE
\begin{equation}\label{eq:A-hat-infty}
-d{\widehat{A}} _{t}=\left(G(t,\widehat A _{t})+\varphi_t\right)\,dt-Z^{\widehat A}_t\, dW_t,\qquad  t\in[0,\infty).
\end{equation}
		Moreover, we have the following estimate for $Z^{\widehat A}$
	\begin{equation}\label{estimate-ZA}
		\mathbb E\left[\left(\int^\infty_0 e^{-\frac{\phi r}{4}}|Z^{\widehat A}_r|^2\,dr\right)^4\right]\leq c\mathbb E\left[ \left(\int^\infty_0 e^{-\frac{\phi r}{8}}|\varphi_r|\,dr\right)^8\right]<\infty.
	\end{equation}	
The solution $(\widehat A,Z^{\widehat A})$ is unique in the space $S^{8,\frac{\phi}{8}}(0,\infty;[-\gamma/2,0])\times \mathcal M^{8,\frac{\phi}{8}}(0,\infty)$.
\end{theorem}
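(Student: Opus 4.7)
The plan is to construct $(\widehat A, Z^{\widehat A})$ as the $S^{8,\phi/8}(0,\infty)\times \mathcal M^{8,\phi/8}(0,\infty)$-limit of $\{(\widehat A^k, Z^{\widehat A,k})\}_k$ via Proposition \ref{coro:Cauchy}, then pass to the limit on finite intervals to verify that this limit solves \eqref{eq:A-hat-infty}, and finally to deduce \eqref{estimate-ZA} and uniqueness from the abstract a priori machinery of Section \ref{sec:fine-estimate}. For the Cauchy step, I would apply Proposition \ref{coro:Cauchy} to $(\widehat A^{k_1}, Z^{\widehat A,k_1})$ and $(\widehat A^{k_2}, Z^{\widehat A,k_2})$. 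The two BSDEs have identical driver $G$, so the $G^1-G^2$ contribution in \eqref{estimate-main2} drops out. The required hypotheses are direct: weak monotonicity \eqref{eq:weak-monotonicity-A} gives $\mu=\phi\geq K=\phi/8$; the uniform bound $|\widehat A^k|\leq \gamma/2$ with $K>0$ yields $\lim_{T\to\infty}\mathbb E[e^{-pKT}|\widehat A^k_T|^p]=0$; and the forcing $\varphi_t=-\tfrac{\gamma^2}{4}b_t-\tfrac{\gamma}{2}c_t-d_t$ is bounded, since $b,c$ are bounded by assumption and $d_t=-\lambda_t(\gamma\rho_t+\tfrac{\phi}{2}\gamma)/(\gamma\rho_t+\lambda_t+\tfrac{\phi}{2}\gamma)$ is dominated by $\gamma\rho_t+\tfrac{\phi}{2}\gamma$, so $\varphi\in L^{8,\phi/8}(0,\infty)$ trivially. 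Proposition \ref{coro:Cauchy} then delivers
\begin{equation*}
\mathbb E\bigl[\sup_{t\geq 0}e^{-\phi t}|\widehat A^{k_1}_t-\widehat A^{k_2}_t|^8\bigr] + \mathbb E\Bigl[\Bigl(\int_0^\infty e^{-\phi r/4}|Z^{\widehat A,k_1}_r-Z^{\widehat A,k_2}_r|^2\,dr\Bigr)^4\Bigr] \leq c\,\mathbb E\Bigl[\Bigl(\int_{T_{k_1}}^{T_{k_2}}e^{-\phi r/8}|\varphi_r|\,dr\Bigr)^8\Bigr],
\end{equation*}
whose right-hand side vanishes as $k_1\to\infty$ by dominated convergence.

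Denote the limit by $(\widehat A, Z^{\widehat A})$. To verify that it solves \eqref{eq:A-hat-infty}, I would fix $0\leq s\leq t<\infty$, choose $k$ large enough so that $T_k\geq t$, and pass to the limit in the identity
\begin{equation*}
\widehat A^k_s = \widehat A^k_t + \int_s^t\bigl(G(r,\widehat A^k_r)+\varphi_r\bigr)\,dr - \int_s^t Z^{\widehat A,k}_r\,dW_r.
\end{equation*}
The $S^{8,\phi/8}$-convergence of $\widehat A^k$ gives $L^8$-convergence at each fixed time; the Lebesgue integral converges by dominated convergence on the bounded interval $[s,t]$ using local Lipschitz continuity of $G$ on $[-\gamma/2,0]$ (whose Lipschitz constant is controlled by the bounded quantity $b_r\gamma+c_r$) together with $|\widehat A^k|\leq \gamma/2$; and the stochastic integral converges by It\^o's isometry from the $\mathcal M^{8,\phi/8}$-convergence of $Z^{\widehat A,k}$. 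The estimate \eqref{estimate-ZA} is then an immediate application of Proposition \ref{prop:estimate1} to $(\widehat A, Z^{\widehat A})$ with $p=8$, $K=\phi/8$, whose hypotheses are inherited from the limit (in particular, the boundary condition follows from $|\widehat A|\leq \gamma/2$). For uniqueness, any two solutions $(\widehat A^i, Z^i)$ in $S^{8,\phi/8}(0,\infty;[-\gamma/2,0])\times\mathcal M^{8,\phi/8}(0,\infty)$ still satisfy weak monotonicity \eqref{eq:weak-monotonicity-A}, so Proposition \ref{coro:Cauchy} applied with identical drivers and forcing terms forces both sides of \eqref{estimate-main2} to vanish.

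The delicate point is not any single estimate but the balancing act in choosing $p=8$ and $K=\phi/8$: they must simultaneously respect $K\leq \mu=\phi$ (so that Propositions \ref{prop:estimate1}--\ref{coro:Cauchy} apply), propagate through to an $\mathcal M^{8,\phi/8}$-bound on $Z^{\widehat A}$ strong enough to serve as input to the BSDE \eqref{B-inf} for $\overline B$ in the next subsection, and still be satisfied by the forcing $\varphi$ and the truncations $\varphi^k$. Once this parameter matching is in place, the convergence and uniqueness statements follow mechanically; the rest of the argument is the verification above.
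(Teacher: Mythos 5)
Your proposal is correct and follows essentially the same route as the paper's proof: establish the Cauchy property in $S^{8,\frac{\phi}{8}}(0,\infty)\times \mathcal M^{8,\frac{\phi}{8}}(0,\infty)$ via Proposition \ref{coro:Cauchy} using the $L^{8,\frac{\phi}{8}}$-convergence $\varphi^k\to\varphi$, identify the limit as a solution of \eqref{eq:A-hat-infty}, then invoke Proposition \ref{prop:estimate1} (with the boundary condition supplied by boundedness of $\widehat A$) for \eqref{estimate-ZA}, and weak monotonicity on $[-\gamma/2,0]$ plus Proposition \ref{coro:Cauchy} for uniqueness. You merely spell out the limit-passage on finite intervals and the parameter matching that the paper labels "straightforward to check," which is consistent with its argument.
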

\begin{proof}
		For any $m>n$, we denote by $(\widehat{A}^{m},Z^{\widehat{A},m})$ and $ (\widehat{A}^{n},Z^{\widehat{A},n})$ the solutions to the BSDEs \eqref{eq:truncated-BSDE-infinite} with $k=m$ and $n$, respectively. It can be easily checked that $\{\varphi^k\}$ converges to $\varphi_t$ in $L^{8,\frac{\phi}{8}}_{}(0,\infty)$. From Proposition \ref{coro:Cauchy}, we have that
			\begin{equation}\label{estimate-cauchyA}
		\begin{split}
		&~\mathbb E\left[\sup_{t\geq0}e^{-\phi t}|\widehat{A}^{m}_t-\widehat{A}^{n}_t|^8\right]+\mathbb E\left[\left(\int^\infty_0 e^{-\frac{\phi r}{4}}|Z^{\widehat{A},m}_r-Z^{\widehat{A},n}_r|^2\,dr\right)^{4}\right]\\
		\leq&~ c\mathbb E\left[\left(\int^\infty_0 e^{-\frac{\phi r}{8}}|\varphi^m_r-\varphi^n_r|\,dr\right)^8\right].
		\end{split}
		\end{equation}
		Hence, we know that $\{(\widehat{A}^{k},Z^{\widehat{A},k})\}_k$ is a Cauchy sequence in $S^{8,\frac{\phi}{8}}(0,\infty)\times \mathcal M^{8,\frac{\phi}{8}}(0,\infty)$. It is straightforward to check that the limit $(\widehat A,Z^{\widehat A})\in    S^{8,\frac{\phi}{8}}(0,\infty;\mathbb R)\times \mathcal M^{8,\frac{\phi}{8}}(0,\infty;\mathbb R)$ solves the infinite horizon BSDE \eqref{eq:A-hat-infty} and that $\widehat A$ is valued in $[-\frac{\gamma}{2},0]$. Finally, by $\lim_{T\rightarrow\infty}	\mathbb E\left[   e^{-\phi T}| \widehat A_T |^8    \right]=0$ and Proposition \ref{prop:estimate1}, we can obtain the estimate \eqref{estimate-ZA}. 

 Uniqueness follows from weak monotonicity of $G$ for bounded arguments and Proposition \ref{coro:Cauchy}.
\end{proof}
Denote $\overline A^{k}=\widehat A^{k}+\frac{\gamma}{2}$, $Z^{\overline A,k}=Z^{\widehat A,k}$,  $\overline A=\widehat A+\frac{\gamma}{2}$ and $Z^{\overline A}=Z^{\widehat A}$. We have the same convergence result from $(\overline A^{k},Z^{\overline A,k})$ to $(\overline A,Z^{\overline A})$ in $S^{8,\frac{\phi}{8}}(0,\infty)\times \mathcal M^{8,\frac{\phi}{8}}(0,\infty)$ . By the arbitrary choice of the sequence $\{T_k\}$ and the uniqueness of solution of the BSDE \eqref{A-inf} in the space $S^{8,\frac{\phi}{8}}(0,\infty;[-\gamma/2,0])\times \mathcal M^{8,\frac{\phi}{8}}(0,\infty)$, we deduce that the solution of the BSDE \eqref{A-T} converges to the solution of the BSDE in \eqref{A-inf} as $T\rightarrow \infty.$ 
%	\textcolor{red}{ 
	%	\begin{remark}
		%		Theorem \ref{thm:A} holds for $K=\phi$ but in the convergence result of $C$ in Section \ref{sec:convergence-C} we need to require $K<\phi$. 
		%	\end{remark}
	%}
%%%%%%%%%%%%%%%%%%%%%
%%%%%%%%%%%%%%%%%%%%%
\subsubsection{The convergence of $B^{(T)}$}\label{sec:convergence-B}

 Consider the infinite horizon BSDE 
\begin{equation}\label{B-trivial-infinite}
-d\overline B^{k}_{t}=~\left\{\widetilde{G}^k(t, \overline B^{k}_t)+\widetilde{\varphi}^{k}_t\right\}\,dt - Z^{\overline B,k}_{t}\,dW_t,\quad t\geq 0
\end{equation}
where $\widetilde G^k(t,B)=-\left(\frac{\phi}{2}+\frac{(\lambda_t+\rho_t \overline A^{k}_{t})\rho_t}{\gamma\rho_t+\frac{\phi}{2}\gamma+\lambda_t}\right)B$ and  $\widetilde{\varphi}^{k}_t=\frac{2}{\gamma}\sigma_te^{-\frac{\phi t}{2}}Z^{\overline A,k}_t.$
Note that for $t>T_k$, $\overline A^{k}_t =\frac{\gamma}{2}$ and $Z^{\overline A,k}_t=Z^{\widehat A,k}_t=0$. In order to obtain a solution to \eqref{B-trivial-infinite}, we only need to solve the following BSDE on $[0,T_k]$  
\begin{equation}\label{B-trivial-finite}
\left\{\begin{split}
-d\overline B^{k}_{t}=&~\left\{\widetilde{G}^k(t, \overline B^{k}_t)+\widetilde{\varphi}^{k}_t\right\}\,dt - Z^{\overline B,k}_{t}\,dW_t,\quad &t\in[0,T_k), \\
\overline B_{T_k}^{k} =&~ 0,
\end{split}\right.
\end{equation}
and then let $\overline B^{k}=Z^{\overline B,k}=0$ on $(T_k,\infty).$ The next lemma provides an a priori estimate for $\overline B^k$ and $Z^{\overline B,k}$, which will be used in the convergence result in Theorem \ref{thm:B}. 
\begin{lemma}\label{prop:Bk}
	The infinite horizon BSDE \eqref{B-trivial-infinite} admits a unique solution $(\overline B^k,Z^{\overline B,k})\in  S^{p,K}(0,\infty)\times\mathcal M^{p,K}(0,\infty)$.	Moreover, for each $-\frac{\phi}{2}< K \leq\frac{\phi}{2}$ and $p\geq2,$ we have the following estimate	
	\begin{equation}\label{estimate-ZBk}
	\mathbb E\left[\sup_{t\geq0}e^{-pKt}|\overline B^k_t|^p\right]+\mathbb E\left[\left(\int^\infty_0 e^{-2Kr}|Z^{\overline B,k}_r|^2\,dr\right)^{p/2}\right]\leq c \mathbb E\left[ \left(\int^\infty_0 e^{-(K+\frac{\phi}{2})r} | Z^{\overline A,k}_r  |^2\,dr     \right)^{p/2}\right]<\infty.
	\end{equation}
\end{lemma}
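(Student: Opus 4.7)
The strategy is to use the linear-in-$B$ structure of \eqref{B-trivial-infinite} to build a solution on $[0,T_k]$ explicitly, extend trivially by zero to $(T_k,\infty)$, and then read off \eqref{estimate-ZBk} from Proposition \ref{prop:estimate1} via one Cauchy--Schwarz bound on the source. First I would write $\widetilde G^k(t,B)=-\alpha_t B$ with
\[ \alpha_t := \frac{\phi}{2}+\frac{(\lambda_t+\rho_t\overline A^k_t)\rho_t}{a_t}. \]
The a priori bound $\overline A^k_t=\widehat A^k_t+\gamma/2\in[0,\gamma/2]$ from Lemma \ref{prop:Ak-infinite}, combined with $\rho,\lambda\geq 0$, gives $\alpha_t\geq\phi/2$, hence weak monotonicity of $\widetilde G^k$ with constant $\mu=\phi/2$. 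On $[0,T_k]$ the Feynman--Kac representation
\[ \overline B^k_t=\mathbb E\!\left[\int_t^{T_k}e^{-\int_t^s\alpha_u\,du}\widetilde\varphi^k_s\,ds\,\Big|\,\mathcal F_t\right], \]
paired with the martingale representation theorem, yields $(\overline B^k,Z^{\overline B,k})$ on $[0,T_k]$. Because Lemma \ref{prop:Ak-infinite} sets $Z^{\overline A,k}\equiv 0$ on $(T_k,\infty)$, the source $\widetilde\varphi^k$ vanishes there, and the extension $\overline B^k=Z^{\overline B,k}=0$ on $(T_k,\infty)$ is consistent; this also makes the boundary condition $\lim_{T\to\infty}\mathbb E[e^{-pKT}|\overline B^k_T|^p]=0$ required by Proposition \ref{prop:estimate1} trivial.

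The heart of the proof is a Cauchy--Schwarz bound on $\widetilde\varphi^k$. Using $|\widetilde\varphi^k_t|=\tfrac{2}{\gamma}|\sigma_t|e^{-\phi t/2}|Z^{\overline A,k}_t|$, the boundedness of $\sigma$, and the symmetric splitting $e^{-Kr}e^{-\phi r/2}=e^{-(K+\phi/2)r/2}\cdot e^{-(K+\phi/2)r/2}$, I would estimate
\[ \int_0^\infty e^{-Kr}|\widetilde\varphi^k_r|\,dr\leq c\left(\int_0^\infty e^{-(K+\phi/2)r}\,dr\right)^{1/2}\!\left(\int_0^\infty e^{-(K+\phi/2)r}|Z^{\overline A,k}_r|^2\,dr\right)^{1/2}. \]
The first factor is finite iff $K+\phi/2>0$; raising to the $p$-th power, taking expectations, and invoking \eqref{estimate-ZAk} with $K':=(K+\phi/2)/2\in(0,\phi/2]$ gives both $\widetilde\varphi^k\in L^{p,K}(0,\infty)$ and a control of its $L^{p,K}$ norm by $\mathbb E[(\int_0^\infty e^{-(K+\phi/2)r}|Z^{\overline A,k}_r|^2\,dr)^{p/2}]<\infty$. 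Since $K\leq\phi/2=\mu$, Proposition \ref{prop:estimate1} then delivers \eqref{estimate-ZBk} exactly as stated.

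Uniqueness in $S^{p,K}(0,\infty)\times\mathcal M^{p,K}(0,\infty)$ follows from Proposition \ref{coro:Cauchy} applied to the difference of two solutions (same $\widetilde G^k$, same source $\widetilde\varphi^k$). The hard part will be the weight bookkeeping in Cauchy--Schwarz: the output weight on $|Z^{\overline A,k}|^2$ must be precisely $e^{-(K+\phi/2)r}$ so that (i) \eqref{estimate-ZAk} is directly applicable and (ii) the bare $dr$-integral converges; these twin constraints pin down exactly the admissible range $-\phi/2<K\leq\phi/2$ in the statement. A secondary nuisance is that $\lambda$ is only integrable, so $\alpha_t$ can be unbounded and standard Lipschitz BSDE existence theorems do not apply; the linearity of $\widetilde G^k$ in $B$ sidesteps this via the explicit Feynman--Kac representation.
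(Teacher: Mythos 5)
Your proof is correct and follows essentially the same skeleton as the paper's: weak monotonicity of $\widetilde G^k$ with constant $\mu=\tfrac{\phi}{2}$ deduced from $\overline A^k\in[0,\gamma/2]$ and boundedness of $\rho$, the identical Cauchy--Schwarz splitting $e^{-Kr}e^{-\phi r/2}=e^{-(K+\phi/2)r/2}\cdot e^{-(K+\phi/2)r/2}$ to show $\widetilde\varphi^k\in L^{p,K}(0,\infty)$ with norm controlled by $\mathbb E\bigl[\bigl(\int_0^\infty e^{-(K+\phi/2)r}|Z^{\overline A,k}_r|^2\,dr\bigr)^{p/2}\bigr]$ (this is where the range $-\tfrac{\phi}{2}<K\leq\tfrac{\phi}{2}$ enters, exactly as you say), trivial extension by zero beyond $T_k$, and conclusion via Proposition \ref{prop:estimate1}. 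The one divergence is the existence step on $[0,T_k]$: you construct the solution by hand via the Feynman--Kac representation plus the martingale representation theorem, whereas the paper simply invokes \cite[Theorem 4.2]{Briand03} for the linear BSDE \eqref{B-trivial-finite}. Your route works, provided you add the one-line Doob maximal inequality argument giving $\overline B^k\in S^{p,K}(0,\infty)$, which Proposition \ref{prop:estimate1} requires as a hypothesis before it returns the estimate \eqref{estimate-ZBk}. However, the justification you give for preferring this route is based on a false premise: your worry that $\alpha_t$ may be unbounded because $\lambda$ is only integrable is unfounded. Since $a_t=\gamma\rho_t+\lambda_t+\tfrac{\phi}{2}\gamma\geq\lambda_t$ and $a_t\geq\tfrac{\phi\gamma}{2}$, one has $\frac{(\lambda_t+\rho_t\overline A^k_t)\rho_t}{a_t}\leq\rho_t+\frac{\rho_t^2}{\phi}$, which is bounded because $\rho$ is bounded; so $\widetilde G^k$ is uniformly Lipschitz in $B$, and this is precisely the observation that lets the paper cite the standard existence theorem directly.
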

\begin{proof}
		Since $\overline A^{k}$ is $[0,\frac{\gamma}{2}]$-valued and $\rho$ is bounded, we obtain that 
	\[
	\Big(  \widetilde G(\cdot,y_1)-\widetilde G(\cdot,y_2)\Big)  \Big(   y_1-y_2\Big) \leq -\frac{\phi}{2}|y_1-y_2|^2
	\]
	and that 
	\[ 
	|\widetilde G(\cdot,y_1)-\widetilde G(\cdot,y_2)|\leq c|y_1-y_2|.
	\]
	Moreover, for any $p\geq2, K\in(-\frac{\phi}{2},\frac{\phi}{2}]$ , we have that
		\begin{equation*}
	\begin{split}
	~\mathbb E\left[  \left( \int_0^\infty e^{ -Kr} | \widetilde\varphi^k_r|\,dr   \right)^p   \right]\leq &~ c\mathbb E\left[  \left(   \int_0^\infty  e^{-(K+\frac{\phi}{2})r} |Z^{\overline A,k}_r| \,dr     \right)^p      \right] \\
	\leq&~c\mathbb E\left[ \left(  \int_0^\infty e^{-(K+\frac{\phi}{2})r} | Z^{\overline A,k}_r  |^2\,dr     \right)^{p/2}  \right]   \qquad (\textrm{by H\"older's inequality})\\
	<&~\infty   \qquad\qquad  (\textrm{by }\eqref{estimate-ZAk}),
	\end{split}
	\end{equation*}
	which implies that $\widetilde\varphi^k\in  L^{p,K}_{}(0,\infty)$. From \cite[Theorem 4.2]{Briand03}, we know that there exists a unique solution $(\overline B^k,Z^{\overline B,k})$ to the linear BSDE \eqref{B-trivial-finite} in $S^{p,K}_{ }(0,T_k)\times\mathcal M^{p,K}_{ }(0,T_k)$. Letting $\overline B^k=Z^{\overline B,k}=0$ on $(T_k,\infty)$, we have that the processes $(\overline B^k,Z^{\overline B,k})\in S^{p,K}(0,\infty)\times\mathcal M^{p,K}(0,\infty)$ solve the infinite horizon BSDE \eqref{B-trivial-infinite}. Together with the fact that $\lim\limits_{T\rightarrow \infty}\mathbb E\left[ e^{-pKT} |\overline B^{k}_T|^p \right]=0$, we obtain the required estimate \eqref{estimate-ZBk} by Proposition \ref{prop:estimate1}. 
\end{proof}
The next theorem shows the convergence result from \eqref{B-trivial-infinite} to \eqref{B-inf}.
\begin{theorem}\label{thm:B}
		Let $p=4$ and $K=-\frac{\phi}{4}$ in Lemma \ref{prop:Bk}. The sequence of solutions $\{(\overline B^{k},Z^{\overline B,k})\}_k$ admits a limit in $S^{4,-\frac{\phi}{4}}(0,\infty)\times \mathcal M^{4,-\frac{\phi}{4}}(0,\infty)$, denoted by $(\overline B,Z^{\overline B})$, which is the unique solution to the infinite horizon BSDE \eqref{B-inf}. In addition, we have the following estimate
\begin{equation}\label{estimate-ZB}
\mathbb E\left[\sup_{t\geq0}e^{\phi t}|\overline B_t|^4\right]+\mathbb E\left[\left(\int^\infty_0 e^{\frac{\phi r}{2}}|Z^{\overline B}_r|^2\,dr\right)^{2}\right]\leq c\mathbb E\left[ \left(  \int_0^\infty e^{-\frac{\phi}{4}r} | Z^{\overline A}_r  |^2\,dr     \right)^2   \right]<\infty.
\end{equation}	
\end{theorem}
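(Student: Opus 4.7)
The strategy mirrors that of Theorem \ref{thm:A}: show that $\{(\overline B^k, Z^{\overline B,k})\}_k$ is Cauchy in $S^{4,-\phi/4}(0,\infty)\times\mathcal M^{4,-\phi/4}(0,\infty)$, pass to the limit in \eqref{B-trivial-infinite} to obtain a solution of \eqref{B-inf}, derive the estimate \eqref{estimate-ZB} via Proposition \ref{prop:estimate1}, and obtain uniqueness from Proposition \ref{coro:Cauchy}. Applying Proposition \ref{coro:Cauchy} with $p=4$ and $K=-\phi/4$ is legitimate because $\widetilde G^k$ is linear in $B$ with weak monotonicity constant $\mu=\phi/2$ independent of $k$ (as shown in the proof of Lemma \ref{prop:Bk}), and $-\phi/4\leq\phi/2$. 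The Cauchy bound reads
\begin{equation*}
\begin{split}
&\mathbb E\left[\sup_{t\geq 0} e^{\phi t}|\overline B^m_t - \overline B^n_t|^4\right] + \mathbb E\left[\left(\int_0^\infty e^{\phi r/2}|Z^{\overline B,m}_r - Z^{\overline B,n}_r|^2\,dr\right)^2\right] \\
&\quad \leq c\,\mathbb E\left[\left(\int_0^\infty e^{\phi r/4}|\widetilde\varphi^m_r - \widetilde\varphi^n_r|\,dr\right)^4\right] + c\,\mathbb E\left[\left(\int_0^\infty e^{\phi r/4}\bigl|\widetilde G^m(r, \overline B^m_r) - \widetilde G^n(r, \overline B^m_r)\bigr|\,dr\right)^4\right],
\end{split}
\end{equation*}
and the task is to verify that both right-hand-side terms vanish as $m,n\to\infty$.

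For the forcing term, boundedness of $\sigma$ yields $|\widetilde\varphi^m_r - \widetilde\varphi^n_r|\leq c\, e^{-\phi r/2}|Z^{\overline A,m}_r - Z^{\overline A,n}_r|$, and Cauchy--Schwarz in $r$ upgrades the $L^1$-integrand to $L^2$, producing the bound $c\,\mathbb E[(\int_0^\infty e^{-\phi r/4}|Z^{\overline A,m}_r - Z^{\overline A,n}_r|^2\,dr)^2]$, which vanishes thanks to the $\mathcal M^{8,\phi/8}$-Cauchy property of $\{Z^{\overline A,k}\}$ from Theorem \ref{thm:A} (Jensen downgrades the 4th-moment control to the 2nd-moment one needed here). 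The driver-difference term is the crux: the explicit form of $\widetilde G^k$ gives $|\widetilde G^m(r,B) - \widetilde G^n(r,B)|\leq c|\overline A^m_r - \overline A^n_r||B|$, so I split $e^{\phi r/4} = e^{-\phi r/8}\cdot e^{3\phi r/8}$ and pull the $\overline A$-factor out as a sup, bounding the inner integral by $\bigl(\sup_r e^{-\phi r/8}|\overline A^m_r-\overline A^n_r|\bigr)\int_0^\infty e^{3\phi r/8}|\overline B^m_r|\,dr$. A Cauchy--Schwarz in $\omega$ then splits the expectation into two factors: $(\mathbb E[\sup_r e^{-\phi r}|\overline A^m_r-\overline A^n_r|^8])^{1/2}$, which vanishes by the $S^{8,\phi/8}$-Cauchy property from Theorem \ref{thm:A}, and $(\mathbb E[(\int_0^\infty e^{3\phi r/8}|\overline B^m_r|\,dr)^8])^{1/2}$, which must be bounded uniformly in $m$. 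For the latter, I extract another sup with exponent $\epsilon=\phi/16$ by writing $e^{3\phi r/8}=e^{-\epsilon r}\cdot e^{(3\phi/8+\epsilon)r}$, reducing it to $c\,\mathbb E[\sup_s e^{7\phi s/2}|\overline B^m_s|^8]$; the $S^{8,-7\phi/16}$-bound in Lemma \ref{prop:Bk} (valid since $-7\phi/16\in(-\phi/2,\phi/2]$) converts this into $c\,\mathbb E[(\int_0^\infty e^{-\phi r/16}|Z^{\overline A,m}_r|^2\,dr)^4]$, which is uniformly finite by Lemma \ref{prop:Ak-infinite} with $(p,K)=(8,\phi/32)$ together with the pointwise boundedness of $\varphi$.

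With the Cauchy property established, the limit $(\overline B, Z^{\overline B})\in S^{4,-\phi/4}\times\mathcal M^{4,-\phi/4}$ exists, and passing to the limit in the integrated form of \eqref{B-trivial-infinite}---using $(\overline A^k, Z^{\overline A,k})\to(\overline A, Z^{\overline A})$ from Theorem \ref{thm:A}---shows that $(\overline B, Z^{\overline B})$ solves \eqref{B-inf}. The condition $\lim_{T\to\infty}\mathbb E[e^{\phi T}|\overline B_T|^4]=0$ needed for Proposition \ref{prop:estimate1} follows by splitting $\overline B = (\overline B - \overline B^k) + \overline B^k$ and exploiting that $\overline B^k$ vanishes on $(T_k,\infty)$; Proposition \ref{prop:estimate1} with $p=4$, $K=-\phi/4$ then yields \eqref{estimate-ZB} directly. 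Uniqueness is immediate from Proposition \ref{coro:Cauchy} applied to two candidate solutions in this space: the identical linear driver forces the difference to vanish. The one delicate step throughout is the driver-difference term described above, where the small factor $|\overline A^m - \overline A^n|$ is controlled only under a damped weight while the partner $|\overline B^m|$ must be integrated against a growing exponential; balancing the two requires choosing the parameters in Lemmas \ref{prop:Ak-infinite} and \ref{prop:Bk} in a narrow window so that both the sup-moment of $\overline B^m$ and the $|Z^{\overline A,m}|^2$-integral controlling it remain uniformly finite---precisely what the freedoms $K\in(0,\phi]$ and $K\in(-\phi/2,\phi/2]$ in those two lemmas allow.
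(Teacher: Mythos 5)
Your proposal is correct and follows essentially the same route as the paper's proof: the same Cauchy argument via Proposition \ref{coro:Cauchy} with $p=4$, $K=-\frac{\phi}{4}$, the same weight-splitting of the driver-difference term (your two-step extraction with $\epsilon=\frac{\phi}{16}$ reproduces exactly the paper's exponents $e^{-\phi r/8}$, $e^{7\phi r/16}$, $e^{-\phi r/16}$ and the uniform bound on $\mathbb E[\sup_t e^{7\phi t/2}|\overline B^m_t|^8]$ via Lemma \ref{prop:Bk} with $(p,K)=(8,-\tfrac{7\phi}{16})$ and \eqref{estimate-ZAk}), the same splitting $\overline B=(\overline B-\overline B^k)+\overline B^k$ to verify the terminal decay, and the same appeal to Proposition \ref{prop:estimate1} for \eqref{estimate-ZB}. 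No gaps.
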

\begin{proof}
For any $m>n$, we denote by $(\overline B^{m},Z^{\overline B,m})$ and $(\overline B^{n},Z^{\overline B,n})$ the solutions to the BSDEs \eqref{B-trivial-finite} with $k=m$ and $n$, respectively. By Proposition \ref{coro:Cauchy}, we get that
\begin{equation*}
	\begin{split}
		&~\mathbb E\left[\sup_{t\geq0}e^{\phi t}|\overline B^{m}_t-\overline B^{n}_t|^4\right]+\mathbb E\left[\left(\int^\infty_0 e^{\frac{\phi r}{2}}|Z^{\overline B,m}_r-Z^{\overline B,n}_r|^2\,dr\right)^{2}\right]\\
		\leq&~ c\left(\mathbb E\left[\left(\int^\infty_0 e^{\frac{\phi r}{4}}|\widetilde\varphi^{m}_r-\widetilde\varphi^{n}_r|\,dr\right)^4\right]+\mathbb E\left[\left(\int^\infty_0 e^{\frac{\phi r}{4}}|(\overline A^{m}_{r}-\overline A^{n}_{r})\overline B^{m}_r|\,dr\right)^4\right]\right)\\
		\leq&~ c\left(\mathbb E\left[ \left( \int_0^\infty e^{-\frac{\phi r}{4}}  \left| Z^{\overline A,m}_r- Z^{\overline A,n}_r        \right|^2 \,dr \right)^4    \right]^{1/2}+\mathbb E\left[\left(\sup_{t\geq0}e^{-\frac{\phi t}{8}}|\overline A^{m}_{t}-\overline A^{n}_{t}|\sup_{t\geq0}e^{\frac{7\phi t}{16}}|\overline B^{m}_t|\int^\infty_0 e^{-\frac{\phi r}{16}}\,dr\right)^4\right]\right)\\
			\leq&~ c\left(\mathbb E\left[ \left( \int_0^\infty e^{-\frac{\phi r}{4}}  \left| Z^{\overline A,m}_r- Z^{\overline A,n}_r        \right|^2 \,dr \right)^4    \right]^{1/2}+\mathbb E\left[\sup_{t\geq0}e^{-\phi t}|\overline A^{m}_{t}-\overline A^{n}_{t}|^8\right]^{1/2}\cdot\mathbb E\left[\sup_{t\geq0}e^{\frac{7\phi t}{2}}|\overline B^m_t|^8\right]^{1/2}\right).
	\end{split}
\end{equation*}
Recalling the estimate \eqref{estimate-ZBk} with $K=-\frac{7\phi}{16}$ and $p=8$, we have
\begin{equation*}
	\begin{split}
\sup_{m}\mathbb E\left[\sup_{t\geq0}e^{\frac{7\phi t}{2}}|\overline B^m_t|^8\right]\leq c \sup_m\mathbb E\left[ \left(\int^\infty_0 e^{-\frac{\phi r}{16}} | Z^{\overline A,m}_r  |^2\,dr\right)^{4}\right]<\infty,
	\end{split}
\end{equation*}
where the last inequality is obtained by the estimate \eqref{estimate-ZAk} by noting that the constant $c$ in \eqref{estimate-ZAk} is independent of $k$.

By applying Proposition \ref{coro:Cauchy} to $(\overline A^k,Z^{\overline A,k} )$ (refer to the estimate \eqref{estimate-cauchyA}), $\{\overline B^{k}, Z^{\overline B,k} \}_k$ is a Cauchy sequence in $S^{4,-\frac{\phi}{4}}(0,\infty)\times \mathcal M^{4,-\frac{\phi}{4}}(0,\infty)$. It is easy to show that the limit $(\overline B,Z^{\overline B})$ satisfies the BSDE \eqref{B-inf}. Finally, note that
\[
\mathbb E\left[ e^{  \phi T  }|\overline B_T|^4  \right]\leq 	c\left(\mathbb E\left[ \sup_{t\geq 0}e^{  \phi t  }|\overline B^{k}_t-\overline B_t|^4  \right] + 	\mathbb E\left[ e^{  \phi T  }|\overline B^{k}_T|^4  \right]\right).
\]
Letting $T\rightarrow\infty$ first and then $k\rightarrow\infty$ we have $\lim\limits_{T\rightarrow\infty}	\mathbb E\left[ e^{  \phi T  }|\overline B_T|^4  \right]=0$. Thus, 
Proposition  \ref{prop:estimate1} yields the desired estimate \eqref{estimate-ZB} by noting the definition of $\widetilde\varphi$. 
		\end{proof}
	Since the sequence $\{T_k\}$ is arbitrarily chosen and the solution of \eqref{B-inf} is unique, we can obtain that the solution of the BSDE \eqref{B-T} converges to the solution of the BSDE in \eqref{B-inf} as $T\rightarrow \infty.$ 
		%%%%%%%%%%%%%%%%%%%%%%%%%%%%%%%
		%%%%%%%%%%%%%%%%%%%%%%%%%%%%%%%
		\subsubsection{The limit of $C^{(T)}$}\label{sec:convergence-C}
		From the analysis above, we have the convergence from $(\overline A^{(T)},Z^{\overline A,(T)})$ and $(\overline B^{(T)},Z^{\overline B,(T)})$ to $(\overline A,Z^{\overline A})$ and $(\overline B,Z^{\overline B})$, respectively, as $T\rightarrow \infty.$ Now we are ready to show the convergence result of $C^{(T)}$.
		
		Let
		$$
		h^{(T)}_t:=e^{-\phi t}\sigma^2_t\frac{2\overline A^{(T)}_{t}-\gamma}{2\gamma^2}-\frac{\rho^2_t(\overline B^{(T)}_{t})^2}{4(\gamma\rho_t+\frac{\phi}{2}\gamma+\lambda_t)} + e^{-\frac{\phi t}{2}}\sigma_t\frac{Z^{\overline B,(T)}_{t}}{\gamma},
		$$
		and
		$$
		h_t:=e^{-\phi t}\sigma^2_t\frac{2\overline A_{t}-\gamma}{2\gamma^2}-\frac{\rho^2_t(\overline B_{t})^2}{4(\gamma\rho_t+\frac{\phi}{2}\gamma+\lambda_t)}+e^{-\frac{\phi t}{2}}\sigma_t\frac{Z^{\overline B}_{t}}{\gamma}. 
		$$
		From the BSDE \eqref{C-T}, we trivially extend $C^{(T)}$ from $[0,T]$ to $[0,\infty)$ as follows 
		\begin{equation*}
			\left\{\begin{split}
				C^{(T)}_t=&~\mathbb E\left[ \left. \int^T_t h^{(T)}_s\,ds\right|\mathcal F_t\right],\quad &t\in[0,T),\\
				C^{(T)}_t=&~0,\quad &t\in[T,\infty).
			\end{split}\right.
		\end{equation*}
Thus, $Z^{C,(T)}$ is also trivially extended to $[0,\infty)$.
  
		Define $C$ as
		\begin{equation}\label{def:C-martingale}
			C_t=\mathbb E\left[ \left.  \int^\infty_t h_s\,ds\right|\mathcal F_t\right].
		\end{equation}
		The next theorem shows the wellposedness of infinite horizon BSDE for $C$ and the convergence from $C^{(T)}$ to $C$.
		\begin{theorem}\label{thm:C}
			The infinite horizon BSDE \eqref{C-inf} admits a unique solution $(C,Z^{C})$ in $S^{2,0}(0,\infty)\times \mathcal M^{2,0}(0,\infty)$. Moreover, $( C^{(T)},Z^{C,(T)}     )$ converges to $(C,Z^C)$ in $S^{2,0}(0,\infty)\times \mathcal M^{2,0}(0,\infty)$.
		\end{theorem}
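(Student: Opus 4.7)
The plan is to exploit the linearity of the BSDE \eqref{C-inf} (its driver is simply $h$, with no explicit dependence on $(C,Z^C)$) and reduce everything to integrability and convergence properties of the forcing term $h$. The first task is to check that $\int_0^\infty|h_s|\,ds\in L^2(\mathbb P)$. The first summand of $h$ is deterministically bounded by $ce^{-\phi t}$ since $\sigma$ is bounded and $\overline A_t\in[0,\gamma/2]$. The second summand, of order $|\overline B_t|^2$, is controlled via $\mathbb E[\sup_t e^{\phi t}|\overline B_t|^4]<\infty$ from \eqref{estimate-ZB}. The third summand, involving $e^{-\phi t/2}Z^{\overline B}_t$, is handled by Cauchy--Schwarz in time:
\[
\int_0^\infty e^{-\phi s/2}|Z^{\overline B}_s|\,ds\leq\Bigl(\int_0^\infty e^{-3\phi s/2}\,ds\Bigr)^{1/2}\Bigl(\int_0^\infty e^{\phi s/2}|Z^{\overline B}_s|^2\,ds\Bigr)^{1/2},
\]
and the second factor has finite second moment by \eqref{estimate-ZB}. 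With $\int_0^\infty|h_s|\,ds\in L^2(\mathbb P)$ in hand, setting $C_t:=\mathbb E[\int_t^\infty h_s\,ds\mid\mathcal F_t]$ and applying the martingale representation theorem to the square-integrable martingale $t\mapsto\mathbb E[\int_0^\infty h_s\,ds\mid\mathcal F_t]$ yields $Z^C\in\mathcal M^{2,0}(0,\infty)$ such that $(C,Z^C)$ solves \eqref{C-inf}. Doob's inequality gives $C\in S^{2,0}(0,\infty)$, and $\mathbb E[|C_T|^2]\to 0$ follows by dominated convergence on the conditional expectation. Uniqueness among solutions in $S^{2,0}\times\mathcal M^{2,0}$ with this tail property follows by applying Proposition \ref{coro:Cauchy} to the difference with $G^1=G^2=0$, $\varphi^1=\varphi^2=h$, $\mu=K=0$, $p=2$.

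For the convergence, observe that the trivial extension of $(C^{(T)},Z^{C,(T)})$ solves the same type of linear infinite-horizon BSDE with forcing $h^{(T)}\mathbf{1}_{[0,T]}$ and with $C^{(T)}_t=0$ for $t\geq T$. Applying Proposition \ref{coro:Cauchy} with $G^1=G^2=0$, $p=2$, $K=0$ to $(C^{(T)}-C,Z^{C,(T)}-Z^C)$ yields
\[
\mathbb E\!\left[\sup_{t\ge 0}|C^{(T)}_t-C_t|^2\right]+\mathbb E\!\left[\int_0^\infty|Z^{C,(T)}_s-Z^C_s|^2\,ds\right]\le c\,\mathbb E\!\left[\Bigl(\int_0^\infty|h^{(T)}_s\mathbf{1}_{[0,T]}(s)-h_s|\,ds\Bigr)^{\!2}\right].
\]
I split the right-hand side as $\int_0^T|h^{(T)}_s-h_s|\,ds+\int_T^\infty|h_s|\,ds$; the tail vanishes in $L^2$ by dominated convergence (using the integrability from the first paragraph). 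For the bulk, I control each of the three pieces of $h^{(T)}-h$: the $\overline A$-difference by $\sup_t e^{-\phi t/4}|\overline A^{(T)}_t-\overline A_t|\cdot\int_0^\infty e^{-3\phi t/4}\,dt$, which vanishes in $L^2$ via Jensen's inequality and the $S^{8,\phi/8}$-convergence of Theorem \ref{thm:A}; the $|\overline B|^2$-difference through the factorisation $|\overline B^{(T)}|^2-|\overline B|^2=(\overline B^{(T)}+\overline B)(\overline B^{(T)}-\overline B)$ coupled with Cauchy--Schwarz in time, the uniform bound \eqref{estimate-ZBk}, and the $S^{4,-\phi/4}$-convergence from Theorem \ref{thm:B}; and the $Z^{\overline B}$-difference by the same Cauchy--Schwarz device as above applied to $Z^{\overline B,(T)}-Z^{\overline B}$, whose $\mathcal M^{4,-\phi/4}$-convergence is part of Theorem \ref{thm:B}.

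The most delicate step will be the $Z^{\overline B}$-term, because convergence of $Z^{\overline B,(T)}$ holds only in the weight $e^{\phi t/2}$ while $h$ couples $Z^{\overline B}$ with the decaying factor $e^{-\phi t/2}$; the Cauchy--Schwarz trick turns this weight mismatch into an integrable exponential $e^{-3\phi t/2}$, and then Jensen reduces an $L^2$-in-$\omega$ norm of an $L^1$-in-$t$ quantity to an $L^2$-in-$\omega$ norm of an $L^2$-in-$t$ quantity, which is precisely what Theorem \ref{thm:B} controls. A secondary care point is the uniform-in-$T$ control of $\sup_t e^{\phi t}|\overline B^{(T)}_t|^4$ needed for the cross term in the $|\overline B|^2$-difference; this follows from the $k$-uniform estimate \eqref{estimate-ZBk} of Lemma \ref{prop:Bk}.
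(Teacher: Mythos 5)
Your proposal is correct and, in its analytic core, coincides with the paper's proof: existence is obtained exactly as in the paper (square-integrability of $\int_0^\infty |h_s|\,ds$ from the boundedness of $\overline A$, the bound on $\sup_t e^{\phi t}|\overline B_t|^4$, and a weighted Cauchy--Schwarz treatment of the $e^{-\phi t/2}Z^{\overline B}$-term, followed by the martingale representation theorem and Doob's inequality), and your three-piece control of $h^{(T)}-h$ (weighted sup-norm for the $\overline A$-difference, the factorisation $(\overline B^{(T)})^2-\overline B^2=(\overline B^{(T)}+\overline B)(\overline B^{(T)}-\overline B)$ together with the $k$-uniform bound \eqref{estimate-ZBk} for the cross term, and the weight-splitting Cauchy--Schwarz for the $Z^{\overline B}$-difference) reproduces the paper's estimates essentially line for line, including the tail term $\mathbb E[(\int_T^\infty|h_s|\,ds)^2]\to 0$. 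The one real difference is packaging of the stability estimate: the paper does not invoke Proposition \ref{coro:Cauchy} here at all; it bounds $\mathbb E[\sup_{0\le t\le T}|C^{(T)}_t-C_t|^2]$ directly via Doob's maximal inequality applied to the conditional-expectation representations, and then gets $\mathbb E[\int_0^\infty|Z^{C,(T)}_r-Z^C_r|^2\,dr]\to 0$ separately from the BSDE representation of the difference, whereas you obtain both at once from Proposition \ref{coro:Cauchy} with $G^1=G^2=0$. That shortcut has a hypothesis mismatch you should patch: with $G\equiv 0$ the weak monotonicity (H2) holds only with $\mu=0$, while (H2) as stated requires $\mu>0$, so the proposition does not literally apply; either remark that its proof goes through verbatim when $K=\mu=0$ (the monotonicity term is simply absent when the driver does not depend on the solution), or argue directly as the paper does. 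On the other hand, your uniqueness step is more careful than the paper's: the paper never addresses uniqueness explicitly, and your observation that uniqueness should be understood among solutions satisfying the tail condition $\mathbb E[|C_T|^2]\to 0$ (without which constants could be added to $C$) is a genuine clarification of what the theorem asserts.
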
 
		\begin{proof}
			First, the conditional expectation \eqref{def:C-martingale} is well defined. Indeed, due to the boundedness of $\overline A^{}$ and Theorem \ref{thm:B}, it holds that 
		\begin{equation*}\label{estimate-h}
				\begin{split}
					\mathbb E\left[\left(\int^\infty_0 \left|h_s\right|\,ds\right)^2\right]
					\leq &~c\left(\mathbb E\left[\left(\int^\infty_0 e^{-\phi s}\,ds\right)^2\right]+\mathbb E\left[\sup_{t\geq 0}e^{\frac{\phi t}{2}}\left|\overline B^{}_{t}\right|^4\right]+\mathbb E\left[\int^\infty_0e^{\frac{\phi t}{2}}\left(Z^{\overline B}_{s}\right)^2\,ds\right]\right)\\
					<&~\infty,
				\end{split}	
			\end{equation*}
			and that
			\[
			\mathbb E\left[  \sup_{t\geq 0} C^2_t     \right]\leq c\mathbb E\left[ \left( \int_0^\infty |h_s|\,ds\right)^2\right] <\infty. 
			\]
			Thus, $M_\cdot:=C_\cdot+\int_0^\cdot h_s\,ds$ is square integrable martingale. The martingale representation theorem implies the existence of a stochastic process $Z^C\in\mathcal M^{2,0}_{}(0,\infty)$ such that 
			$$M_t=M_0+\int^t_0 Z^C_s\,dW_s.$$
			Thus, for any $0\leq t\leq u<\infty,$ we have that
			\begin{equation*}
				C_u-C_t=-\int^u_t h_s\,ds+\int^u_t Z^C_s \,dW_s,
			\end{equation*}
			from which we can conclude that $(C,Z^C)$ is a solution to the BSDE \eqref{C-inf}.

			By the definition of $C^{(T)}$ and $C$, we have that
			\begin{equation*}
				\begin{split}
					%		&~\left| C^{(T)}_t-C_t\right|^2 \leq 2\left(\mathbb E\left[\left. \int^T_0 \left| h^{(T)}_s-h_s\right| \,ds\right|\mathcal F_t\right]\right)^2+ 2\left(	\mathbb E\left[\left.	\int_T^\infty h_s\,ds \right|\mathcal F_t   \right] \right)^2 \\
					&~\mathbb E\left[\sup_{0\leq t\leq T}\left| C^{(T)}_t-C_t\right|^2\right] \leq 2\mathbb E\left[\sup_{0\leq t\leq T} \left(\mathbb E\left[ \left. \int^T_0 \left| h^{(T)}_s-h_s\right| \,ds\right|\mathcal F_t\right]\right)^2 \right]+2\mathbb E\left[\sup_{0\leq t\leq T}\left(	\mathbb E\left[\left.	\int_T^\infty h_s\,ds \right|\mathcal F_t   \right] \right)^2\right]\\
					\leq&~c \mathbb E\left[  \left( \int^T_0 \left| h^{(T)}_s-h_s\right| \,ds \right)^2 \right]+c\mathbb E\left[ \left( \int_T^\infty |h_s|\,ds \right)^2  \right]\qquad\qquad  (\textrm{by Doob's maximal inequality})\\
					\leq&~ c\mathbb E\left[  \left(  \int_0^\infty e^{-\phi t}|	\overline A_t^{(T)}  - \overline A_t	|\,dt  \right)^2  \right] +c \mathbb E\left[ \left( \int_0^\infty   \left| \left(\overline B^{(T)}_t\right)^2-\overline B_t^2  \right|\,dt      \right)^2    \right] \\
					&~+ c\mathbb E\left[   \left(  \int_0^\infty  e^{ -\frac{\phi}{2}t }  |  Z^{\overline B,(T)}_t - Z^{\overline B}_t			|\,dt     \right)^2     \right]		+c\mathbb E\left[ \left( \int_T^\infty |h_s|\,ds \right)^2  \right]		\\
					\leq&~c\mathbb E\left[    \sup_{t\geq 0}  e^{-\frac{\phi t}{4}} \left|	\overline A^{(T)}_t-\overline A_t	\right|^2  \left( \int_0^\infty e^{-\frac{7\phi t}{8}}\,dt\right)^2 \right] + c \mathbb E\left[  \sup_{t\geq 0} \left(  e^{ \frac{\phi t}{4}  }\left|\overline B^{(T)}_t-\overline B_t\right|    \right)^2 \left(     \int_0^\infty e^{-\frac{\phi t }{4}  } \left|\overline B^{(T)}_t+\overline B_t\right|\,dt  \right)^2  \right]  \\
					&~+ c\mathbb E\left[    \int_0^\infty  e^{\frac{\phi}{2}t} \left|	Z^{\overline B,(T)}_t - Z^{\overline B}_t	\right|^2\,dt \left(\int_0^\infty e^{-\frac{3\phi}{2}t}\,dt    \right)^2   \right]    +c\mathbb E\left[ \left( \int_T^\infty |h_s|\,ds \right)^2  \right] \qquad (\textrm{by H\"older's inequality})\\
					\leq&~c\mathbb E\left[   \sup_{t\geq 0} e^{-\phi t} \left|	\overline A^{(T)}_t-\overline A_t	\right|^8 \right]^{1/4} + c \mathbb E\left[  \sup_{t\geq 0}  e^{ \phi t  }\left|\overline B^{(T)}_t-\overline B_t\right|^4\right]^{1/2} \mathbb E\left[\left(     \int_0^\infty e^{-\frac{\phi t}{4}   } \left|\overline B^{(T)}_t+\overline B_t\right|\,dt  \right)^4  \right]^{ 1/2 }  \\
					&~+ c\mathbb E\left[   \left( \int_0^\infty  e^{\frac{\phi t}{2}} \left|	Z^{\overline B,(T)}_t - Z^{\overline B}_t	\right|^2\,dt  \right)^2 \right]^{1/2}  +c\mathbb E\left[ \left( \int_T^\infty |h_s|\,ds \right)^2  \right]  \qquad (\textrm{using again H\"older's inequality})\\
					\leq&~ c\mathbb E\left[   \sup_{t\geq 0} e^{-\phi t} \left|	\overline A^{(T)}_t-\overline A_t	\right|^8 \right]^{1/4}+ c \mathbb E\left[  \sup_{t\geq 0}  e^{ \phi t}\left|\overline B^{(T)}_t-\overline B_t\right|^4\right]^{1/2} \mathbb E\left[	\sup_{t\geq 0} e^{\phi t} \left(\left| \overline B^{(T)}_t\right|^4+\left|\overline B_t\right|^4\right)			\left(     \int_0^\infty e^{-\frac{\phi t}{2}  }\,dt    \right)^4  \right]^{ 1/2 }  \\
					&~+ c\mathbb E\left[   \left( \int_0^\infty  e^{\frac{\phi t}{2}} \left|	Z^{\overline B,(T)}_t - Z^{\overline B}_t	\right|^2\,dt  \right)^2 \right]^{1/2} +c\mathbb E\left[  \left(\int_T^\infty |h_s|\,ds\right)^2    \right]  \\
					\leq&~ c\mathbb E\left[   \sup_{t\geq 0} e^{-\phi t} \left|	\overline A^{(T)}_t-\overline A_t	\right|^8 \right]^{1/4}  +   c \mathbb E\left[  \sup_{t\geq 0}  e^{\phi t }\left|\overline B^{(T)}_t-\overline B_t\right|^4\right]^{1/2} \\
					&~ + c\mathbb E\left[   \left( \int_0^\infty  e^{\frac{\phi t}{2}} \left|	Z^{\overline B,(T)}_t - Z^{\overline B}_t	\right|^2\,dt  \right)^2 \right]^{1/2}  +c\mathbb E\left[ \left( \int_T^\infty |h_s|\,ds \right)^2  \right]  \\
					\xrightarrow{T\rightarrow\infty}&~ 0 ,
				\end{split}
			\end{equation*}
			where the last convergence is due to Theorem \ref{thm:A} and Theorem \ref{thm:B}. 
   
   From the BSDE representation of $C^{(T)}$ and $C$, we have
			\begin{equation*}
				\begin{split}
					\mathbb E\left[  \int_0^\infty \left| Z^{C,(T)}_r - Z^C_r		\right|^2\,dr     \right]\leq c\mathbb E\left[	\sup_{t\geq 0}\left|	C^{(T)}_t- C_t		\right|^2	\right] + c\mathbb E\left[ \left(\int_0^\infty  |			h^{(T)}_r-h_r|\,dr \right)^2     \right]\xrightarrow{T\rightarrow\infty} 0.
				\end{split}
			\end{equation*}
		\end{proof}

		%%%%%%%%%%%%%%%%%%%%%%%%%
		%%%%%%%%%%%%%%%%%%%%%%%%%
		\subsection{Convergence of control problems}
		
		With the convergence result in Section \ref{sec:convergence-ABC}, we have the convergence of the finite horizon stochastic control \eqref{cost-T-continuous}-\eqref{state-T-continuous}. In particular, we get the convergence of the value fucntion and the optimal position. 
		Moreover, we will verify that the limit can characterize the value function as well as the optimal strategy of the infinite horizon stochastic control problem \eqref{cost-inf-continuous}-\eqref{state-inf-continuous}. 
		
		\begin{theorem}\label{thm-inf}
			Let the {\bf Standing Assumption} be satisfied. Let $(A,B,C)$ be the limit in Section \ref{sec:convergence-ABC}.

			\begin{itemize}
				\item[i)] Define 
				\begin{equation} \label{value-inf}
					V(t,\mathcal X):=\mathcal X^\top A_t\mathcal X+\mathcal X^\top B_t+C_t. 
				\end{equation}
				Then $V(t,\mathcal X)$ is the value function of \eqref{cost-inf-continuous}-\eqref{state-inf-continuous} starting at $t$ with initial state $\mathcal X$.
				\item[ii)] Define $\Delta\widetilde X^*$ as  
				\begin{equation}\label{jump-inf}
					\Delta  \widetilde{X}^*_{t}= \frac{I^{A}_t}{a_t} {\mathcal X}_{t-}+\frac{I^{B}_t}{a_t},
				\end{equation}
				where the processes $I^A,I^B$ are given by 
				\begin{equation*}
					I^{  A}=\left(\begin{array}{c}
						-\rho \overline A_{}-\lambda \\
						- \frac{ \rho }{\gamma}\overline A_{}+\rho +\frac{\phi}{2}
					\end{array}\right)^{\top},
					\qquad
					I^{  B}=- \frac{\rho }{2}\overline B_{}.
				\end{equation*}
				Moreover, for each $s\geq t$ define 
				\begin{equation}\label{state-inf}
					\widetilde X^*_s=\left(1-\frac{\lambda_s+\rho_s \overline A_{ s}}{a_s}\right) \widetilde P^*_s-\frac{\rho_s \overline B_{s}}{2a_s} \quad \text{and} \quad   \widetilde Y^*_s=\frac{\gamma(\lambda_s+\rho_s \overline A_{s})}{a_s} \widetilde P^*_s+\frac{\gamma\rho_s \overline B_{s}}{2a_s},  
				\end{equation}
				where $\widetilde P^*$ satisfies the dynamics:
				\[
				d\widetilde P^*_s=-\left(\frac{\phi}{2}+\frac{\rho_s(\lambda_s+\rho_s \overline A_{s})}{a_s}\right)\widetilde P^*_s\,ds-\frac{ \rho^2_s \overline B_{s}}{2a_s}\,ds + \frac{\sigma_s}{\gamma} e^{  -\frac{1}{2}\phi s  }  \,dW_s,\quad \widetilde P^*_0=x_0.
				\]
				If $\widetilde X^*$ is a c\`adl\`ag semimartingale, then $( X^*_\cdot, Y^*_\cdot):=(e^{\frac{\phi}{2}\cdot}\widetilde X^*_\cdot, e^{\frac{\phi}{2}\cdot} \widetilde Y^*_\cdot)$ is the optimal state of the stochastic control problem with infinite horizon \eqref{cost-inf-continuous}-\eqref{state-inf-continuous}. 
				\item[iii)] Recall the constant $L$ in \eqref{ass:lambda}. If we further assume $\frac{\phi}{2}<L<\phi$, then we have 
					\begin{equation}\label{convergence-X}
						\lim_{T\rightarrow\infty}\mathbb E\left[ \sup_{0\leq t\leq T}e^{-\phi t}|X^{*,(T)}_t- X^*_t|^2 \right] =0.
					\end{equation}
			\end{itemize}
		\end{theorem}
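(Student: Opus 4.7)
\textbf{Plan for Theorem \ref{thm-inf}.} For parts (i) and (ii), the plan is to mirror the verification argument of Proposition \ref{thm:finite-control} on the infinite interval. I would apply It\^o's formula to $\Phi_s := \mathcal X_s^\top A_s \mathcal X_s + \mathcal X_s^\top B_s + C_s$ along an arbitrary $X \in \mathscr A$ on $[0, T]$, substitute the BSDE dynamics \eqref{A-inf}--\eqref{C-inf}, and recognise the resulting integrand (together with the cost integrand in \eqref{cost-inf-continuous}) as a perfect square in $\Delta \widetilde X_s - \Delta \widetilde X^*_s$ plus $\Phi_0 - \Phi_T$. Letting $T \to \infty$, the admissibility conditions in \eqref{admissibility} combined with the weighted estimates on $(A,B,C)$ from Theorems \ref{thm:A}--\ref{thm:C} would force $\mathbb E[e^{-\phi T}\Phi_T] \to 0$, so $J(X) \geq V(0, \mathcal X)$ with equality attained uniquely by $\widetilde X^*$ whenever this candidate lies in $\mathscr A$; this also yields (i).

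For part (iii), I rewrite the claim as $\lim_{T \to \infty}\mathbb E[\sup_{0 \leq t \leq T} |\widetilde X^{*,(T)}_t - \widetilde X^*_t|^2] = 0$. Subtracting \eqref{state-T-X} from \eqref{state-inf} gives the pointwise decomposition
\begin{equation*}
\widetilde X^{*,(T)}_s - \widetilde X^*_s = \Big(1 - \tfrac{\lambda_s + \rho_s \overline A^{(T)}_s}{a_s}\Big)\bigl(\widetilde P^{(T)}_s - \widetilde P^*_s\bigr) - \tfrac{\rho_s(\overline A^{(T)}_s - \overline A_s)}{a_s}\, \widetilde P^*_s - \tfrac{\rho_s(\overline B^{(T)}_s - \overline B_s)}{2 a_s}.
\end{equation*}
The first coefficient equals $(\gamma \rho_s + \tfrac{\phi}{2}\gamma - \rho_s \overline A^{(T)}_s)/a_s$, which is uniformly bounded (the $\lambda_s$-dependence cancels), while the coefficients in front of $\overline A^{(T)} - \overline A$ and $\overline B^{(T)} - \overline B$ are dominated by $c/a_s \leq 2c/(\phi \gamma)$. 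The last two terms are then estimated by H\"older's inequality together with the $S^{8, \phi/8}$- and $S^{4, -\phi/4}$-convergences from Theorems \ref{thm:A} and \ref{thm:B}, provided $\widetilde P^*$ has sufficient integrability in an appropriately weighted $S^p$-norm.

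The main obstacle will be controlling $\widetilde P^{(T)} - \widetilde P^*$ together with the weighted moments of $\widetilde P^*$. Both processes satisfy linear SDEs with the same diffusion coefficient $(\sigma_s / \gamma)\, e^{-\phi s/2}\, dW_s$, so the difference $\Delta_t := \widetilde P^{(T)}_t - \widetilde P^*_t$ satisfies a random linear ODE with zero initial condition and source terms driven by $(\overline A^{(T)} - \overline A)\widetilde P^*$ and $(\overline B^{(T)} - \overline B)$. The dissipative coefficient $\alpha_s := \tfrac{\phi}{2} + \rho_s(\lambda_s + \rho_s \overline A_s)/a_s \geq \tfrac{\phi}{2}$ lets me apply a pathwise Gr\"onwall bound producing $e^{-\phi(t-r)/2}$ factors that absorb weighted growth of the source terms. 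The strengthened hypothesis $\phi/2 < L < \phi$ enters precisely here: it guarantees that $\lambda$ (which appears in both $\alpha$ and $a$) has enough integrability against these $e^{-\phi r/2}$-type weights to render the resulting moment estimate finite, and in particular places $\widetilde P^*$ in the weighted $S^p$-space needed to pair via H\"older with the $S^{p,K}$-convergence of $\overline A^{(T)}$ coming from Theorem \ref{thm:A}. Combining these individual bounds and sending $T \to \infty$ completes the argument.
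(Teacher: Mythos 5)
Your plan for part (iii) is essentially the paper's own proof: the same three-term decomposition of $\widetilde X^{*,(T)}-\widetilde X^*$, the same representation of $\widetilde P^{(T)}-\widetilde P^*$ as the solution of a dissipative linear ODE with source terms $(\overline A^{(T)}-\overline A)\widetilde P^*$ and $\overline B^{(T)}-\overline B$, and the same H\"older pairing against the $S^{8,\phi/8}$- and $S^{4,-\phi/4}$-convergences of Theorems \ref{thm:A} and \ref{thm:B}. For parts (i)--(ii) you take a genuinely different route: a direct infinite-horizon verification (It\^o's formula applied to $\Phi_s=\mathcal X_s^\top A_s\mathcal X_s+\mathcal X_s^\top B_s+C_s$, completion of the square, transversality), whereas the paper never redoes the It\^o computation on $[0,\infty)$: it starts from the finite-horizon identity \eqref{cost-T-2} of Proposition \ref{thm:finite-control} and passes to the limit $T\to\infty$ term by term, using the convergence results of Section \ref{sec:convergence-ABC} together with dominated convergence along subsequences (the paper's Step 2). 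Your route avoids those subsequence extractions, but in exchange you must justify that the stochastic integrals appearing in the It\^o expansion are true martingales when $Z^{\overline A},Z^{\overline B},Z^C$ are only known to lie in weighted $\mathcal M^{p,K}$-spaces; both routes are workable.

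Two concrete issues. First, your transversality claim: admissibility \eqref{admissibility} constrains only $\widetilde X$, while $\Phi_T$ contains $\widetilde Y_T$ through $\mathcal X_T$, so admissibility plus estimates on $(A,B,C)$ do \emph{not} by themselves force $\mathbb E[\Phi_T]\to0$. You need the additional estimate $\mathbb E[\sup_{t\geq 0} e^{Lt}\widetilde Y^4_t]<\infty$ for every admissible strategy, derived from the state equation via the process $\widetilde P=\widetilde X+\frac{1}{\gamma}\widetilde Y$; this is exactly Step 1 of the paper's proof and is a necessary ingredient you omit. Relatedly, the theorem's conclusion in (ii) is conditional only on $\widetilde X^*$ being a c\`adl\`ag semimartingale, so the moment conditions defining $\mathscr A$ must be \emph{proved} for the candidate (the paper's Step 3), not assumed; your proviso ``whenever this candidate lies in $\mathscr A$'' yields a weaker statement, and part (i) also relies on this verification to show the infimum is attained. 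Second, your explanation of where $\frac{\phi}{2}<L<\phi$ enters is misattributed: it has nothing to do with integrability of $\lambda$, which never causes difficulty in these estimates because $\lambda$ always appears tempered by $a$ in denominators ($\lambda/a\leq 1$ and $\rho\lambda/a$ is bounded). The strengthened hypothesis is pure exponent matching in H\"older's inequality: Theorem \ref{thm:A} controls $\overline A^{(T)}-\overline A$ only up to the growing weight allowance $e^{\phi s/8}$, while $\widetilde P^*$ is controlled with weight $e^{Ls/4}$, so the leftover deterministic factor $e^{\phi s/8-Ls/4}$ must be integrable on $[0,\infty)$, which is precisely $L>\phi/2$. (The moment bound $\mathbb E[\sup_{t\geq 0}e^{Lt}(\widetilde P^*_t)^4]<\infty$ itself holds for any exponent $L<\phi$, with or without the strengthened hypothesis.)
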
		
		\begin{proof}%[Proof of Theorem \ref{thm-inf}]
			{\bf Step 1.}
			{\it For any admissible strategy $\widetilde X$, we will verify
			\begin{equation}\label{estimate-Y}
			\mathbb E\left[  \sup_{t\geq 0} e^{ Lt } | \widetilde Y_t |^4    \right]+\mathbb E\left[ \left(\int_0^\infty  e^{\frac{1}{2}Ls}\widetilde Y^2_s\,ds \right)^2  \right]<\infty,
			\end{equation}
			where we recall $L\in(0,\phi)$ is the constant appearing in the definition of the space of admissible strategies.
			}
			Let $\widetilde P=\widetilde X+\frac{1}{\gamma}\widetilde Y$. It holds that 
				\begin{equation*}
					\begin{split}
						d\widetilde P_s%=&~d\widetilde X_s+\frac{1}{\gamma}d\widetilde Y_s\\
						%		=&~ d\widetilde X_s+			\frac{1}{\gamma}\left\{	\left(-\frac{\phi}{2}\widetilde{Y}_s-\rho \widetilde{Y}_s-\frac{\phi\gamma}{2}\widetilde{X}_s\right)\,ds-\gamma\,d \widetilde{X}_t+\sigma_s e^{-\frac{\phi s}{2}}\,dW_s \right\}\\
						%		=&~	\frac{1}{\gamma} 	\left(-\frac{\phi}{2}\gamma P_s-\rho \widetilde{Y}_s \right)\,ds +\frac{1}{\gamma}\sigma_s e^{-\frac{\phi s}{2}}\,dW_s  \\
						%		=&~-\frac{\phi}{2} P_s- \rho_s ( P_s-\widetilde X_s)  \,ds   +	\frac{1}{\gamma}\sigma_s e^{-\frac{\phi s}{2}}\,dW_s  \\ 
						=  \left\{  -\left(    \frac{\phi}{2} + \rho_s     \right)\widetilde P_s + \rho_s\widetilde X_s \right\} \,ds+	\frac{1}{\gamma}\sigma_s e^{-\frac{\phi s}{2}}\,dW_s,\quad \widetilde P_0=x_0. 
					\end{split}
			\end{equation*}     		
			It\^o's formula implies that
			\begin{equation*}
				\begin{split}
					&~e^{\frac{1}{2}Lt}\widetilde P^2_t+\int_0^t\left(  \phi+2\rho_s-\frac{L}{2}       \right) e^{\frac{1}{2}Ls}\widetilde P^2_s\,ds    \\%=&~x^4_0+\int_0^t    K e^{Ks} \widetilde P^4_s\,ds + 4\int_0^t e^{Ks} P^3_s\,dP_s+ 6\int_0^t e^{Ks} \frac{1}{\gamma^2}\sigma^2_s e^{-\phi s} \widetilde P^2_s\,ds\\
					%	=&~x^4_0+\int_0^t    K e^{Ks} \widetilde P^4_s\,ds + 4\int_0^t e^{Ks} \widetilde P^3_s\left\{      \left\{  -\left(    \frac{\phi}{2} + \rho_s     \right)P_s + \rho_s\widetilde X_s \right\} \,ds+	\frac{1}{\gamma}\sigma_s e^{-\frac{\phi s}{2}}\,dW_s     \right\}+ 6\int_0^t e^{Ks} \frac{1}{\gamma^2}\sigma^2_s e^{-\phi s} P^2_s\,ds\\
					=&~x_0^2+ 2\int_0^t \rho_s e^{\frac{1}{2}Ls}\widetilde P_s\widetilde X_s\,ds + \frac{2}{\gamma}\int_0^t\sigma_se^{\frac{1}{2}Ls-\frac{1}{2}\phi s} \widetilde P_s\,dW_s+ \frac{1}{\gamma^2}\int_0^t e^{\frac{1}{2}Ls} \sigma^2_s e^{-\phi s} \,ds,
				\end{split}
			\end{equation*}
			which further implies by Young's inequality 
			\begin{equation}\label{verification-estimate-Y1}
				\begin{split}
					&~e^{\frac{1}{2}Lt}\widetilde P^2_t+\int_0^t\left(  \phi+2\rho_s-\frac{L}{2}       \right) e^{\frac{L}{2}s}\widetilde P^2_s\,ds \\
					\leq &~x_0^2+ \delta\int_0^t  e^{\frac{1}{2}Ls}\widetilde P^2_s \,ds + \frac{c}{\delta} \int_0^t e^{\frac{1}{2}Ls}\widetilde X^2_s\,ds + \frac{2}{\gamma}\int_0^t\sigma_se^{\frac{1}{2}Ls-\frac{1}{2}\phi s} \widetilde P_s\,dW_s+ c.
				\end{split}
			\end{equation}
			Taking $\mathbb E\left[ \sup_{t\geq 0} \cdots    \right]$ on both sides of squared  \eqref{verification-estimate-Y1} and using BDG's inequality, we have 
			\begin{equation*}
				\begin{split}
					&~	\mathbb E\left[ \sup_{t\geq 0} e^{Lt} \widetilde P^4_t     \right]+ \mathbb E\left[ \left(\int_0^\infty\left(  \phi+2\rho_s-\frac{L}{2}   -\delta     \right) e^{\frac{1}{2}Ls}\widetilde P^2_s\,ds \right)^2  \right] \\
					\leq&~  c\mathbb E\left[ \left( \int_0^\infty e^{\frac{1}{2}Ls}\widetilde X^2_s\,ds \right)^2  \right] + c\mathbb E\left[ \int_0^\infty e^{ (L-\phi)s   } \widetilde P^2_s\,ds  \right] +c\\
					\leq&~c\mathbb E\left[ \left( \int_0^\infty e^{\frac{1}{2}Ls}\widetilde X^2_s\,ds \right)^2  \right] + c\mathbb E\left[ \sup_{s\geq 0}e^{\frac{1}{2}Ls}\widetilde P^2_s \int_0^\infty e^{ (\frac{1}{2}L-\phi)s   }  \,ds  \right] +c\\
					\leq&~c\mathbb E\left[ \left( \int_0^\infty e^{\frac{1}{2}Ls}\widetilde X^2_s\,ds \right)^2  \right] + c\mathbb E\left[ \sup_{s\geq 0}e^{\frac{1}{2}Ls}\widetilde P^2_s   \right] +c\\
     \leq&~       c\mathbb E\left[ \left( \int_0^\infty e^{\frac{1}{2}Ls}\widetilde X^2_s\,ds \right)^2  \right] + \frac{1}{4}\mathbb E\left[ \sup_{s\geq 0}e^{Ls}\widetilde P^4_s   \right] +c  .     
				\end{split}
			\end{equation*}
			Thus,
			\[
			\mathbb E\left[ \sup_{t\geq 0} e^{Lt} \widetilde P^4_t     \right]+ \mathbb E\left[ \left(\int_0^\infty  e^{\frac{1}{2}Ls}\widetilde P^2_s\,ds \right)^2  \right] \leq c\mathbb E\left[ \left( \int_0^\infty e^{\frac{1}{2}Ls}\widetilde X^2_s\,ds \right)^2  \right]  +c<\infty,
			\]
			which implies \eqref{estimate-Y} by the definition of $\widetilde P$ and the admissible conditions for $\widetilde X$.

			{\bf Step 2.} {\it In this step, for any admissible strategy $\widetilde X$
			we will verify the convergence of the cost functional (up to a subsequence) as $T\rightarrow\infty$,
			%\[
			%	\mathbb E\left[ \int_0^T \frac{1}{a_s} \left(  I_{s}^{\overline A,(T)}\mathcal X^{}_{s}+  I^{\overline B,(T)}_s\right)^2\,ds   \right] \rightarrow 	\mathbb E\left[ \int_0^\infty \frac{1}{a_s} \left(  I_{s}^{\overline A }\mathcal X^{ }_{s}+  I^{\overline B}_s\right)^2\,ds   \right]. 
			%\]
    which together with \eqref{cost-T-2} in Appendix \ref{sec:verification-T} will imply that
			\[
			J(t,\widetilde X) = \mathbb E\left[\left.	\int_t^\infty \frac{1}{a_s} \left(I^{A}_s\mathcal X_s+I^{B}_s\right)^2\,ds 	\right|\mathcal F_t	\right] + \mathcal X^\top_{t-}A_t\mathcal X_{t-} + \mathcal X^\top_{t-}B_t+C_t.
			\]
		}
			First, by the uniform boundedness of $\overline A^{(T)}$, we have  
			\begin{equation*}
				\begin{split}
					&~	  \mathbb E\left[ \left| \mathbb E\left[ \left.  \int_t^\infty   \frac{1}{a_s} \left(  I_{s}^{A,(T)}\mathcal X^{}_{s}+  I^{B,(T)}_s\right)^2\,ds  \right|  \mathcal F_t   \right] -  \mathbb E\left[ \left. \int_t^T   \frac{1}{a_s} \left(  I_{s}^{A,(T)}\mathcal X^{}_{s}+  I^{B,(T)}_s\right)^2\,ds  \right|\mathcal F_t   \right] \right| \right] \\
					\leq &~c\mathbb E\left[   \int_T^\infty  	\frac{(c+\lambda_s)^2}{a_s}\widetilde X^2_s+ \widetilde Y^2_s+ |\overline B^{(T)}_s		|^2\,ds	    \right]\\
					\leq&~ c\mathbb E\left[    \int_T^\infty  	(1+\lambda_s)\widetilde X^2_s+ \widetilde Y^2_s+ |\overline B^{(T)}_s		|^2\,ds	    \right].
					%		\leq&~c\mathbb E\left[ \sup_{s\geq 0}e^{\frac{1}{2}Ks}\widetilde X^2_s \int_T^\infty (1+\lambda_s) e^{-\frac{1}{2}Ks}\,ds    \right]+c\mathbb E\left[ \sup_{s\geq 0} e^{\frac{1}{2}Ks} \widetilde Y^2_s \int_0^\infty e^{-\frac{1}{2}Ks} \,ds   \right] + c\mathbb E\left[ \sup_{s\geq 0} e^{\frac{1}{2}\phi s} \left|\overline B^{(T)}_s\right|^2 \int_0^\infty  e^{-\frac{1}{2}\phi s}\,ds \right]\\
					%		<&~\infty.
				\end{split}
			\end{equation*}
			Note that by {\bf Step 1}, the admissible conditions for $\widetilde X$, the assumption for $\lambda$ and Theorem \ref{thm:B}, it holds that 
			\begin{equation*}
				\begin{split}
					&~	\mathbb E\left[ \int_0^\infty  	(1+\lambda_s)\widetilde X^2_s+ \widetilde Y^2_s+ |\overline B^{(T)}_s		|^2\,ds	   \right]\\
					\leq&~c\mathbb E\left[ \sup_{s\geq 0}e^{\frac{1}{2}Ls}\widetilde X^2_s \int_0^\infty (1+\lambda_s) e^{-\frac{1}{2}Ls}\,ds     \right]+c\mathbb E\left[ \sup_{s\geq 0} e^{\frac{1}{2}Ls} \widetilde Y^2_s \int_0^\infty e^{-\frac{1}{2}Ls} \,ds   \right]\\
					&~ + c\mathbb E\left[ \sup_{s\geq 0} e^{\frac{\phi s}{2}}\left|\overline B^{(T)}_s\right|^2 \int_0^\infty  e^{-\frac{\phi s}{2}}\,ds \right]\\
					<&~\infty.
				\end{split}
			\end{equation*}
			Thus, 
			\begin{equation*}
				\begin{split}
				\mathbb E\left[	\left| \mathbb E\left[ \left. \int_t^\infty   \frac{1}{a_s} \left(  I_{s}^{A,(T)}\mathcal X^{}_{s}+  I^{B,(T)}_s\right)^2\,ds  \right|\mathcal F_t   \right] -  \mathbb E\left[ \left. \int_t^T   \frac{1}{a_s} \left(  I_{s}^{A,(T)}\mathcal X^{}_{s}+  I^{B,(T)}_s\right)^2\,ds   \right|\mathcal F_t  \right]	\right| \right]\xrightarrow{T\rightarrow\infty} 0.
				\end{split}
			\end{equation*}
			
			Second, by the boundedness of $\overline A^{(T)}$ and $\overline A$, and by noting that $\frac{\lambda}{a}$ is also bounded, we have 
			\begin{equation*}
				\begin{split}
					&~\mathbb E\left[   \left|\mathbb E\left[ \left. \int_t^\infty   \frac{1}{a_s} \left(  I_{s}^{A,(T)}\mathcal X^{}_{s}+  I^{B,(T)}_s\right)^2\,ds  \right|\mathcal F_t   \right]  -  \mathbb E\left[\left. \int_t^\infty \frac{1}{a_s} \left(  I_{s}^{A}\mathcal X^{ }_{s}+  I^{B}_s\right)^2\,ds \right|\mathcal F_t  \right]\right|  \right]  \\
					\leq&~\mathbb E\left[   \int_0^\infty \frac{1}{a_s} \left|    (	I^{A,(T)}_s-I^{A}_s		)\mathcal X_s+I^{B,(T)}_s-I^{B}_s 	        \right|	\left|    (	I^{A,(T)}_s + I^{A}_s		)\mathcal X_s+I^{B,(T)}_s+I^{B}_s 	        \right|   \,ds 		\right]\\
					\leq&~ c \mathbb E\left[     \int_0^\infty \left\{  | \overline A^{(T)}_s-\overline A_s ||\widetilde X_s| + | \overline A^{(T)}_s-\overline A_s ||\widetilde Y_s|  + |\overline B^{(T)}_s - \overline B_s|  \right\} \left\{      |\widetilde X_s| + |\widetilde Y_s| + |\overline B^{(T)}_s| + | \overline B_s |     \right\}   \,ds   \right]\\
					\leq&~ c \mathbb E\left[     \int_0^\infty \left\{  | \overline A^{(T)}_s-\overline A_s ||\widetilde X_s|   \right\} \left\{      |\widetilde X_s| + |\widetilde Y_s| + |\overline B^{(T)}_s| + | \overline B_s |     \right\}   \,ds    \right]\\
					&~ +c \mathbb E\left[     \int_0^\infty \left\{  | \overline A^{(T)}_s-\overline A_s ||\widetilde Y_s|  \right\} \left\{      |\widetilde X_s| + |\widetilde Y_s| + |\overline B^{(T)}_s| + | \overline B_s |     \right\}   \,ds     \right]\\
					&~ +c \mathbb E\left[   \int_0^\infty \left\{ |\overline B^{(T)}_s - \overline B_s|  \right\} \left\{      |\widetilde X_s| + |\widetilde Y_s| + |\overline B^{(T)}_s| + | \overline B_s |     \right\}   \,ds     \right]\\
					:=&~ I_1+I_2+I_3.
				\end{split}
			\end{equation*}
			For $I_1$, we have %by Theorem \ref{thm:A}, Theorem \ref{thm:B}, the admissible conditions for $\widetilde X$ and {\bf Step 1}
			\begin{equation*}
				\begin{split}
					I_1\leq&~ c \mathbb E\left[    \int_0^\infty   | \overline A^{(T)}_s-\overline A_s |       \left\{      |\widetilde X_s|^2 + |\widetilde Y_s|^2 + |\overline B^{(T)}_s|^2 + | \overline B_s |^2     \right\}   \,ds       \right] \\
					\leq&~ 	c\mathbb E\left[   \int_0^\infty 	 | \overline A^{(T)}_s-\overline A_s |  \left(   |\widetilde X_s|^2+  |\widetilde Y_s|^2\right)\,ds	\right]	+c \mathbb E\left[ 	\sup_{s\geq 0} e^{-\frac{1}{8}\phi s}| \overline A^{(T)}_s-\overline A_s | \sup_{s\geq 0} e^{\frac{1}{2}\phi s}(  |\overline B^{(T)}_s|^2 + | \overline B_s |^2  )\int_0^\infty e^{-\frac{3}{8}\phi s}\,ds	\right]\\
					\leq&~ c\mathbb E\left[   \int_0^\infty 	 | \overline A^{(T)}_s-\overline A_s |  \left(   |\widetilde X_s|^2+  |\widetilde Y_s|^2\right)\,ds	\right]  + c\mathbb E\left[  \sup_{s\geq 0} e^{-\frac{1}{4}\phi s}| \overline A^{(T)}_s-\overline A_s |^2  \right]^{1/2}\mathbb E\left[   \sup_{s\geq 0} e^{\phi s} (  |\overline B^{(T)}_s|^4 + | \overline B_s |^4  ) \right]  \\
			%		&\leq~ c\mathbb E\left[\left.  \sup_{s\geq 0} e^{ -\frac{1}{8}\phi s } |	\overline A^{(T)}_s - \overline A_s	| \left\{  \sup_{s\geq 0} e^{\frac{1}{4}\phi s} |\widetilde X_s|^2 + 	  \sup_{s\geq 0} e^{\frac{1}{4}\phi s}  |\widetilde Y_s|^2	+	 \sup_{s\geq 0} e^{\frac{1}{4}\phi s} |\overline B^{(T)}_s|^2 +  \sup_{s\geq 0} e^{\frac{1}{4}\phi s} |\overline B_s|^2   \right\}   \right|\mathcal F_t  \right]\\
				%	&\leq~ c\mathbb E\left[  \left. \sup_{s\geq 0} e^{ -\frac{\phi s}{4}}  |	\overline A^{(T)}_s - \overline A_s	|^2  \right|\mathcal F_t  \right]^{ \frac{1}{2} } \left\{ \mathbb E\left[	\left.	\sup_{s\geq 0} e^{ \frac{1}{2}\phi s }|	\widetilde X_s	|^4  \right|\mathcal F_t \right]^{\frac{1}{2}}	+\mathbb E\left[	\left.	\sup_{s\geq 0} e^{ \frac{1}{2}\phi s }|	\widetilde Y_s	|^4  \right|\mathcal F_t \right]^{\frac{1}{2}} \right.\\
				%	&~\left.\quad\qquad \qquad\qquad\qquad \qquad\qquad \qquad \quad	+ \mathbb E\left[	\left.	\sup_{s\geq 0} e^{ \phi s }|	\overline B^{(T)}_s	|^4  \right|\mathcal F_t \right]^{\frac{1}{2}}	 + \mathbb E\left[	\left.	\sup_{s\geq 0} e^{ \phi s }|	\overline B_s	|^4  \right|\mathcal F_t \right]^{\frac{1}{2}}		\right\}\\
					\overset{T\rightarrow\infty}{\longrightarrow}&~0\qquad (\textrm{up to a subsequence}),
				\end{split}
			\end{equation*}
		where the convergence of the second term is due to Theorem \ref{thm:A} and Theorem \ref{thm:B}, and to get the convergence of the first term, we note that Theorem \ref{thm:A} yields a subsequence $\overline A^{(T_k)}$ such that for each $s\geq 0$ $|\overline A^{T_k}-\overline A|\rightarrow 0$ a.s., and that $\int_0^\infty | \overline A^{(T_k)}_s-\overline A_s |  \left(   |\widetilde X_s|^2+  |\widetilde Y_s|^2\right)\,ds \leq c\int_0^\infty e^{\frac{1}{2}L s} \left(   |\widetilde X_s|^2+  |\widetilde Y_s|^2\right)\,ds $ that is integrable by {\bf Step 1}. Thus, dominated convergence implies the convergence of the first term.
		
			Similarly, $I_2+I_3\rightarrow 0$ as $T\rightarrow\infty$.

			Third, for the subsequence $\{T_k\}$, by Theorem \ref{thm:A}, Theorem \ref{thm:B}, Theorem \ref{thm:C}, there exists a further subsequence such that $  |A_t^{(T_{n_k})}-A_t| +  |B_t^{(T_{n_k})}-B_t|  +  |C^{(T_{n_k})}_t-C_t| \rightarrow 0$ a.s..

			Thus, for any $t\geq 0$ and for any admissible $\widetilde X$, $J^{(T)}(t,\widetilde X)\rightarrow J(t,\widetilde X)$ up to a subseuqence. In particular, it holds that $J^{(T)}(0,\widetilde X)\rightarrow J(0,\widetilde X)$.
			
			{\bf Step 3.} {\it We verify that $\widetilde X^{*}$ in \eqref{state-inf} is admissible.} 
			To do so, we first estimate $\widetilde P^*$. 
			
		%	Fix $0<K<\phi$. 
			It\^o's formula implies that 
			\begin{equation*}
				\begin{split}
					&~e^{Lt} (\widetilde P^*_t)^4 +\int_0^t \left(  	2\phi+\frac{4\rho_s( \lambda_s+\rho_s\overline A_s   )}{a_s} - L   \right) e^{Ls} (\widetilde P_s^*)^4\,ds \\
					=&~ x_0^4-2\int_0^t e^{Ls}(\widetilde P^*_s)^3\frac{\rho^2_s\overline B_s}{a_s}\,ds + 4\int_0^t e^{Ls-\frac{1}{2}\phi s} (\widetilde P^*_s)^3\frac{\sigma_s}{\gamma}\,dW_s+6\int_0^t e^{Ls}(\widetilde P^*_s)^2\frac{\sigma^2_s}{\gamma^2}e^{-\phi s}\,ds ,
				\end{split}
			\end{equation*}
			By BDG's inequality and Young's inequality, we have
			\begin{equation*}
				\begin{split}
					&~	\mathbb E\left[	\sup_{t\geq 0}e^{Lt} (\widetilde P^*_t)^4\right]  +  \mathbb E\left[\int_0^\infty \left(  	2\phi+\frac{4\rho_s( \lambda_s+\rho_s\overline A_s   )}{a_s} - L	   \right) e^{Ls} (\widetilde P_s^*)^4\,ds\right]\\
					\leq&~x_0^4+ c\mathbb E\left[\int_0^\infty e^{Ls}(\widetilde P^*_s)^3 \overline B_s \,ds  \right] + c\mathbb E\left[ \left(  \int_0^\infty    		e^{2Ls-\phi s}(\widetilde P_s^*)^6		\,ds         \right)^{\frac{1}{2}}  \right]+ c\mathbb E\left[     \sup_{s\geq 0}  e^{\frac{1}{2}Ls} (\widetilde P^*_s)^2		\int_0^t e^{  \frac{1}{2}Ls-\phi s } 	 \,ds            \right]\\
					\leq&~ \delta\mathbb E\left[   \sup_{t\geq 0} e^{Lt} (\widetilde P^*_t)^4    \right] + c \mathbb E\left[  \left(  \int_0^\infty  e^{ \frac{1}{4}Ls } | \overline B_s	|   \,ds     \right)^4    \right]	+  \delta\mathbb E\left[  \sup_{t\geq 0} e^{Lt} (\widetilde P^*_t)^4        \right] + c,
				\end{split}
			\end{equation*}
			which implies that by letting $\delta$ be small enough
			\begin{equation*}
				\begin{split}
					\mathbb E\left[	\sup_{t\geq 0}e^{Lt} (\widetilde P^*_t)^4\right]  \leq  c  \mathbb E\left[  \sup_{s\geq 0}  e^{ \phi s } | \overline B_s	|^4      \right]	  + c <\infty. 
				\end{split}
			\end{equation*}
			By \eqref{state-inf}, %for $0<K<\phi$
			\begin{equation*}
				\begin{split}
					\mathbb E\left[ \sup_{s\geq 0} e^{Ls} |\widetilde X^*_s|^4      \right] \leq&~ c\mathbb E\left[ \sup_{s\geq 0} e^{Ls} \frac{(\gamma\rho_s+\frac{\phi \gamma}{2}-\rho_s\overline A_s)^4}{a_s^4} (\widetilde P^*_s)^4    \right] + c\mathbb E\left[  \sup_{s\geq 0} e^{Ls} |\overline B_s|^4         \right]\\
					\leq&~c\mathbb E\left[ \sup_{s\geq 0} e^{Ls}  (\widetilde P^*_s)^4    \right] + c\mathbb E\left[  \sup_{s\geq 0} e^{Ls} |\overline B_s|^4         \right] <\infty. 
				\end{split}
			\end{equation*}
			Moreover, applying It\^o's formula to $e^{\frac{1}{2}Lt}(\widetilde P^*_t)^2$ we get
			\begin{equation*}
				\begin{split}
					&~e^{ \frac{1}{2}Lt }(\widetilde P^*_t)^2  +\int_0^t \left( 	\phi+\frac{ 2\rho_s(   \lambda_s+\rho_s\overline A_s  )  }{a_s}-\frac{L}{2}	    \right) e^{ \frac{1}{2}Ls  }(\widetilde P^*_s)^2 \,ds  \\
					=&~ x_0^2 - \int_0^t e^{ \frac{1}{2}Ls  } \widetilde P^*_s \frac{	\rho^2_s\overline B_s	}{a_s}\,ds+2\int_0^t e^{ \frac{1}{2}Ls -\frac{1}{2}\phi s  }\widetilde P^*_s \frac{\sigma_s}{\gamma}\,dW_s+\int_0^t \frac{\sigma^2_s}{\gamma} e^{	\frac{1}{2}Ls-\phi s	}\,ds\\
					\leq&~ \delta\int_0^te^{\frac{1}{2}Ls} (\widetilde P^*_s)^2\,ds + c \int_0^t e^{ \frac{1}{2}Ls }\overline B^2_s\,ds +2\int_0^t e^{ \frac{1}{2}Ls -\frac{1}{2}\phi s  }\widetilde P^*_s \frac{\sigma_s}{\gamma}\,dW_s + c,
				\end{split}
			\end{equation*}
			which implies that 
			\begin{equation*}
				\begin{split}
					\mathbb E\left[  \left( \int_0^\infty e^{\frac{1}{2}Ls} (\widetilde P^*_s)^2 \,ds  \right)^2   \right] \leq&~ c+c\mathbb E\left[  \left( \int_0^\infty e^{\frac{1}{2}Ls} \overline B^2_s \,ds  \right)^2  \right]+ c\mathbb E\left[ 	\int_0^\infty e^{Ls-\phi s}(\widetilde P^*_s)^2\,ds		  \right]\\
					\leq&~c+ c\mathbb E\left[\sup_{s\geq 0}e^{\phi s}\overline B^4_s \left(\int_0^\infty   e^{\frac{1}{2}(L-\phi)s}     \,ds	\right)^2	\right]+c\mathbb E\left[  \sup_{s\geq 0}e^{\frac{1}{2}Ls}(\widetilde P^*_s)^2	\int_0^\infty  e^{\frac{1}{2}Ls-\phi s}   \,ds	   \right]\\
					\leq&~ c+ c\mathbb E\left[\sup_{s\geq 0}e^{\phi s}\overline B^4_s \right] + c\mathbb E\left[  \sup_{s\geq 0}e^{\frac{1}{2}Ls}(\widetilde P^*_s)^2\right]\\
					<&~\infty. 
				\end{split}
			\end{equation*}
			Thus, 
			\begin{equation*}
				\begin{split}
					\mathbb E\left[   \left( \int_0^\infty e^{\frac{1}{2}Ls}(\widetilde X^*_s)^2\,ds    \right)^2   \right]\leq&~ c\mathbb E\left[  \left( \int_0^\infty e^{\frac{1}{2}Ls} (\widetilde P^*_s)^2 \,ds  \right)^2   \right]+ c\mathbb E\left[  \left( \int_0^\infty e^{\frac{1}{2}Ls} \overline B_s^2 \,ds  \right)^2   \right]\\
					\leq&~c\mathbb E\left[  \left( \int_0^\infty e^{\frac{1}{2}Ls} (\widetilde P^*_s)^2 \,ds  \right)^2   \right]+ c\mathbb E\left[\sup_{s\geq 0}e^{\phi s}\overline B^4_s \right]\\
					<&~\infty.
				\end{split}
			\end{equation*}

			{\bf Step 4.} {\it Complete the verification.} By Proposition \ref{thm:finite-control} (refer to \eqref{cost-T-2} in Appendix \ref{sec:verification-T}), it holds that
			\begin{equation*}
				\begin{split}
					J^{(T)}(t,\widetilde X) =&~\mathbb E\left[\left.	\int_t^T \frac{1}{a_s} \left(I^{ A,(T)}_s\mathcal X_s+I^{B,(T)}_s\right)^2\,ds 	\right|\mathcal F_t	\right] + \mathcal X^\top_{t-}A^{(T)}_t\mathcal X_{t-} + \mathcal X^\top_{t-}B^{(T)}_t+C^{(T)}_t \\
					\geq&~ \mathcal X^\top_{t-}A^{(T)}_t\mathcal X_{t-} + \mathcal X^\top_{t-}B^{(T)}_t+C^{(T)}_t.
				\end{split}
			\end{equation*}
			Taking limit on both sides, by {\bf Step 2} and {\bf Step 3}, we have 
			\begin{equation*}
				\begin{split}
					J(t, \widetilde X) \geq \mathcal X^\top_{t-}A_t\mathcal X_{t-} + \mathcal X^\top_{t-}B_t+C_t.
				\end{split}
			\end{equation*}
			It is easy to check that $I^{A}\mathcal X^{*}+I^{B}=0$ for $\mathcal X:=(\widetilde X^*,\widetilde Y^*)$ defined in \eqref{state-inf}. Thus, $\widetilde X^*$ in \eqref{jump-inf} and \eqref{state-inf} is optimal if it is a c\`adl\`ag semimartingale.

			{\bf Step 5.} {\it Convergence of optimal positions under the assumption $\frac{\phi}{2}<L<\phi$.} To verify the convergence \eqref{convergence-X}, we first consider the convergence from $\widetilde P^{(T)}$ from $\widetilde P^*$, where $\widetilde P^{(T)}$ is defined in Proposition \ref{thm:finite-control}. By definition, the difference follows 
			\begin{equation*}
				\left\{\begin{split}
					d(\widetilde P^{(T)}_t - \widetilde P^*_t) =&~ \left\{- \frac{\rho^2_t}{a_t}( \overline A^{(T)}_t - \overline A_t	)\widetilde P^*_t - \frac{\rho^2_t}{2a_t}(	\overline B^{(T)}_t - \overline B_t		)\right\}\,dt  - \left( \frac{\phi}{2} + \frac{	\rho_t(  \lambda_t+\rho_t\overline A^{(T)}_t  )		}{a_t}   \right)(  \widetilde P^{(T)}_t - \widetilde P^*_t      )   \,dt\\
					\widetilde P^{(T)}_0 - \widetilde P^*_0 = &~ 0,
				\end{split}\right.
			\end{equation*}
		which implies that 
		\[
			\widetilde P^{(T)}_t - \widetilde P^*_t = \int_0^t 	\left\{- \frac{\rho^2_s}{a_s}( \overline A^{(T)}_s - \overline A_s	)\widetilde P^*_s - \frac{\rho^2_s}{2a_s}(	\overline B^{(T)}_s - \overline B_s		)\right\}	e^{	-\int_s^t \left( \frac{\phi}{2} + \frac{	\rho_r(  \lambda_r+\rho_r\overline A^{(T)}_r  )		}{a_r}   \right) \,dr	}	\,ds.
		\]
		Thus, we have
		\begin{equation*}
			\begin{split}
				&~\mathbb E\left[  \sup_{0\leq t\leq T}   \left|\widetilde P^{(T)}_t - \widetilde P^*_t \right|^2		   \right] \\
				\leq&~ c\mathbb E\left[  \left(\int_0^\infty   | \overline A^{(T)}_s - \overline A_s	| |\widetilde P^*_s|     \,ds  \right)^2  \right] + c \mathbb E\left[  \left(\int_0^\infty   |\overline B^{(T)}_s-\overline B_s		|\,ds \right)^2 \right] \\
				\leq&~ c\mathbb E\left[  \sup_{s\geq 0} e^{-\frac{\phi s}{4}} |	\overline A^{(T)}_s-\overline A_s	|^2	\sup_{s\geq 0} e^{\frac{1}{2}Ls}|\widetilde P^*_s|^2	\left(\int_0^\infty e^{\frac{1}{8}\phi s-\frac{1}{4}Ls} \,ds  \right)^2      \right] \\
				&~+ \mathbb E\left[  \sup_{s\geq 0} e^{\frac{1}{2}\phi s}| \overline B^{(T)}_s - \overline B_s |^2 \left( \int_0^\infty e^{-\frac{1}{4}\phi s} \,ds \right)^2   \right]\\
				\overset{T\rightarrow\infty}{\longrightarrow} &~0.
			\end{split}
		\end{equation*}
		By \eqref{state-T-X} and \eqref{state-inf}, we have 
		\begin{equation*}
			\begin{split}
				&~\mathbb E\left[\sup_{0\leq t\leq T}\left| \widetilde X^{*,(T)}_t - \widetilde X^*_t		\right|^2\right] \\
				\leq&~ c \mathbb E\left[\sup_{0\leq t\leq T}\left| \widetilde P^{(T)}_t - \widetilde P^*_t		\right|^2\right] + c\mathbb E\left[\sup_{0\leq t\leq T}e^{-\frac{\phi t}{4}}\left| \overline A^{(T)}_t - \overline A_t		\right|^2	e^{\frac{Lt}{2}}| \widetilde P^*_t	|^2	e^{\frac{\phi t}{4}-\frac{Lt}{2}}	\right]+c\mathbb E\left[\sup_{0\leq t\leq T}\left| \overline B^{*,(T)}_t - \overline B^*_t		\right|^2\right]\\
				\leq&~ c \mathbb E\left[\sup_{0\leq t\leq T}\left| \widetilde P^{(T)}_t - \widetilde P^*_t		\right|^2\right]  + c\mathbb E\left[\sup_{0\leq t\leq T}e^{-\frac{\phi t}{2}}\left| \overline A^{(T)}_t - \overline A_t		\right|^4	 \right]^{1/2} \mathbb E\left[\sup_{0\leq t\leq T}e^{Lt}| \widetilde P^*_t	|^4	\right]^{1/2}\\
				&~ +c\mathbb E\left[\sup_{0\leq t\leq T}e^{\frac{\phi t}{2}}\left| \overline B^{*,(T)}_t - \overline B^*_t		\right|^2\right]\\
				\overset{T\rightarrow\infty}{\longrightarrow}&~ 0,
			\end{split}
		\end{equation*}
	which yields \eqref{convergence-X}.
		\end{proof}

	\section{Properties of the optimal solution}\label{sec:property}	
In this section, we explore three properties of the optimal solution under the following assumption, in addition to {\bf Standing Assumption} in Section \ref{sec:intro}.

{\bf Additional Assumption.} $\rho$ and $\lambda$ are constants.

Under {\bf Standing Assumption} and {\bf Additional Assumption}, it holds that $\overline A^{(T)}$ is deterministic, $\overline A$ is a constant, and $\overline B^{(T)}=\overline B\equiv 0$. 

{\bf Property 1: {\it the trading behavior of the investor in the long run}.} By Theorem \ref{thm-inf} and $\overline B=0$, the optimal position $X$ on $[0,\infty)$ follows 
\begin{equation}\label{eq:X-property1}
    dX_t= \  -\frac{   \rho(\rho\overline A+\lambda )    }{\gamma\rho+\lambda+\frac{\phi\gamma}{2}}   X_t\,dt + \gamma\sigma_t\left(1- \frac{\rho\overline A+\lambda}{\gamma\rho+\lambda+\frac{\phi\gamma}{2}}  \right) \,dW_t.  
\end{equation}
The unique solution of \eqref{eq:X-property1} has the following closed form expression
\begin{equation}\label{eq:X-property1-closed}
X_t= \left(1- \frac{\rho\overline A+\lambda}{\gamma\rho+\lambda+\frac{\phi\gamma}{2}}  \right)X_0 e^{- \rho \frac{\rho\overline A+\lambda}{\gamma\rho+\lambda+\frac{\phi\gamma}{2}}  t  } + \left(1- \frac{\rho\overline A+\lambda}{\gamma\rho+\lambda+\frac{\phi\gamma}{2}}  \right) e^{ -\rho  \frac{\rho\overline A+\lambda}{\gamma\rho+\lambda+\frac{\phi\gamma}{2}}  t } \int_0^t \frac{\sigma_s }{\gamma} e^{ \rho  \frac{\rho\overline A+\lambda}{\gamma\rho+\lambda+\frac{\phi\gamma}{2}} s  } \,dW_s.
\end{equation}

We consider the following two cases.

{\bf Case 1: $\sigma$ is damped, i.e., $\sigma_t=\underline \sigma e^{-\psi t}$ with $\psi>0$ large enough.} In this case, setting $\overline\psi:=\psi-\frac{\rho(\rho\overline A+\lambda)}{\gamma\rho+\lambda+\frac{\phi\gamma}{2}}>0$, we have by Dambis-Dubins-Schwartz theorem 
\begin{equation}
	\int_0^te^{-\overline\psi s}\,dW_s = \overline W_{\tau(t)},
\end{equation} 
where $\overline W$ is a Brownian motion and $\tau(t)=\frac{1}{-2\overline\psi}(  e^{-2\overline\psi t}  -1    )$ that is bounded. Thus, 
$$
\lim_{t\rightarrow\infty} e^{ -\rho  \frac{\rho\overline A+\lambda}{\gamma\rho+\lambda+\frac{\phi\gamma}{2}}  t } \int_0^t e^{-\overline\psi s}\,dW_s=0\quad a.s.,  
$$
which implies that $\lim_{t\rightarrow\infty}X_t=0$ a.s. from \eqref{eq:X-property1-closed}. Thus, the investor will liquidate her asset finally.

{\bf Case 2: $\mathbb E[\sigma^2]:=\overline\sigma\neq 0$ is a constant.} 
\eqref{eq:X-property1-closed} implies that 
%\begin{equation}\label{estimate:X2-property1}
%	\begin{split}
%		\mathbb E[|X_t|^2] 
%		=&~ \left(1- \frac{\rho\overline A+\lambda}{\gamma\rho+\lambda+\frac{\phi\gamma}{2}}  \right)^2X^2_0 e^{- 2\rho \frac{\rho\overline A+\lambda}{\gamma\rho+\lambda+\frac{\phi\gamma}{2}}  t  } \\
%		&~ +   \left(1- \frac{\rho\overline A+\lambda}{\gamma\rho+\lambda+\frac{\phi\gamma}{2}}  \right)^2 e^{ -2 \rho  \frac{\rho\overline A+\lambda}{\gamma\rho+\lambda+\frac{\phi\gamma}{2}}  t } \int_0^t \frac{\mathbb E[\sigma^2_s] }{\gamma^2 } e^{ 2 \rho  \frac{\rho\overline A+\lambda}{\gamma\rho+\lambda+\frac{\phi\gamma}{2}} s  } \,ds 
%	\end{split}
%\end{equation}
%
%It holds that 
for $t\geq 0$ large enough
\begin{align*}
   % \begin{split}
        \mathbb E[|X_t|^2] 
     %   =&~ \left(1- \frac{\rho\overline A+\lambda}{\gamma\rho+\lambda+\frac{\phi\gamma}{2}}  \right)^2X^2_0 e^{- 2\rho \frac{\rho\overline A+\lambda}{\gamma\rho+\lambda+\frac{\phi\gamma}{2}}  t  }  +   \left(1- \frac{\rho\overline A+\lambda}{\gamma\rho+\lambda+\frac{\phi\gamma}{2}}  \right)^2 e^{ -2 \rho  \frac{\rho\overline A+\lambda}{\gamma\rho+\lambda+\frac{\phi\gamma}{2}}  t } \int_0^t \frac{\sigma^2 }{\gamma^2 } e^{ 2 \rho  \frac{\rho\overline A+\lambda}{\gamma\rho+\lambda+\frac{\phi\gamma}{2}} s  } \,ds  \\
        =&~\left(1- \frac{\rho\overline A+\lambda}{\gamma\rho+\lambda+\frac{\phi\gamma}{2}}  \right)^2X^2_0 e^{- 2\rho \frac{\rho\overline A+\lambda}{\gamma\rho+\lambda+\frac{\phi\gamma}{2}}  t  } \nonumber\\
        &~ +  \frac{\overline\sigma}{\gamma^2 } \left(1- \frac{\rho\overline A+\lambda}{\gamma\rho+\lambda+\frac{\phi\gamma}{2}}  \right)^2 e^{ -2 \rho  \frac{\rho\overline A+\lambda}{\gamma\rho+\lambda+\frac{\phi\gamma}{2}}  t } \frac{1}{2 \rho  \frac{\rho\overline A+\lambda}{\gamma\rho+\lambda+\frac{\phi\gamma}{2}}} \left(  e^{ 2 \rho  \frac{\rho\overline A+\lambda}{\gamma\rho+\lambda+\frac{\phi\gamma}{2}} t  } - 1 \right)  \nonumber\\
        \geq &~ \frac{\overline\sigma }{\gamma^2 } \left(1- \frac{\rho\overline A+\lambda}{\gamma\rho+\lambda+\frac{\phi\gamma}{2}}  \right)^2   \frac{1}{4 \rho  \frac{\rho\overline A+\lambda}{\gamma\rho+\lambda+\frac{\phi\gamma}{2}}}  \nonumber\\
        >&~0 \qquad (\textrm{since }A\textrm{ is valued in }[0,\gamma/2]) \nonumber.
%    \end{split}
\end{align*}
From \eqref{eq:X-property1} we can get immediately that $\mathbb E[\sup_{t\geq 0}|X_t|^2]<\infty$. 
 If $\lim_{t\rightarrow\infty} |X_t|^2=0$ a.s., dominated convergence theorem implies that $\lim_{\rightarrow\infty}\mathbb E[|X_t|^2]=0$, which is a contradiction. In this case, the investor never liquidates, but her position fluctuates around zero; the trading sign of the investor will be fully determined by the trading sign of the external flow, as shown in {\bf Property 2}.

{\bf Property 2: {\it the investor will buy (sell) when external sell (buy) orders arrive.}}
{In the second property, we want to investigate the trading direction at the next trading time, when the current position reaches zero. To explore this property, we return to the discrete time model by assuming trading occurs at $k\Delta$, $k = 0, 1, \cdots$. W.l.o.g., we only consider the case $k>0$, where the trading is continuous. By \eqref{eq:X-property1} we have the dynamics in  discrete time 
\begin{equation*} 
    X_{(n+1)\Delta}=X_{n\Delta}-\rho\frac{\rho\overline{A}+\lambda}{\gamma\rho+\lambda+\frac{\phi \gamma}{2}}X_{n\Delta}\Delta +  \gamma\frac{\gamma\rho+\frac{\phi \gamma}{2}-\rho\overline{A}}{\gamma\rho+\lambda+\frac{\phi \gamma}{2}}\sigma_{n\Delta} \epsilon_{n+1}.
\end{equation*}
When the current position reaches zero, i.e., $X_{n\Delta}=0$, we have 
\[
X_{(n+1)\Delta}=\gamma\frac{\gamma\rho+\frac{\phi \gamma}{2}-\rho\overline{A}}{\gamma\rho+\lambda+\frac{\phi \gamma}{2}} \sigma_{n\Delta} \epsilon_{n+1}
\]
Since $\overline{A}$ is $[0,\frac{\gamma}{2}]$-valued by Theorem \ref{thm:A}, we have $\frac{\gamma\rho+\frac{\phi \gamma}{2}-\rho\overline{A}}{\gamma\rho+\lambda+\frac{\phi \gamma}{2}}>0$.
When $\sigma_{n\Delta}\epsilon_{n+1}>0$ (sell orders), it holds that $X_{(n+1)\Delta}-X_{n\Delta}=X_{(n+1)\Delta}>0$, that is, the investor will buy. The arrival of external buy orders can be analyzed in a similar manner.
% Since 
% $$Y_{(n+1)\Delta-} = (1-\Delta\rho_{n\Delta})( Y_{n\Delta-} + \gamma\xi_n   ) + \sigma_{n\Delta} \epsilon_{n+1}$$
% When $\epsilon>0$, $Y_{n+1}$ will increase, which means other investors will sell the portfolios (performance like $\xi_n>0$).
% The execution price $\widetilde S_{n+1}=S_{n+1}-Y_{n+1}$ will \textcolor{red}{decrease} when $Y_{n+1}$ increase, where $S_t$ is the fundamental price, and this result is  with the result that the investor in our model will buy some portfolios. 
% similarly, when $\epsilon<0$, which means other investors will \textcolor{red}{buy} the portfolios, the execution price will increase, and the investor in our model will \textcolor{red}{sell} the portfolios.
} 

{\bf Property 3: {\it a turnpike property of the optimal solution.}} Our goal is to show that the stochastic control problem (1.1) %\eqref{cost-T-continuous} 
has the following turnpike property as in \cite{SWJ-2022}:

\textit{ 
there exist positive constants $K$ and $\mu $ independent of $T$ such that 
\begin{equation}\label{turnpike}
    |\mathbb{E}[X^{*,(T)}_t]|+|\mathbb{E}[Y^{*,(T)}_t]|\leq K[e^{-\mu t}+e^{-\mu(T-t)}], \quad \forall t\in [0,T].
\end{equation}
} 
% \begin{theorem}
%     We can find $(0,0)$ that there exist  positive constants $K, \mu > 0$, independent of $T$
% \begin{equation}\label{Turnpike property}
%     |\mathbb{E}[X^{*,(T)}_t]-0|+|\mathbb{E}[Y^{*,(T)}]-0|\leq K[e^{-\mu t}+e^{-\mu(T-t)}], \quad \forall t\in [0,T].
% \end{equation} 
% \end{theorem}
\begin{proof}[Proof of \eqref{turnpike}]
Let $P^{(T)}_\cdot=e^{\frac{1}{2}\phi \cdot}\widetilde P^{(T)}_\cdot$, where $\widetilde P^{(T)}$ was defined in Proposition \ref{thm:finite-control}.  Then
\begin{equation}\label{O}
    \begin{aligned}
        dP^{(T)}_t %&=dX^ {*, (T)}_t+\frac{1}{\gamma}dY^ {*, (T)}_t\\
   %     &=dX^ {*, (T)}_t-\frac{\rho}{\gamma} Y^ {*, (T)}_t~dt-dX^ {*, (T)}_t+\frac{\sigma_t}{\gamma}~dW_t\\
 %       &=-\frac{\rho}{\gamma} Y^ {*, %(T)}_t~dt+\frac{\sigma_t}{\gamma}~dW_t\\
        &=-\rho \frac{\lambda+\rho\overline A^{(T)}_t}{\gamma\rho+\lambda+\frac{\phi\gamma}{2}} P^{(T)}_t~dt+\frac{\sigma_t}{\gamma}\,dW_t, 
    \end{aligned}
\end{equation}
which implies that
\[ 
d\mathbb E[P^{(T)}_t]=-\rho \frac{\lambda+\rho\overline A^{(T)}_t}{\gamma\rho+\lambda+\frac{\phi\gamma}{2}}  \mathbb E[P^{(T)}_t]\,dt\Rightarrow \left|\mathbb E[P^{(T)}_t] \right|= |x_0| \exp\left( - \int_0^t \rho \frac{\lambda+\rho\overline A^{(T)}_s}{\gamma\rho+\lambda+\frac{\phi\gamma}{2}} \,ds \right)\leq |x_0| \exp\left( -\mu t   \right),
\]
where $\mu=\frac{\rho \lambda }{\gamma\rho+\lambda+\frac{\phi\gamma}{2}} $.

Since $\overline{A}^{(T)}$ is $[0,\frac{\gamma}{2}]$-valued, we can find $K_1>0$ and $K_2>0$ such that
\[ 
0<1-\frac{\lambda+\rho\overline A^{(T)}_t}{\gamma\rho+\lambda+\frac{\phi\gamma}{2}}=\frac{\gamma \rho+\frac{\phi\gamma}{2}-\rho \overline{A}^{(T)}_t}{\gamma \rho+\lambda+\frac{\phi\gamma}{2}}\leq \frac{\gamma\rho+\frac{\phi\gamma}{2}}{\gamma\rho+\lambda+\frac{\phi\gamma}{2}}  := K_1
\]
and  
\[ 
0<\gamma\frac{\lambda+\rho\overline A^{(T)}_t}{\gamma\rho+\lambda+\frac{\phi\gamma}{2}} \leq  \gamma\frac{\lambda+\frac{\rho\gamma}{2}}{\gamma\rho+\lambda+\frac{\phi\gamma}{2}}:=K_2
\] 
By \eqref{state-T-X} and \eqref{state-T-Y} in Proposition \ref{thm:finite-control}, it holds that 
\[
    X^{*,(T)}=\left( 1- \frac{\lambda+\rho\overline A^{(T)}}{  \gamma\rho+\lambda+\frac{\phi\gamma}{2}   } \right)P^{(T)},\qquad Y^{*,(T)} = \frac{ \gamma(\lambda+\rho\overline A^{(T)} )        }{\gamma\rho+\lambda+\frac{\phi\gamma}{2}}P^{(T)},
\]
which imply that 
\begin{equation*}
    \begin{split}
    \left|\mathbb{E}[X^{*,(T)}_t]\right|+\left|\mathbb{E}[Y^{*,(T)}_t]\right|=&~\left|\mathbb{E}\left[\left( 1- \frac{\lambda+\rho\overline A^{(T)}_t}{  \gamma\rho+\lambda+\frac{\phi\gamma}{2}   } \right)P^{(T)}_t\right]\right|+\left|\mathbb{E}\left[\frac{ \gamma(\lambda+\rho\overline A^{(T)}_t )        }{\gamma\rho+\lambda+\frac{\phi\gamma}{2}}P^{(T)}_t\right]\right|\\
    \leq&~K_1 |x_0| e^{-\mu t}+K_2 |x_0| e^{-\mu t}\\
    \leq&~ K[e^{-\mu t}+e^{-\mu(T-t)}],
    \end{split}
\end{equation*}
where $K= (K_1+K_2)|x_0|$. 
\end{proof}

\begin{center}
\begin{figure}[h]%\label{figure:position}
\includegraphics[width=3.33in, height=3in]{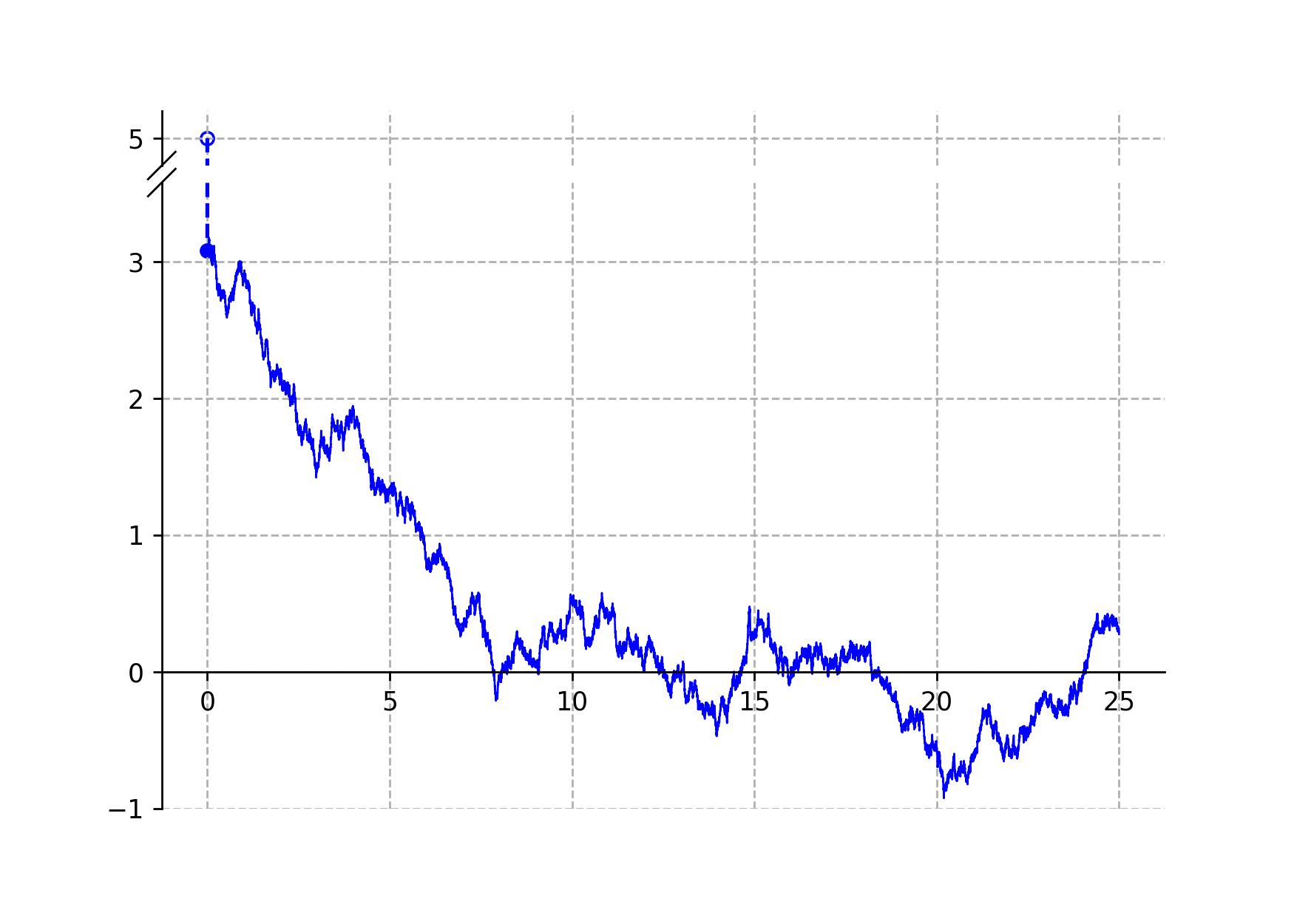}
\includegraphics[width=3.33in, height=3in]{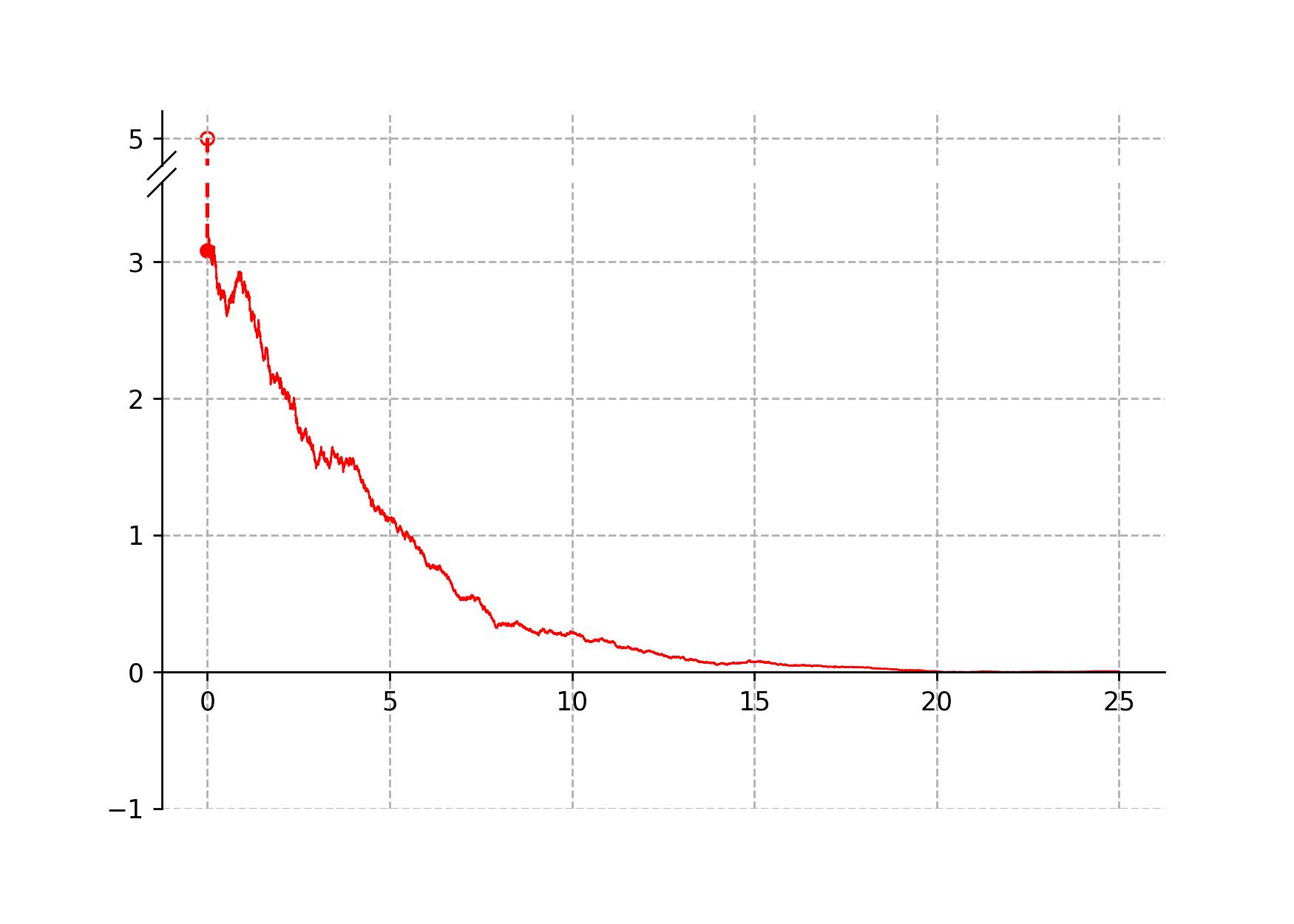}
\caption{Optimal position in the long run: $x_0=5$, $y_0=0$, $\rho=0.7$, $\gamma=0.5$, $\lambda=0.1$ and $\phi=0.2$. Left: $\sigma=0.25$;  Right: $\sigma_t=0.25 e^{-0.2t}$.}
\end{figure}
\end{center}
In Figure 1, we plot the optimal position as a function of time. The left panel exhibits the one with a constant external flow intensity $\sigma=0.25$ while the right panel exhibits the one with a damped external flow intensity $\sigma_t=0.25 e^{-0.2t}$. We see that the player never liquidate her assets but her position fluctuates around zero in the long run if the intensity for the external flow is a constant. In contrast, if the intensity is sufficiently damped, the investor will liquidate in the long term. This is consistent with {\bf Property 1}.  
%	This chapter shows how the optimal X value is affected by different parameter values. When $\rho=0.7, \gamma=o.5, \lambda=0.1, \phi=0.2$, we compare the preference of $X^*$ when $\sigma=0.25$ and $\sigma_t=0.25e^{-0.2t}$.
%%	\begin{figure}[H]
	%%		\centering 
	%%		\includegraphics[width=0.7\textwidth]{x=5 sig-const.png} 
	%%		\caption{$\sigma$ is a constant, $\sigma=0.25, \rho=0.7, \gamma=0.5, \lambda=0.1, \phi=0.2$} 
	%%		\label{Fig.1} 
	%%	\end{figure}
%	
%	We can observe from Figure 1 that $X^*$ does not approach 0, but fluctuates around 0, when time goes infinite, but it is obviously unrealistic, for $X$ won't float after it's equal to zero.
%	
%%	\begin{figure}[H]
	%%		\centering 
	%%		\includegraphics[width=0.7\textwidth]{x=5 sig-non-const.png} 
	%%		\caption{$\sigma$ is related with t, $\sigma_t=0.25e^{-0.2t}, \rho=0.7, \gamma=0.5, \lambda=0.1, \phi=0.2$} 
	%%		\label{Fig.2} 
	%%	\end{figure}
%	
%	From Figure 2, we can have when $\sigma$ can decay by time and approach zero as time goes to infinite, $X^*$ can decay to 0 as time goes by, which means we can finally sell out all of our portfolios.
%	

\section{Concluding remarks}
In this paper, we study the long time behavior of a stochastic control problem arising in optimal liquidation problems with semimartingale strategies and external flows. Specifically, we study the rigorous limit of the control problem when the horizon goes to infinity. Such a convergence result requires new result of BSDE convergence. As a byproduct of our analysis, we solve a stochastic control problem with infinite horizon and semimartingale strategies. In the constant setting, we further identify three key properties of the optimal position: 1) Whether the investor will liquidate her assets at all in the long run would rely on the intensity of the external flow; the investor will not liquidate in the long term if the external flow is strong enough. 2) The trading direction of the investor is fully determined by the external flow. 3) We also identify a turnpike property for the optimal solution.

\appendix

\section{Proof of Proposition \ref{thm:finite-control}}\label{sec:verification-T}

\begin{proof}	
	For any admissible strategy $\widetilde X$, we have
	\begin{equation}\label{eq:cost-rewritten-1}
		\begin{split}
			J^{(T)}(t,\widetilde X)=&~	\mathbb E\left[  \left. \int_{[t,T)} \left(-\widetilde Y_{s-}\,d\widetilde X_s+ \frac{\gamma}{2}d[\widetilde X]_s - \sigma_s e^{-\frac{\phi}{2}s}\,d[\widetilde X,W]_s\right)	+\int_t^T\left(\lambda_s \widetilde{X}^2_s-\frac{\phi}{2}\widetilde Y_{s}\widetilde{X}_s\right)\,ds	 \right|\mathcal F_t \right]\\
			&~+ \mathbb E\left[\left.-\left(	\widetilde{Y}_{T-}-\frac{\gamma_2}{2}\Delta \widetilde{X}_T	\right)\Delta \widetilde{X}_T \right|\mathcal F_t \right]\\
			=&~\mathbb E\left[ \left. \int_{[t,T)} \left(-\widetilde Y_{s-}\,d\widetilde X_s+ \frac{\gamma}{2}d[\widetilde X]_s - \sigma_s e^{-\frac{\phi}{2}s}\,d[\widetilde X,W]_s\right)	+\int_t^T\left(\lambda_s \widetilde{X}^2_s-\frac{\phi}{2}\widetilde Y_{s}\widetilde{X}_s\right)\,ds	 \right|\mathcal F_t \right]\\
			&~+\mathbb E\left[ \frac{\gamma}{2} \widetilde{X}^2_{T-} +\widetilde{X}_{T-}\widetilde{Y}_{T-}\Big|\mathcal F_t 	\right] \quad (\textrm{ since }\widetilde{X}_T=0)\\
			=&~\mathbb E\left[ \left. \int_{[t,T)} \left(-\widetilde Y_{s-}\,d\widetilde X_s+ \frac{\gamma}{2}d[\widetilde X]_s - \sigma_s e^{-\frac{\phi}{2}s}\,d[\widetilde X,W]_s\right)	+\int_t^T\left(\lambda_s \widetilde{X}^2_s-\frac{\phi}{2}\widetilde Y_{s}\widetilde{X}_s\right)\,ds	 \right|\mathcal F_t \right]\\
			&~+\mathbb E\left[  \left.  \mathcal X^\top_{T-} A^{(T)}_{T}\mathcal X_{T-} +\mathcal X^\top_{T-} B^{(T)}_T+C_T\right|\mathcal F_t \right].
		\end{split}
	\end{equation}
	We start with the term $\mathbb E\left[ \mathcal X^\top_{T-} A^{(T)}_{T}\mathcal X_{T-} \right]$. Using It\^o's formula in \cite[Theorem 36]{Protter-2005}, 
	\begin{equation*}
		\begin{aligned}
			\mathbb{E}\left[\mathcal X_{T-}^\top A^{(T)}_{T} \mathcal X_{T-}\Big|\mathcal F_t \right]=\mathbb{E} & {\left[\int_{t}^{T-} 2\left(A^{(T)}_{s} \mathcal X_{s-}\right)^{\top} d \mathcal X_{s}+\operatorname{Tr}\left(A^{(T)}_{s} d[\mathcal X, \mathcal X]_{s}^{c}\right)\right.} \\
			& \left.+\sum_{t \leq s<T}\left(\mathcal X_{s}^{\top} A^{(T)}_{s} \mathcal X_{s}-\mathcal X_{s-}^{\top} A^{(T)}_{s} \mathcal X_{s-}-2\left(A^{(T)}_{s} \mathcal X_{s-}\right)^{\top} \Delta \mathcal X_{s}\right)\right.\\
			&\left.+\mathcal X^\top_{t-} A^{(T)}_{t} \mathcal X_{t-}+\int_{t}^{T} \Big( \mathcal X_{s}^{\top} \dot{A}^{(T)}_{s} \mathcal X_{s} \,ds+2\mathcal{D}_s^\top Z^{A,(T)}_{s} \mathcal X_s\, ds\right.\\
			&~ +2\mathcal{K}^\top Z^{A,(T)}_{s} \mathcal X_s \,d[\widetilde{X}^c,W]_s+\mathcal X_{s}^{\top} Z^{A,(T)}_s \mathcal X_{s} \,dW_s\Big) \Bigg|\mathcal F_t \Bigg] ,
		\end{aligned}
	\end{equation*}
	where  $\dot{A}^{(T)}$ and $Z^{A,(T)}:=(Z^{A,(T)}_{ij})_{2\times2}$ denote the driver and the diffusion term in the BSDE for $A^{(T)}$, respectively. Since $A^{(T)}_{11}=\gamma A^{(T)}_{21}$, $A^{(T)}_{12}=\gamma A^{(T)}_{22}+\frac{1}{2}$, we have that $Z^{A,(T)}_{11}=\gamma Z^{A,(T)}_{21}, Z^{A,(T)}_{12}=\gamma Z^{A,(T)}_{22}$, thus $\mathcal{K}^\top Z^{A,(T)}\equiv 0.$ It is easy to prove that $\int^{\cdot}_{0} \mathcal X_{s}^{\top} Z^{A,(T)}_s \mathcal X_{s} \,dW_s$ is a uniformly integrable martingale and thus we can drop this term. 
	
	Note that
	\begin{equation*}
		\begin{aligned}
			&d[\mathcal X, \mathcal X]_{s}^{c}\\
			&=\left(\begin{array}{ccc}
				d\left[\widetilde{X}^{c}, \widetilde{X}^{c}\right]_{s} & e^{-\frac{\phi s}{2}}\sigma_{s} d\left[\widetilde{X}^{c}, W\right]_{s}-\gamma d\left[\widetilde{X}^{c}, \widetilde{X}^{c}\right]_{s}  \\
				e^{-\frac{\phi s}{2}}\sigma_{s} d\left[\widetilde{X}^{c}, W\right]_{s}-\gamma d\left[\widetilde{X}^{c}, \widetilde{X}^{c}\right]_{s} & \sigma_{s}^{2}e^{-\phi s} d s+\gamma^{2} d\left[\widetilde{X}^{c}, \widetilde{X}^{c}\right]_{s}-2 \gamma \sigma_{s}e^{-\frac{\phi s}{2}} d\left[\widetilde{X}^{c}, W\right]_{s} 
			\end{array}\right),
		\end{aligned}
	\end{equation*}
	which implies by the relationship between the entries of the matrix $A^{(T)}$ that
	\[
	\int_{t}^{T-}\textnormal{Tr}( A^{(T)}_s\,d[\mathcal X^c,\mathcal X^c]_s )\,ds=\int_t^T\left( -\frac{\gamma}{2}\,d[\widetilde{X}^c,\widetilde{X}^c]_s+\sigma_se^{-\frac{\phi s}{2}}\,d[\widetilde{X}^c,W]_s+e^{-\phi s}\sigma^2_s A^{(T)}_{22,s}\,ds\right).
	\]		
	Taking this back into the decomposition of $\mathbb{E}\left[\mathcal X_{T-} A^{(T)}_{T} \mathcal X_{T-}\Big|\mathcal F_t \right] $, we have
	
	\begin{equation}\label{v4}
		\begin{aligned}
			\mathbb{E}\left[\mathcal X_{T-} A^{(T)}_{T} \mathcal X_{T-}\Big|\mathcal F_t \right] 
			= &~ \mathbb{E}\left[  \left. \int_{t}^{T} 2\left(A^{(T)}_{s} \mathcal X_{s}\right)^{\top}\mathcal{H}_s \mathcal X_{s}\,ds+\int_{t}^{T-} 2\left(A^{(T)}_{s} \mathcal X_{s-}\right)^{\top} \mathcal{K}\, d \widetilde{X}_{s}\right|\mathcal F_t \right] \\
			&~+\mathbb{E}\left[ \left.  \int_{t}^{T}\left(-\frac{\gamma}{2} d\left[\widetilde{X}^{c}, \widetilde{X}^{c}\right]_{s}+\sigma_{s}e^{-\frac{\phi s}{2}} d\left[\widetilde{X}^{c}, W\right]_{s}+\sigma_{s}^{2} e^{-\phi s}A^{(T)}_{22, s}\,ds\right)\right|\mathcal F_t \right] \\
			&~+  \mathbb{E}\left[\sum_{t \leq s<T}\left(\mathcal X_{s}^{\top} A^{(T)}_{s} \mathcal X_{s}-\mathcal X_{s-}^{\top} A^{(T)}_{s} \mathcal X_{s-}-2\left(A^{(T)}_{s} \mathcal X_{s-}\right)^{\top} \mathcal{K} \Delta \widetilde{X}_{s}\right)\Big|\mathcal F_t \right] \\
			&~ +\mathbb{E}\left[ \left.  \int_{t}^{T}\left( \mathcal X_{s}^{\top} \dot{A}^{(T)}_{s} \mathcal X_{s}+2\mathcal{D}_s^\top Z^{A,(T)}_s \mathcal X_s \right)\,ds\right|\mathcal F_t \right]+\mathcal X^\top_{t-} A^{(T)}_{t} \mathcal X_{t-}.
		\end{aligned}
	\end{equation}
	Besides, integration by parts implies that
	\begin{equation}\label{v5}
		\begin{aligned}
			&~\mathbb E\left[\mathcal X_{T-}^\top B^{(T)}_T \Big|\mathcal F_t \right]\\
			=&\mathbb E\left[ \left. \int^T_t\left\{ \left(\mathcal X^\top_s\mathcal{H}^\top_sB^{(T)}_s+\dot B^{(T)}_s \mathcal X_s+\mathcal{D}_s^\top Z^{B,(T)}_s\right)\,ds+ \mathcal{K}^\top Z^{B,(T)}_{s}\,d[\widetilde{X}^c,W]_s\right\}+\int^{T-}_t\mathcal K^\top B^{(T)}_s\,d\widetilde{X}_s\right|\mathcal F_t \right]\\
			&+\mathbb E \left[\left.\sum_{t \leq s<T}\left(B^\top_s \mathcal X_s-B^\top_s \mathcal X_{s-}\right)-B^\top_s\Delta \mathcal X_s\right|\mathcal F_t \right]+ \mathcal X^\top_{t-}B^{(T)}_t\\
			=&\mathbb E\left[ \left.  \int^T_t\left\{ \mathcal X^\top_s\mathcal{H}^\top_sB^{(T)}_s+\dot B^{(T)}_s \mathcal X_s+\mathcal{D}_s^\top Z^{B,(T)}_s \right\}\,ds+\int^{T-}_t\mathcal K^\top B^{(T)}_s\,d\widetilde{X}_s\right|\mathcal F_t \right]+ \mathcal X^\top_{t-}B^{(T)}_t,
		\end{aligned}
	\end{equation}
	where $\dot B^{(T)}$ is the driver and $Z^{B,(T)}:=\begin{pmatrix}
		Z^{B,(T)}_{1} & 
		Z^{B,(T)}_{2}
	\end{pmatrix}^\top$ denotes the diffusion term in the BSDE for $B^{(T)}$. Since $B^{(T)}_{1}=\gamma B^{(T)}_{2}$, we also get that $\mathcal{K}^\top Z^{B,(T)}\equiv 0.$

	Next, we collect all the terms in \eqref{v4} and \eqref{v5} involving jumps:
	\begin{equation}\label{v9}
		\begin{aligned}
			& \mathbb{E}\left[ \left.  \int_{t}^{T-} 2\left(A^{(T)}_{s} \mathcal X_{s-}\right)^{\top} \mathcal{K}\,d \widetilde{X}_{s}+\sum_{t \leq s<T}\left(\mathcal X_{s}^{\top} A^{(T)}_{s} \mathcal X_{s}-\mathcal X_{s-}^{\top} A^{(T)}_{s} \mathcal X_{s-}-2\left(A^{(T)}_{s} \mathcal X_{s-}\right)^{\top} \mathcal{K} \Delta \widetilde{X}_{s}\right)\right|\mathcal F_t \right] \\
			&+\mathbb{E}\left[\left. \int^{T-}_t\mathcal K^\top B^{(T)}_s\,d\widetilde{X}_s\right|\mathcal F_t \right].
		\end{aligned}
	\end{equation}
	
	Since
%	\begin{equation}\label{v10}
	$	\mathcal{K}^{\top} A^{(T)}=\left(\begin{array}{lll}
			1 & -\gamma 
		\end{array}\right) A^{(T)}=\left(\begin{array}{lll}
			0 & \frac{1}{2} 
		\end{array}\right)$,
%	\end{equation}
	we have that	
	\begin{equation*}
		\begin{aligned}
			2\left(A^{(T)}_{s} \mathcal X_{s-}\right)^{\top} \mathcal{K}\,d \widetilde{X}_{s} & =2 \mathcal{K}^{\top} A^{(T)}_{s} \mathcal X_{s-}\,d\widetilde{X}_{s}  =\left(\begin{array}{lll}
				0 & 1 
			\end{array}\right) \mathcal X_{s-}\,d\widetilde{X}_{s}  =\widetilde{Y}_{s-}\,d\widetilde{X}_{s}.
		\end{aligned}
	\end{equation*}
	
	Moreover, the definition of $\mathcal{K}$ implies that
	\begin{equation*}
		\begin{aligned}
			\mathcal X_{s}^{\top} A^{(T)}_{s} \mathcal X_{s}-\mathcal X_{s-}^{\top} A^{(T)}_{s} \mathcal X_{s-}-2\left(A^{(T)}_{s} \mathcal X_{s-}\right)^{\top} \mathcal{K} \Delta \widetilde{X}_{s}&=\Delta \mathcal X_{s}^{\top} A^{(T)}_{s} \Delta \mathcal X_{s} \\
			&=\Delta \widetilde{X}_{s} \mathcal{K}^{\top} A^{(T)}_{s} \Delta \mathcal X_{s}\\
			&=\frac{1}{2} \Delta \widetilde{X}_{s} \Delta \widetilde{Y}_{s}\\
			&=-\frac{\gamma}{2}\left(\Delta \widetilde{X}_{s}\right)^{2},
		\end{aligned}
	\end{equation*}
	and that
	$\mathcal K^\top B^{(T)}_s =B^{(T)}_1-\gamma B^{(T)}_2=0.$
	As a result, the jump terms \eqref{v9} together with the term $\mathbb{E}\left[ \left.  \int_{t}^{T}\left(-\frac{\gamma}{2} d\left[\widetilde{X}^{c}, \widetilde{X}^{c}\right]_{s}+ e^{-\frac{\phi s}{2}}\sigma_{s} d\left[\widetilde{X}^{c}, W\right]_{s}\right)  \right|  \mathcal F_t  \right]$ in \eqref{v4} cancel with the first three terms in \eqref{eq:cost-rewritten-1}. Hence, taking \eqref{v4} and \eqref{v5} into \eqref{eq:cost-rewritten-1} yields that
	\begin{equation*}
		\begin{split}
			&~ J^{(T)}(t,\widetilde X) \\
			= &~ \mathbb{E}\left[\int_{t}^{T} 2\left(A^{(T)}_{s} \mathcal X_{s}\right)^{\top} \mathcal{H}_s\mathcal X_s\,ds+\int^T_t \mathcal X^\top_s\mathcal{H}^\top_sB^{(T)}_s\,ds+\int^T_t 2\mathcal{D}^\top_sZ^{A,(T)}_s \mathcal X_s\,ds+\int^T_t \mathcal{D}_s^\top Z^{B,(T)}_s\,ds \right.\\
			&~~~\left.\left.+\int_{t}^{T} \mathcal{D}_{s}^{\top} A^{(T)}_{s} \mathcal{D}_{s}\,ds+\int_{t}^{T} \mathcal X_{s}^{\top} \mathcal{Q}_s \mathcal X_{s}\,ds+\int^T_t\mathcal X_{s}^{\top}\dot B^{(T)}_s\,ds +\int_{t}^{T} \mathcal X_{s}^{\top} \dot{A}_{s}^{(T)} \mathcal X_{s}\,d s+\int_{t}^{T} \dot{C}^{(T)}_{s}\,ds\right|\mathcal F_t \right] \\
			&~~~ +\mathcal X^\top_{t-} A^{(T)}_t \mathcal X_{t-}+\mathcal X^\top_{t-} B^{(T)}_t +C^{(T)}_t\\
			= &~ \mathbb{E}\left[\int_{t}^{T}\mathcal X_{s}^{\top}\left(\mathcal{Q}_s+\mathcal{H}^{\top}_s A^{(T)}_{s}+A^{(T)}_{s} \mathcal{H}_s +\dot{A}^{(T)}_{s}\right)\mathcal X_{s}\,d s\right. \\
			&~~~+\int^T_t\mathcal X_s^\top\left(\mathcal{H}^\top_sB^{(T)}_s+\dot B^{(T)}_s+2Z^{A,(T)}_s\mathcal{D}_s\right)ds\\
			&~~~\left.\left.+\int_{t}^{T}\left(\mathcal{D}_{s}^\top A^{(T)}_{s} \mathcal{D}_{s}+\dot{C}^{(T)}_{s}+\mathcal{D}_s^\top Z^{B,(T)}_s \right) ds\right|\mathcal F_t\right]  +\mathcal X^\top_{t-} A^{(T)}_t\mathcal X_{t-} +\mathcal X^\top_{t-} B^{(T)}_t+C^{(T)}_t,
		\end{split}
	\end{equation*}
where $\dot C^{(T)}$ denotes the driver of the BSDE for $C^{(T)}$. 
	We rewritte the driver for $(A^{(T)},B^{(T)},C^{(T)})$ in the matrix form as:	
	\begin{equation*}
		\left\{\begin{aligned}
			&~\frac{\left(I^{A,(T)}\right)^{\top} I^{A,(T)}}{a} =\mathcal{Q}+\mathcal{H}^{\top} A^{(T)}+A^{(T)} \mathcal{H}+\dot{A}^{(T)}, \\
			&~\frac{2\left(I^{A,(T)}\right)^{\top}I^{B,(T)}}{a}=\mathcal{H}^\top B^{(T)}+2Z^{A,(T)}\mathcal{D}+\dot B^{(T)},\\
			&~\frac{(I^{B,(T)})^2}{a} =\mathcal{D}^{\top} A^{(T)}\mathcal D + \mathcal D^\top Z^{B,(T)} +\dot{C}^{(T)}.
		\end{aligned}\right.
	\end{equation*}
	Thus, we can get the alternative expression for the cost functional:
	\begin{equation}\label{cost-T-2}
		J^{(T)}(t, \widetilde X)=\mathbb{E}\left[  \left.  \int_{t}^{T} \frac{1}{a_s}(I_{s}^{A,(T)}\mathcal X_{s}+I^{B,(T)}_s)^{2}\,ds\right |\mathcal F_t\right]+\mathcal X^\top_{t-} A^{(T)}_t\mathcal X_{t-} +\mathcal X^\top_{t-} B^{(T)}_t+C^{(T)}_t.
	\end{equation}	
	On the one hand, we see that
	\begin{equation*}\label{alt}
		J^{(T)}(t, \widetilde X)\geq \mathcal X^\top_{t-} A^{(T)}_t\mathcal X_{t-} +\mathcal X^\top_{t-} B^{(T)}_t+C^{(T)}_t.
	\end{equation*}	
	On the other hand , we can easily check that the state processes given by \eqref{state-T-X} and \eqref{state-T-Y} satisfy the equality $I^{A,(T)}  {\mathcal X}^{*,(T)}+I^{B,(T)}\equiv0$ with $\mathcal X^{*,(T)}:=(\widetilde X^{*,(T)},\widetilde Y^{*,(T)})^\top$, which implies that
	\begin{equation*}\label{alt}
		J^{(T)}(t, \widetilde X^{*,(T)})= \mathcal X^\top_{t-} A^{(T)}_t\mathcal X_{t-} +\mathcal X^\top_{t-} B^{(T)}_t+C^{(T)}_t.
	\end{equation*}	
	Therefore, $\widetilde X^{*,(T)}$ defined in \eqref{state-T-X} is indeed optimal if it is a semimartingale.
	
	Uniqueness follows from the strict convexity of $J^{(T)}$; refer to \eqref{cost-T-2}.
\end{proof}

\section{Heuristic derivation of BSDEs}\label{sec:heuristics}
In this section, we motivate the BSDEs characterizing the value function and optimal strategy by considering heuristic limits of discrete time models. 
\subsection{The model with infinite horizon }
Assume trading occurs at $n\Delta$, $n=0,1,\cdots$ Let $n\Delta=t$. 
Given $\lim_{\Delta\rightarrow 0}(1-\Delta\phi)^n=\lim_{\Delta\rightarrow 0} e^{n\log(  1-\Delta\phi  )}=\lim_{\Delta\rightarrow 0}  e^{ \frac{\log(1-\Delta\phi)}{\Delta}t  }=e^{-\phi t}$, the continuous time model can be discretized as\footnote{Here, $Y_{n\Delta-}$ should be understood as the price deviation after the arrivial of the external flow but before the arrival of the internal flow. That is, $Y_{n\Delta-}=Y^{\textrm{before}}_{n\Delta-}+Y^{e}_{n\Delta-}$, where $Y^{\textrm{before}}_{n\Delta-}$ is the deviation before the arrival of both the external and the internal flows at time $n\Delta$, and $Y^e_{n\Delta-}$ is the deviation caused by the external flow at $n\Delta$. This is consistent with the "no-frontrunning" assumption in \cite{MuhleKarbe}. Indeed, the same as \cite{FHX-2023} and \cite{MuhleKarbe}, linearity of the price deviation yields the decomposition $Y=Y^e+Y^o$, where the deviations caused by the external flow and the internal flow follow $dY^e_t=-\rho_t Y^e_t\,dt+\sigma_t\,dW_t$, respectively, $dY^o_t=-\rho_tY^o_t\,dt+\gamma\,dX_t$. In the discrete time model, the no-frontrunning assumption yields the trading profit $\sum_n(1-\phi\Delta)^n( Y^{\textrm{before}}_{n\Delta-} + \Delta Y^e_{n\Delta}+\frac{1}{2}\Delta Y^o_{n\Delta} )\Delta X_{n\Delta}$, where we assume the order is executed at the mid-price. One can verify directly that the heuristic limit of the sum correponds to the first three terms in \eqref{cost-inf-continuous}. }
\begin{equation}\label{cost-inf-discrete}
	J(\xi) = \mathbb E\left[  \sum_{n=0}^\infty (1-\Delta\phi)^n \left(    Y_{n\Delta-}\xi_n + \frac{\gamma}{2}\xi^2_n+\lambda_{n\Delta}\Delta X^2_{n\Delta-}       \right)     \right],
\end{equation}
such that
\begin{equation}\label{state-inf-discrete}
	\left\{\begin{split}
		X_{(n+1)\Delta-} =&~ X_{n\Delta-} - \xi_n \\
		Y_{(n+1)\Delta-} = &~ (1-\Delta\rho_{n\Delta})( Y_{n\Delta-} + \gamma\xi_n   ) + \sigma_{n\Delta} \epsilon_{n+1},
	\end{split}\right. 
\end{equation}
where $\xi_n$ is the traded volume at $n\Delta$, and $\{\epsilon_n\}$ is a sequence of i.i.d. r.v.s. following $N(0,\Delta)$. 

Consider the transformation
\begin{equation*}
	\left\{\begin{split}
		(1-\Delta\phi)^{\frac{n}{2}} X_{n\Delta} = &~ \widetilde X_{n\Delta}\\
		(1-\Delta\phi)^{\frac{n}{2}} Y_{n\Delta} =&~ \widetilde Y_{n\Delta}\\
		(1-\Delta\phi)^{\frac{n}{2}}\xi_n=&~\widetilde\xi_n.
	\end{split}\right. 
\end{equation*}
Then \eqref{cost-inf-discrete} and \eqref{state-inf-discrete} can be rewritten as
\begin{equation}\label{cost-inf-discrete-tild}
	J(\widetilde\xi) = \mathbb E\left[   \sum_{n=0}^\infty  \widetilde Y_{n\Delta-}\widetilde \xi_{n}+\frac{\gamma}{2} \widetilde\xi^2_n+\lambda_{n\Delta}\Delta\widetilde X^2_{n\Delta-} 				       \right],
\end{equation}
respectively,
\begin{equation}\label{state-inf-discrete-tild}
	\left\{\begin{split}
		\widetilde X_{(n+1)\Delta-} =&~ (1-\Delta\phi)^{\frac{1}{2}}(	\widetilde X_{n\Delta-}-\widetilde\xi_n	)\\
		\widetilde Y_{(n+1)\Delta-} = &~(1-\Delta\rho_{n\Delta})(  1-\Delta\phi  )^{\frac{1}{2}} ( \widetilde Y_{n\Delta-}+\gamma\widetilde\xi_n  )+(  1-\Delta\phi  )^{\frac{n+1}{2}}\sigma_{n\Delta}\epsilon_{n+1}.
	\end{split}\right.
\end{equation}

\subsection{The model with finite horizon}

Consider the finite horizon problem where trading only occurs at $n\Delta$, $n=0,1,\cdots,N$ and $X_{N\Delta}=0$. Let the cost functional be denoted by $J^{(N)}$. Then
\[
J^{(N)}(\xi) = \mathbb E\left[   \sum_{n=0}^N   \widetilde Y_{n\Delta-}\widetilde \xi_{n}+\frac{\gamma}{2} \widetilde\xi^2_n+\lambda_{n\Delta}\Delta\widetilde X^2_{n\Delta-} 					     \right]. 
\]
Let $\mathcal X = ( \widetilde X,\widetilde Y  )^\top$. Then
\begin{equation}
	J^{(N)}(\xi) = \mathbb E\left[ \sum_{n=0}^N \mathcal L^\top \mathcal X_{n\Delta-}\widetilde\xi_n+\mathcal R\widetilde \xi^2_n+\mathcal X_{n\Delta-}^\top\mathcal Q_{n\Delta}\mathcal X_{n\Delta-}        \right]
\end{equation}
and
\begin{equation}\label{state-finite-discrete-vector}
	\mathcal X_{(n+1)\Delta-}= \mathcal A_{n\Delta}\mathcal X_{n\Delta-}+\mathcal B_{n\Delta}\widetilde\xi_n+\mathcal D_{n\Delta}\epsilon_{n+1},
\end{equation}
where 
\[
\mathcal L=\begin{pmatrix}	0 & 1		\end{pmatrix}^\top, \quad \mathcal R = \frac{\gamma}{2},\quad \mathcal Q_{n\Delta} = \begin{pmatrix}  \lambda_{n\Delta}\Delta & 0\\ 0 & 0     \end{pmatrix}, 
\] 
and
\[
\mathcal A_{n\Delta} = (1-\Delta\phi)^{\frac{1}{2}}\begin{pmatrix}   1& 0\\ 0 & 1-\Delta\rho_{n\Delta}		     \end{pmatrix},\quad \mathcal B_{n\Delta} = (1-\Delta\phi)^{\frac{1}{2}}  \begin{pmatrix}     -1 \\ (1-\Delta\rho_{n\Delta})\gamma \end{pmatrix},\quad \mathcal D_{n\Delta} = (1-\Delta\phi)^{ \frac{n+1}{2} }\begin{pmatrix}   0\\  \sigma_{n\Delta} \end{pmatrix}.
\]
By DPP, it holds that given the initial state $\mathcal X_{n\Delta-}$
\begin{equation}\label{DPP-1}
	V_n(\mathcal X_{n\Delta-}		) = \inf_{\widetilde\xi_n} \mathbb E_n\left[   \mathcal L^\top \mathcal X_{n\Delta-}\widetilde\xi_n + \mathcal R\widetilde\xi^2_n+ \mathcal X^\top_{n\Delta-}\mathcal Q_{n\Delta}\mathcal X_{n\Delta-} +	V_{n+1}( \mathcal X_{(n+1)\Delta-}^{(\widetilde\xi)}  )		   \right],
\end{equation}
where the superscript in $\mathcal X^{(\widetilde\xi)}$ is used to emphasize the strategy in \eqref{state-finite-discrete-vector} is $\widetilde\xi$, and $\mathbb E_n$ denotes the conditional expectation given the randomness up to time $n\Delta$.

For a $\delta>0$, let $\widetilde\xi_n=\widehat\xi_n+\delta$ and $\widehat{\mathcal X}_{n\Delta-}=\mathcal X_{n\Delta-}+\mathcal X^o$, where $\mathcal X^o:=\begin{pmatrix}    -\delta  & \gamma\delta			  \end{pmatrix}^\top$. Then it holds that 
\begin{equation*}
	\begin{split}
		\widehat{\mathcal X}^{\widehat\xi}_{(n+1)\Delta-} :=&~ \mathcal A_{n\Delta} \widehat{\mathcal X}_{n\Delta-}+\mathcal B_{n\Delta} \widehat\xi_n + \mathcal D_{n\Delta}\epsilon_{n+1}\\
		=&~ \mathcal A_{n\Delta}( \mathcal X_{n\Delta-}+\mathcal X^o	)+\mathcal B_{n\Delta}( \widetilde\xi_-\delta ) + \mathcal D_{n\Delta}\epsilon_{n+1}\\
		=&~\mathcal A_{n\Delta}\mathcal X_{n\Delta-}+\mathcal B_{n\Delta}\widetilde\xi_n + \mathcal D_{n\Delta}\epsilon_{n+1} +\mathcal A_{n\Delta}\mathcal X^o-\delta\mathcal B_{n\Delta}\\
		=&~ \mathcal X_{(n+1)\Delta-}^{(\widetilde\xi)}.
	\end{split} 
\end{equation*}
Thus, from \eqref{DPP-1} we have
\begin{equation}\label{DPP-2}
	\begin{split}
		V_n( \mathcal X_{n\Delta-} ) =&~ \inf_{\widetilde\xi_n}\mathbb E_n\left[  \mathcal L^\top(   \widehat{\mathcal X}_{n\Delta-}  -\mathcal X^o	  ) (\widehat\xi_n+\delta)  +\mathcal R(\widehat\xi_n+\delta)^2    +(\widehat{\mathcal X}_{n\Delta-}-\mathcal X^o	)^\top\mathcal Q_{n\Delta}(   \widehat{\mathcal X}_{n\Delta-}-\mathcal X^o     ) +V_{n+1}( \widehat{\mathcal X}^{(\widehat\xi)}_{(n+1)\Delta-} )\right]\\
		=&~\inf_{\widehat\xi_n} \mathbb E_n\left[		\mathcal L^\top\widehat{\mathcal X}_{n\Delta-}\widehat\xi_n+\mathcal R\widehat\xi^2_n+\widehat{\mathcal X}^\top_{n\Delta-}\mathcal Q_{n\Delta}\widehat{\mathcal X}_{n\Delta-} + V_{n+1}( \widehat{\mathcal X}_{(n+1)\Delta-}^{(\widehat\xi)} ) 			\right]\\
		&~+\delta\mathcal L^\top\widehat{\mathcal X}-\delta\mathcal L^\top\mathcal X^o+\delta^2\mathcal R-2(\mathcal X^o)^\top\mathcal Q_{n\Delta}\widehat{\mathcal X}_{n\Delta-}+(\mathcal X^o)^\top\mathcal Q_{n\Delta}\mathcal X^o\\
		=&~V_n(\widehat{\mathcal X}_{n\Delta-})  +\delta\mathcal L^\top\widehat{\mathcal X}_{n\Delta-}-\delta\mathcal L^\top\mathcal X^o+\delta^2\mathcal R-2(\mathcal X^o)^\top\mathcal Q_{n\Delta}\widehat{\mathcal X}_{n\Delta-}+(\mathcal X^o)^\top\mathcal Q_{n\Delta}\mathcal X^o. 
	\end{split}
\end{equation}
Consider the ansatz 
\begin{equation}\label{ansatz-discrete}
	V_n(\mathcal X) = \mathcal X^\top A_n\mathcal X+B_n^\top \mathcal X+C_n,
\end{equation}
where $A_n$ is $\mathbb R^{2\times 2}$-valued, $B_n$ is $\mathbb R^2$-valued and $C_n$ is $\mathbb R$-valued. Then 
\begin{equation}\label{DPP-3}
	\begin{split}
		V_n(\widehat{\mathcal X}_{n\Delta-}) =&~ \widehat{\mathcal X}^\top_{n\Delta-} A_n\widehat{\mathcal X}_{n\Delta-}+B_n^\top \widehat{\mathcal X}_{n\Delta-}+C_n\\
		=&~ ({\mathcal X}_{n\Delta-}+\mathcal X^o)^\top A_n ({\mathcal X}_{n\Delta-}+\mathcal X^o)+B_n^\top  ({\mathcal X}_{n\Delta-}+\mathcal X^o)+C_n\\
		=&~ V_n(\mathcal X_{n\Delta-}) + (\mathcal X^o)^\top A_n\mathcal X^o +2(\mathcal X^o)^\top A_n\mathcal X_{n\Delta-} + B^\top_n\mathcal X^o.
	\end{split}
\end{equation}
From \eqref{DPP-2} and \eqref{DPP-3}, we have 
\begin{equation*}
	\begin{split}
		&~\delta\mathcal L^\top\widehat{\mathcal X}_{n\Delta-}-\delta\mathcal L^\top\mathcal X^o+\delta^2\mathcal R-2(\mathcal X^o)^\top\mathcal Q_{n\Delta}\widehat{\mathcal X}_{n\Delta-}+(\mathcal X^o)^\top\mathcal Q_{n\Delta}\mathcal X^o\\
		=&~ -  (\mathcal X^o)^\top A_n\mathcal X^o  - 2(\mathcal X^o)^\top A_n\mathcal X_{n\Delta-} - B^\top_n\mathcal X^o,
	\end{split}
\end{equation*}
or equivalently,
\begin{equation*}
	\begin{split}
		&~O(\delta^2)-2\delta\begin{pmatrix}  -1 & \gamma    \end{pmatrix}A_n\mathcal X_{n\Delta-}-\delta B^\top_n\begin{pmatrix}-1 & \gamma\end{pmatrix}^\top \\
		=&~ \delta \begin{pmatrix}  0 & 1   \end{pmatrix}\mathcal X_{n\Delta-}-2\delta\begin{pmatrix}    -\Delta\lambda_{n\Delta} & 0 \end{pmatrix}\mathcal X_{n\Delta-}.
	\end{split}
\end{equation*}
Divided by $\delta$ on both sides and letting $\delta\rightarrow 0$, we have 
\[
-2\begin{pmatrix}  	 -1  & \gamma		    \end{pmatrix} A_n\mathcal X_{n\Delta-} - B^\top_n\begin{pmatrix}  -1 & \gamma    \end{pmatrix}^\top = \begin{pmatrix}      0 & 1 \end{pmatrix}\mathcal X_{n\Delta-} -2\begin{pmatrix}  -\Delta\lambda_{n\Delta}  & 0    \end{pmatrix}\mathcal X_{n\Delta-}.
\]
Compare the coefficients on $\mathcal X_{n\Delta-}$ and the constant terms, one has 
\begin{equation*}
	\left\{\begin{split}
		-2\begin{pmatrix}  	 -1  & \gamma		    \end{pmatrix} A_n =&~ \begin{pmatrix}    2\Delta\lambda_{n\Delta} & 1 \end{pmatrix}\\
		\begin{pmatrix}		-1 & \gamma	\end{pmatrix} B_n=&~0,
	\end{split}\right.
\end{equation*}
which implies that
\begin{equation}\label{relation:AB-discrete}
	\left\{\begin{split}
		A_{11,n}=&~\gamma A_{12,n}+\Delta\lambda_{n\Delta},\\ 
		A_{12,n} = &~ \gamma A_{22,n}+\frac{1}{2},\\
		B_{1,n}=&~\gamma B_{2,n}.
	\end{split}\right. 
\end{equation}
By the ansatz \eqref{ansatz-discrete}, 
\begin{equation}\label{Vn+1-1}
	\begin{split}
		V_{n+1}\left(\mathcal X^{(\widetilde\xi)}_{ (n+1)\Delta- }		\right) = &~ (\mathcal X^{(\widetilde\xi)}_{ (n+1)\Delta-})^\top A_{n+1}\mathcal X^{(\widetilde\xi)}_{ (n+1)\Delta- }+B_{n+1}^\top \mathcal X^{(\widetilde\xi)}_{ (n+1)\Delta- }+C_{n+1}\\
		=&~ \left(\mathcal A_{n\Delta}\mathcal X_{n\Delta-}+\mathcal B_{n\Delta}\widetilde\xi_n+\mathcal D_{n\Delta}\epsilon_{n+1} \right)^\top A_{n+1}\left(  	\mathcal A_{n\Delta}\mathcal X_{n\Delta-}+\mathcal B_{n\Delta}\widetilde\xi_n+\mathcal D_{n\Delta}\epsilon_{n+1}			     \right)\\
		&~ + B_{n+1}^\top \left(   \mathcal A_{n\Delta}\mathcal X_{n\Delta-}+\mathcal B_{n\Delta}\widetilde\xi_n+\mathcal D_{n\Delta}\epsilon_{n+1}       \right) + C_{n+1}.
	\end{split}
\end{equation}
Taking \eqref{Vn+1-1} into \eqref{DPP-1}, one has 
\begin{equation}\label{DPP-4}
	\begin{split}
		V_n(\mathcal X_{n\Delta-}		) =&~ \inf_{\widetilde\xi_n} \mathbb E_n\left[   \mathcal L^\top \mathcal X_{n\Delta-}\widetilde\xi_n + \mathcal R\widetilde\xi^2_n+ \mathcal X^\top_{n\Delta-}\mathcal Q_{n\Delta}\mathcal X_{n\Delta-} \right.\\
		&~\left.+	\left(\mathcal A_{n\Delta}\mathcal X_{n\Delta-}+\mathcal B_{n\Delta}\widetilde\xi_n+\mathcal D_{n\Delta}\epsilon_{n+1} \right)^\top A_{n+1}\left(  	\mathcal A_{n\Delta}\mathcal X_{n\Delta-}+\mathcal B_{n\Delta}\widetilde\xi_n+\mathcal D_{n\Delta}\epsilon_{n+1}			     \right) \right.\\
		&~\left. + B_{n+1}^\top \left(   \mathcal A_{n\Delta}\mathcal X_{n\Delta-}+\mathcal B_{n\Delta}\widetilde\xi_n+\mathcal D_{n\Delta}\epsilon_{n+1}       \right) + C_{n+1}  \right]\\
		=&~ \inf_{\widetilde\xi^n}\left\{         \mathcal L^\top \mathcal X_{n\Delta-}\widetilde\xi_n + \mathcal R\widetilde\xi^2_n+ \mathcal X^\top_{n\Delta-}\mathcal Q_{n\Delta}\mathcal X_{n\Delta-}  \right.\\
		&~\left.   +\mathbb E_n\left[ \left(\mathcal A_{n\Delta}\mathcal X_{n\Delta-}+\mathcal B_{n\Delta}\widetilde\xi_n  \right)^\top A_{n+1}\left(  	\mathcal A_{n\Delta}\mathcal X_{n\Delta-}+\mathcal B_{n\Delta}\widetilde\xi_n      \right)  \right]    \right.\\
		&~+ 2\mathbb E_n\left[(\mathcal A_{n\Delta}\mathcal X_{n\Delta-} + \mathcal B_{n\Delta}\widetilde\xi_n		)^\top A_{n+1} \mathcal D_{n\Delta}\epsilon_{n+1}\right]	\\
		&~\left. + \mathbb E_n\left[B_{n+1}^\top \left(   \mathcal A_{n\Delta}\mathcal X_{n\Delta-}+\mathcal B_{n\Delta}\widetilde\xi_n      \right)\right]  \right\}\\
		&~+\mathbb E_n\left[	\mathcal D_{n\Delta}^\top A_{n+1}\mathcal D_{n\Delta}\epsilon_{n+1}^2 	+B_{n+1}^\top \mathcal D_{n\Delta}\epsilon_{n+1} + C_{n+1}	\right].
	\end{split}
\end{equation}
The minimization in \eqref{DPP-4} implies that 
\begin{equation}\label{xi-star-discrete}
	\begin{split}
		\widetilde\xi_n^*=&~-\left(2\mathcal R+2\mathcal B_{n\Delta}^\top\mathbb E_n[ A_{n+1} ]\mathcal B_{n\Delta} 		\right)^{-1} \Big\{     \left(  \mathcal L^\top+    2\mathcal B^\top_{n\Delta}\mathbb E_n[A_{n+1}]\mathcal A_{n\Delta}   \right) \mathcal X_{n\Delta-} 	 \\
		&~  \qquad\qquad \qquad \qquad \qquad\qquad \qquad\quad + 	 2\mathcal B_{n\Delta}^\top \mathbb E_n[A_{n+1}\epsilon_{n+1}]\mathcal D_{n\Delta} + \mathbb E_n[	B_{n+1}		]^\top\mathcal B_{n\Delta}     \Big\}.
	\end{split}
\end{equation}
Taking \eqref{xi-star-discrete} into \eqref{DPP-4}, we get
\begin{align*}
%	\begin{split}
		&~V_n(\mathcal X_{n\Delta-})\\
		= &~-\frac{1}{4} \left( \mathcal R+\mathcal B^\top_{n\Delta}\mathbb E_n[A_{n+1}]\mathcal B_{n\Delta}		\right)^{-1}\Big\{   	(  \mathcal L^\top+2\mathcal B_{n\Delta}^\top \mathbb E_n[A_{n+1}]\mathcal A_{n\Delta}    )\mathcal X_{n\Delta-}			\\
		&~\qquad\qquad \qquad\qquad\qquad\qquad\qquad \quad  + 2\mathcal B_{n\Delta}^\top\mathbb E_n[A_{n+1}\epsilon_{n+1}]\mathcal D_{n\Delta}+\mathbb E_n[	B_{n+1}	]^\top\mathcal B_{n\Delta }  \Big\}^2\\
		&~+\mathcal X_{n\Delta-}^\top\mathcal Q_{n\Delta}\mathcal X_{n\Delta-} + \mathcal X^\top_{n\Delta-}\mathcal A^\top_{n\Delta}\mathbb E_n[A_{n+1}] \mathcal A_{n\Delta}\mathcal X_{n\Delta-}+2\mathcal X_{n\Delta-}^\top \mathcal A^\top_{n\Delta-}\mathbb E_n[ A_{n+1}\epsilon_{n+1} ]\mathcal D_{n\Delta} \\
		&~+ \mathbb E_n[ B_{n+1} ]^\top\mathcal A_{n\Delta}\mathcal X_{n\Delta-} + \mathcal D^\top_{n\Delta}\mathbb E_n[ A_{n+1}\epsilon^2_{n+1} ]\mathcal D_{n\Delta} + \mathbb E_n[B_{n+1}]^\top\mathcal D_{n\Delta}\epsilon_{n+1} +\mathbb E_n[C_{n+1}] \\
		=&~ \mathcal X^\top_{n\Delta-} \Big\{    -\frac{1}{4}\left( \mathcal R+\mathcal B^\top_{n\Delta}\mathbb E_n[A_{n+1}]\mathcal B_{n\Delta}		\right)^{-1} ( \mathcal L^\top+2\mathcal B^\top_{n\Delta}\mathbb E_n[A_{n+1}]\mathcal A_{n\Delta}   )^\top (	\mathcal L^\top+2\mathcal B^\top_{n\Delta}\mathbb E_n[A_{n+1}]\mathcal A_{n\Delta} 	) \\
		&~\qquad\quad   +\mathcal Q_{n\Delta}+\mathcal A^\top_{n\Delta}\mathbb E_n[A_{n+1}]\mathcal A_{n\Delta}						       \Big\}\mathcal X_{n\Delta-} \\
		&~+ \Big\{    -\frac{1}{2}\left(    \mathcal R+\mathcal B^\top_{n\Delta}\mathbb E_n[A_{n+1}]\mathcal B_{n\Delta}	     \right)^{-1}	(  2\mathcal B_{n\Delta}^\top\mathbb E_n[A_{n+1}\epsilon_{n+1}]\mathcal D_{n\Delta}+\mathbb E_n[ B_{n+1} ]^\top\mathcal B_{n\Delta}   )	(  \mathcal L^\top+2\mathcal B_{n\Delta}^\top\mathbb E_n[A_{n+1}]\mathcal A_{n\Delta}  )	\\
		&~\qquad + 	2\mathcal D_{n\Delta}^\top \mathbb E_n[A_{n+1}\epsilon_{n+1}]\mathcal A_{n\Delta}  + \mathbb E_n[ B_{n+1}		]^\top\mathcal A_{n\Delta}			    \Big\} \mathcal X_{n\Delta-}\\
		&~-\frac{1}{4}\left(    \mathcal R+\mathcal B^\top_{n\Delta}\mathbb E_n[A_{n+1}]\mathcal B_{n\Delta}	     \right)^{-1} \left(	 2\mathcal B_{n\Delta}^\top\mathbb E_n[A_{n+1}\epsilon_{n+1}]\mathcal D_{n\Delta} + \mathbb E_n[B_{n+1}]^\top\mathcal B_{n\Delta}				\right)^2\\
		&~+\mathcal D_{n\Delta}^\top\mathbb E_n[A_{n+1}\epsilon^2_{n+1}]\mathcal D_{n\Delta}+ \mathbb E_n[ B_{n+1}\epsilon_{n+1} ]^\top\mathcal D_{n\Delta}+\mathbb E_n[C_{n+1}]\\
		=&~\mathcal X^\top_{n\Delta-} A_n\mathcal X_{n\Delta-}+B_n^\top \mathcal X_{n\Delta-}+C_n.
%	\end{split}
\end{align*}
By comparing the coefficients of $\mathcal X^\top_{n\Delta-}(\cdots)\mathcal X_{n\Delta-}$, $(\cdots)\mathcal X_{n\Delta-}$ and other terms, we have 
\begin{equation}
	\left\{\begin{split}
		A_n=&~\mathcal Q_{n\Delta} + \mathcal A^\top_{n\Delta}\mathbb E_n[ A_{n+1} ]\mathcal A_{n\Delta}-   a_n^{-1}(I_n^A)^\top I_n^A\\
		B_n^\top=&~ -2a_n^{-1}I_n^B I_n^A+\mathbb E_n[ B_{n+1} ]^\top\mathcal A_{n\Delta} + 2\mathcal D_{n\Delta}^\top \mathbb E_n[A_{n+1}\epsilon_{n+1}]\mathcal A_{n\Delta }\\
		C_n=&~ -a_n^{-1}(I^B_n)^2 + \mathcal D_{n\Delta}^\top\mathbb E_n[ A_{n+1}\epsilon^2_{n+1} ]\mathcal D_{n\Delta} + \mathbb E_n[ B_{n+1}\epsilon_{n+1} ]^\top\mathcal D_{n\Delta} +\mathbb E_n[C_{n+1}],
	\end{split}\right. 
\end{equation}
where 
\begin{equation*}
	\left\{\begin{split}
		a_n=&~\mathcal R+\mathcal B_{n\Delta}^\top\mathbb E_n[ A_{n+1} ] \mathcal B_{n\Delta},\\
		I_n^A = &~ \frac{1}{2}\mathcal L^\top+\mathcal B_{n\Delta}^\top\mathbb E_n[A_{n+1}]\mathcal A_{n\Delta},\\
		I_n^B=&~\frac{1}{2}\mathbb E_n[B_{n+1}]^\top \mathcal B_{n\Delta} + \mathcal B^\top_{n\Delta}\mathbb E_n[ A_{n+1}\epsilon_{n+1} ]\mathcal D_{n\Delta}. 
	\end{split}\right. 
\end{equation*}
Let $\mathcal X:=(x,y)^\top$ be the initial position at $N$, i.e. $(X_{N\Delta-},Y_{N\Delta-})=(x,y)$. Then the optimal strategy at $N\Delta$ is $\xi_N=x$ to close the position immediately. The cost due to this strategy is $(1-\Delta\phi)^N(xy+\frac{\gamma}{2}x^2+\lambda_{N\Delta}\Delta x^2)=\widetilde x\widetilde y+\frac{\gamma}{2}\widetilde x^2+\lambda_{N\Delta}\Delta\widetilde x^2=\begin{pmatrix}\widetilde x & \widetilde y    \end{pmatrix} \begin{pmatrix}  \frac{\gamma}{2}+\lambda_{N\Delta}\Delta & \frac{1}{2} \\  \frac{1}{2}  & 0    \end{pmatrix}\begin{pmatrix}  \widetilde x\\ \widetilde y   \end{pmatrix}$. Thus, the terminal conditions can be determined
\[
A_{N\Delta} = \begin{pmatrix}  \frac{\gamma}{2}+\lambda_{N\Delta}\Delta & \frac{1}{2} \\  \frac{1}{2}  & 0    \end{pmatrix},\quad B_{N\Delta} = \begin{pmatrix}  0 \\ 0  \end{pmatrix},\quad C_{N\Delta}=0. 
\]

We conjecture that the limits of $A_n$, $B_n$ and $C_n$ follow BSDEs. Thus, it is sufficient to motivate the driver. Specifically, 
\begin{equation*}
	\begin{split}
		\mathbb E_n\left[\frac{A_{n}-A_{n+1}}{\Delta}\right] = \widetilde{\mathcal Q}_{n\Delta}+\frac{	\widetilde{\mathcal A}_{n\Delta}\mathbb E_n[A_{n+1}]\widetilde{ \mathcal A}_{n\Delta} -\mathbb E_n[  A_{n+1}  ]			}{\Delta}  - \phi \widetilde{\mathcal A}_{n\Delta}\mathbb E_n[ A_{n+1} ]\widetilde{\mathcal A}_{n\Delta} -\frac{(I_n^A)^\top}{\Delta} \frac{\Delta}{a_n} \frac{ I_n^A }{\Delta},
	\end{split}
\end{equation*}
where 
\[
\widetilde{\mathcal Q}_{n\Delta}=\begin{pmatrix}  \lambda_{n\Delta} & 0 \\ 0 & 0    \end{pmatrix},\qquad \widetilde{\mathcal A}_{n\Delta} = \begin{pmatrix} 	1 & 0 \\ 1 & 1-\Delta\rho_{n\Delta}		    \end{pmatrix}.
\]
Straightforward calculation implies that the driver of $A$ follows
\[
-\dot A_t  =  \widetilde{\mathcal Q}_t+\mathcal H^\top_t A_t+A_t\mathcal H_t -\phi A_t-\frac{(I_t^A)^\top I_t^A}{a_t},
\]
where 
\[
a_t=\gamma\rho_t+\lambda_t+\frac{\phi\gamma}{2},\qquad \mathcal H_t=\begin{pmatrix}  0 & 0\\ 0 & -\rho_t  \end{pmatrix},\qquad I^A_t=\begin{pmatrix} -\rho_t A_{11,t}-\lambda_t & -\frac{\rho_t A_{11,t}}{\gamma} +\rho_t+\frac{\phi}{2}     \end{pmatrix}. 
\]
Thus, the limit $A$ satisfies the following BSDE
\begin{equation}
	\left\{\begin{split}
		-dA_t=&~\left\{\widetilde{\mathcal Q}_t+\mathcal H^\top_t A_t+A_t\mathcal H_t -\phi A_t-\frac{(I_t^A)^\top I_t^A}{a_t}\right\}\,dt - Z^A_t\,dW_t\\
		A_T=&~\begin{pmatrix}  \frac{\gamma}{2} & \frac{1}{2} \\  \frac{1}{2}  & 0    \end{pmatrix}.
	\end{split}\right. 
\end{equation}
Moreover, heuristically, we have
\[
\frac{\mathbb E_n[A_{n+1}\epsilon_{n+1}]}{\Delta} = \mathbb E_n\left[\frac{A_{n+1}-A_n }{\Delta}\epsilon_{n+1}\right] \rightarrow Z^A_t.
\]
Thus, we have the following heuristic convergence 
\begin{equation*}
	\begin{split}
		\mathbb E_n\left[  \frac{  B_n^\top- B_{n+1}^\top		}{\Delta}         \right] =&~ -\frac{ \Delta   }{ a_n   } \frac{I_n^B}{\Delta}\frac{I_n^A}{\Delta}  + \frac{\mathbb E_n[  B_{n+1} ]^\top \mathcal A_{n\Delta} - \mathbb E_n[B_{n+1}]^\top     }{\Delta}  + \frac{2\mathcal D_{n\Delta}^\top \mathbb E_n[   A_{n+1}\epsilon_{n+1}   ]\mathcal A_{n\Delta}  }{\Delta}\\
		\rightarrow&~-\frac{2I^B_tI^A_t}{a_t}+ B^\top_t\mathcal H_t-\frac{\phi}{2}B^\top_t + 2\mathcal D_t Z^A_t,
	\end{split}
\end{equation*}
where
\[
I^B_t = -\frac{1}{2}\rho_t B_{1,t}+\begin{pmatrix}    -1 & \gamma    \end{pmatrix}Z^A_t\mathcal D_t=  -\frac{1}{2}\rho_t B_{1,t}   ,\qquad \mathcal D_t = e^{-\frac{\phi t}{2}}\begin{pmatrix}  	0 & \sigma_t 			    \end{pmatrix}^\top. 
\]
The limit $B$ is conjectued to satisfy the following BSDE
\begin{equation}
	\left\{\begin{split}
		-dB^\top_t = &~\left\{-\frac{2I^B_tI^A_t}{a_t}+ B^\top_t\mathcal H_t-\frac{\phi}{2}B^\top_t + 2\mathcal D_t Z^A_t  \right\}\,dt-Z^B_t\,dW_t\\
		B^\top_T=&~\begin{pmatrix}  0 & 0    \end{pmatrix}.
	\end{split}\right.
\end{equation}
Finally, by the heuristic limit
\begin{equation*}
	\begin{split}
		\frac{1}{\Delta}\mathcal D_{n\Delta}^\top\mathbb E_n[ A_{n+1}\epsilon^2_{n+1} ]\mathcal D_{n\Delta} =&~ \mathcal D_{n\Delta}^\top\mathbb E_n\left[ \frac{A_{n+1}-A_n}{\Delta}\epsilon^2_{n+1} \right]\mathcal D_{n\Delta}+  \frac{1}{\Delta}\mathcal D_{n\Delta}^\top\mathbb E_n[ A_{n}\epsilon^2_{n+1} ]\mathcal D_{n\Delta} \\
		\rightarrow&~ \mathcal D_t^\top \langle A,t \rangle_t\mathcal D_t+\mathcal D^\top_t A_t\mathcal D_t\\
		=&~\mathcal D^\top_t A_t\mathcal D_t,
	\end{split}
\end{equation*}
we have 
\begin{equation*}
	\begin{split}
		\mathbb E_n\left[  \frac{	C_n-C_{n+1}	}{\Delta}     \right] =&~ -\frac{\Delta}{a_n} \frac{(I_n^B)^2}{\Delta^2}+\frac{1}{\Delta}\mathcal D_{n\Delta}^\top\mathbb E_n[ A_{n+1}\epsilon^2_{n+1} ]\mathcal D_{n\Delta} + \mathbb E_n\left[ \frac{B_{n+1}-B_n}{\Delta}\epsilon_{n+1} \right]^\top\mathcal D_{n\Delta}\\
		\rightarrow&~ -\frac{1}{a_t} (I^B_t)^2+\mathcal D^\top_t A_t\mathcal D_t+(Z^B_t)^\top\mathcal D_t. 
	\end{split}
\end{equation*}
Thus, the limit $C$ is conjectured to satisfy the following BSDE
\begin{equation}
	\left\{\begin{split}
		-dC_t=&~\left\{-\frac{1}{a_t} (I^B_t)^2+\mathcal D^\top_t A_t\mathcal D_t+(Z^B_t)^\top\mathcal D_t\right\}\,dt-Z^C_t\,dW_t\\
		C_T=&~0.
	\end{split}\right.
\end{equation}

\subsection{The initial jump}

The initial action of the investor is conjectured to be a jump strategy following the heuristic limit:
\begin{equation*} 
	\begin{split}
		\widetilde\xi_n^*=&~-\left(2\mathcal R+2\mathcal B_{n\Delta}^\top\mathbb E_n[ A_{n+1} ]\mathcal B_{n\Delta} 		\right)^{-1} \Big\{     \left(  \mathcal L^\top+    2\mathcal B^\top_{n\Delta}\mathbb E_n[A_{n+1}]\mathcal A_{n\Delta}   \right) \mathcal X_{n\Delta-} 	 \\
		&~  \qquad\qquad \qquad \qquad \qquad\qquad \qquad\quad + 	 2\mathcal B_{n\Delta}^\top \mathbb E_n[A_{n+1}\epsilon_{n+1}]\mathcal D_{n\Delta} + \mathbb E_n[	B_{n+1}		]^\top\mathcal B_{n\Delta}     \Big\}\\
		=&~-\frac{I_n^A}{a_n}\mathcal X_{n\Delta-}-\frac{I_n^B}{a_n}\\
		\rightarrow&~ -\frac{I_t^A}{a_t}\mathcal X_{t-}-\frac{I^B_t}{a_t}.
	\end{split}
\end{equation*}

\bibliography{bib_CFX2023}
%\bibliography{bib_Guanxing_singular}

\end{document}